\DeclareFontShape{T1}{lmr}{b}{sc}{<->ssub*cmr/bx/sc}{}
\DeclareFontShape{T1}{lmr}{bx}{sc}{<->ssub*cmr/bx/sc}{}
\newtheorem{theorem}{Theorem}
\newtheorem{lemma}{Lemma}
\newtheorem{proposition}{Proposition}
\newtheorem{corollary}{Corollary}
\theoremstyle{definition}
\newtheorem{definition}{Definition}
\newtheorem{example}{Example}
\newcommand{\kibitz}[2]{\ifnum\Comments=1{\color{#1}{#2}}\fi}
\newcommand{\ceil}[1]{\lceil #1 \rceil}
\newcommand{\floor}[1]{\lfloor #1 \rfloor}
\renewcommand{\hat}{\widehat}
\newcommand{\E}{\mathop{\mathbb{E}}}
\newcommand{\set}[1]{\{#1\}}
\newcommand{\bbR}{\mathbb{R}}
\newcommand{\bbN}{\mathbb{N}}
\newcommand{\calL}{\mathcal{L}}
\newcommand{\calM}{\mathcal{M}}
\renewcommand{\bar}{\overline}
\renewcommand{\succeq}{\succcurlyeq}
\DeclareMathOperator*{\argmin}{arg\,min}
\DeclareMathOperator*{\argmax}{arg\,max}
\DeclareMathOperator{\dist}{distortion}
\DeclareMathOperator{\fair}{fairness}
\DeclareMathOperator{\SC}{SC}
\let\top\relax
\DeclareMathOperator{\top}{top}
\DeclareMathOperator{\plu}{plu}
\DeclareMathOperator{\uni}{uni}
\DeclareMathOperator{\veto}{veto}
\newcommand{\vsigma}{\sigma}
\newcommand{\elec}{\mathcal{E}}
\newcommand{\defvotese}[3]{D_{#1}^{#3}(#2)}
\newcommand{\restr}[2]{\ensuremath{\left.#1\right|_{#2}}}
\newcommand{\plumatching}{\textsc{PluralityMatching}\xspace}
\newcommand{\unimatching}{\textsc{UniformMatching}\xspace}
\newcommand{\randdict}{\textsc{RandomDictatorship}\xspace}
\newcommand{\smartdict}{\textsc{SmartDictatorship}\xspace}
\title{Resolving the Optimal Metric Distortion Conjecture}
\author{
	Vasilis Gkatzelis\\Drexel University\\\texttt{gkatz@drexel.edu}
	\and
	Daniel Halpern\\University of Toronto\\\texttt{daniel.halpern@mail.utoronto.ca}
	\and
	Nisarg Shah\\University of Toronto\\\texttt{nisarg@cs.toronto.edu}
}
\date{}
\begin{document}\sloppy\allowdisplaybreaks
\pagenumbering{Alph}

\maketitle

\begin{abstract}
We study the following metric distortion problem: there are two finite sets of points, $V$ and $C$, that lie in the same metric space, and our goal is to choose a point in $C$ whose total distance from the points in $V$ is as small as possible. However, rather than having access to the underlying distance metric, we only know, for each point in $V$, a ranking of its distances to the points in $C$. We propose algorithms that choose a point in $C$ using only these rankings as input and we provide bounds on their \emph{distortion} (worst-case approximation ratio). A prominent motivation for this problem comes from voting theory, where $V$ represents a set of voters, $C$ represents a set of candidates, and the rankings correspond to ordinal preferences of the voters.

A major conjecture in this framework is that the optimal deterministic algorithm has distortion $3$. We resolve this conjecture by providing a polynomial-time algorithm that achieves distortion $3$, matching a known lower bound. We do so by proving a novel lemma about matching voters to candidates, which we refer to as the \emph{ranking-matching lemma}. This lemma induces a family of novel algorithms, which may be of independent interest, and we show that a special algorithm in this family achieves distortion $3$. We also provide more refined, parameterized, bounds using the notion of $\alpha$-decisiveness, which quantifies the extent to which a voter may prefer her top choice relative to all others. Finally, we introduce a new randomized algorithm with improved distortion compared to known results, and also provide improved lower bounds on the distortion of all deterministic and randomized algorithms.
\end{abstract}

\thispagestyle{empty}
\clearpage

\pagenumbering{arabic}
\setcounter{page}{1}
\section{Introduction}\label{sec:intro}

In the metric distortion problem, we are given two finite (potentially non-disjoint) sets $V$ and $C$ that lie in the same metric space. If we had access to the distance metric $d$ over $V \cup C$, we could compute a point $c^* \in C$ that minimizes its total distance to the points in $V$, i.e., $c^* \in \argmin_{c \in C} \sum_{i \in V} d(i,c)$. However, rather than having access to the exact distances, we are given, for each point $i \in V$, a ranking $\sigma_i$ over the points in $C$ by their distance to $i$ (with ties broken arbitrarily). Given only this collection of rankings $\vsigma = (\sigma_i)_{i \in V}$, we need to design an algorithm (a social choice rule) that picks a point $\hat{c} \in C$ (possibly in a randomized fashion), and we are interested in its worst-case approximation ratio $\E[\sum_{i \in V} d(i,\hat{c})] / \sum_{i \in V} d(i,c^*)$, where the expectation is over randomization of the algorithm. This worst-case approximation ratio is known as the \emph{distortion} of the algorithm in this framework. Note that the need for approximation stems not from computational constraints, but from the lack of full information. Our main question is: \emph{To what extent can we minimize the total distance using only ordinal information regarding pairwise distances?}

An interesting motivating example for this question stems from voting theory, where $V$ is a set of voters, $C$ is a set of candidates, and the distance between voters and candidates corresponds to the distance of their ideologies with respect to different issues. This model, inspired by spatial models of voting from the political science literature, such as the Downsian proximity model~\cite{enelow1984spatial}, has recently become popular in the computer science literature~\cite{anshelevich2018approximating,skowron2017social,munagala2019improved}. An alternative, possibly more concrete, interpretation of these distances can also be derived through the well-studied problem of facility location~\cite{charikar1999guha,JMMSV03,cohen2019polynomial}. In this case $V$ corresponds to a set of customers and $C$ is the set of feasible locations for a new facility to be opened; the distance measure between customers and locations could correspond to the actual physical distance between them, or the time or cost required for a customer to reach each location. In both cases, the most common way to elicit the preferences of the voters, or customers, is in the form of a ranking of the available options (e.g., see~\cite[Part I]{BCEL+16} or~\cite{marianov2008facility}), and the goal is to pick an outcome that minimizes the total distance.

Centuries of research on voting theory has produced a plethora of voting rules that choose an outcome using the ranked preferences of voters as input. In recent work, starting with \citet{anshelevich2015approximating}, researchers have analyzed the performance of these voting rules using the metric distortion framework described above, which enables a quantitative comparison among them.
\citet{anshelevich2015approximating} showed that no deterministic algorithm
can achieve a distortion better than $3$, while a popular voting rule known as Copeland's rule comes close, by achieving distortion $5$. Informally, Copeland's rule measures the strength of each candidate $c$ as the number of candidates $c'$ defeated by $c$ in a pairwise election (i.e., the candidates $c'$ that are ranked below $c$ by the majority of voters), and picks a candidate with the greatest strength. They also analyzed the distortion of other popular voting rules such as plurality, Borda count, $k$-approval, and veto, and showed that their distortion increases linearly with the number of voters or candidates. 
Subsequently, \citet{skowron2017social} analyzed another popular voting rule, known as single transferable vote (STV), and showed that its distortion grows logarithmically in the number of candidates. 

The first deterministic algorithm to break the bound of $5$ was very recently proposed by \citet{munagala2019improved}, who designed a novel deterministic algorithm that achieves a distortion of $2+\sqrt{5} \approx 4.236$. Their algorithm does not even require access to voters' full ordinal preferences. It takes as input only the weighted tournament graph, which contains, for every pair of candidates, the number of voters that prefer one over another. They proved that $2+\sqrt{5}$ is the best distortion of any deterministic algorithm with this limited input, and they conjectured that with full ordinal preferences, the best possible distortion should be $3$, outlining a sufficient condition for achieving distortion $3$. However, they were unable to prove that there always exists a candidate satisfying their sufficient condition. 
Independently, \citet{kempe2020duality} derived the same bound of $2+\sqrt{5}$ using a linear programming duality framework, and provided several alternative formulations of the sufficient condition outlined by \citet{munagala2019improved}. He used the linear programming duality framework to show that the distortion of two social choice rules, the ranked pairs and Schulze's rule, grows as the square root of the number of candidates, thus disproving a conjecture by \citet{anshelevich2015approximating} that ranked pairs might have distortion $3$ (this conjecture was also earlier disproved by \citet{goel2017metric}). 
The main result of our paper is a culmination of this line of work in the form of a novel deterministic algorithm that guarantees the optimal distortion of 3. 

\subsection{Our Results \& Techniques}\label{sec:contrib}
The technical core of our paper is in Section~\ref{sec:rm-lemma}, where we prove the \emph{Ranking-Matching Lemma}, an elementary lemma regarding the existence of fractional perfect matchings within a family of vertex-weighted bipartite graphs induced by an instance of our problem. Specifically, given a problem instance and some candidate $a$,
we get a bipartite graph whose vertices correspond to the set of voters on one side, and the set of candidates on the other. 
The vertex of each voter $i$ has some non-negative weight $p(i)$ and the vertex of each candidate $c$ has some non-negative weight $q(c)$ such that the total weight of all voters $\sum_{i \in V} p(i)$ and the total weight of all candidates $\sum_{c \in C} q(c)$ are both normalized to $1$. 
An edge $(i,c)$ between the vertices of some voter $i$ and some candidate $c$ (potentially equal to $a$) exists if and only if candidate $a$ is (weakly) closer voter $i$ than candidate $c$, indicating that voter $i$ would (weakly) prefer candidate $a$ to $c$. Intuitively, the more edges this graph has, the more appealing candidate $a$ appears, since more voters would prefer $a$ over other candidates. For example, if the graph is complete, then $a$ must be the top choice of all voters. Following this intuition, we propose a notion of fractional perfect matching in this graph so that, if such a matching exists for the graph of candidate $a$, then $a$ is an ``appealing'' candidate. Specifically, this matching suggests that for every other candidate $c$, there are sufficiently many \emph{distinct} voters who prefer $a$ to $c$. 
The Ranking-Matching Lemma shows that for \emph{any} problem instance and \emph{any} weight vectors $p$ and $q$, there always exists some candidate $a$ whose graph admits a fractional perfect matching, i.e., there always exists at least one appealing candidate. 

Armed with this non-trivial result, we then leverage it in order to achieve a good distortion bound. A natural social choice rule would be to select one of the candidates that satisfy the aforementioned matching criterion. Note that the Ranking-Matching Lemma leaves open the choice of weights of voters and candidates (given by $p$ and $q$). These weights can be used to quantify the relative significance of voters and candidates, and each choice yields a different algorithm. As we discuss in \Cref{sec:disc}, some of these algorithms could be of independent interest, but in this paper we focus on the case where the weights of all voters are the same, and those of the candidates are proportional to their plurality score (i.e., the number of voters that rank them first). We observe that for this set of weights, the Ranking-Matching Lemma also implies the existence of an \emph{integral} perfect matching in a different bipartite graph whose vertices on both sides correspond to voters. For a given candidate $a$, we refer to this graph as the integral domination graph of $a$ and denote it by $G(a)$, and an edge $(i,j)$ between the vertices of voters $i$ and $j$ exists if voter $i$ weakly prefers $a$ to the top choice candidate of voter $j$. 

We define a novel voting rule, \plumatching, which returns a candidate $a$ whose integral domination graph $G(a)$ has a perfect matching, and the Ranking-Matching Lemma implies that such a candidate always exists. We then prove our main result in Section~\ref{sec:det-dist}, which is that \plumatching achieves the optimal distortion bound of $3$. 

Our approach is inspired by the recent work of \citet{munagala2019improved}, who used integral perfect matchings in bipartite graphs to define the \emph{matching uncovered set}. Specifically, given a \emph{pair} of candidates $a$ and $b$, they define a bipartite graph $G(a,b)$, in which both sides of vertices correspond to the voters, and edge $(i,j)$ between voters $i$ and $j$ exists if there is some candidate $c$ for which voter $i$ weakly prefers $a$ to $c$ and voter $j$ weakly prefers $c$ to $b$. The matching uncovered set is the set of candidates $a$ such that for every candidate $b$ the $G(a,b)$ graph has a perfect matching. They proved that every candidate in this set has distortion at most $3$, but were unable to answer whether this set is always non-empty. Our integral domination graph $G(a)$ of candidate $a$ is a subgraph of $G(a,b)$ for every candidate $b$. Thus, a candidate $a$ for which $G(a)$ has a perfect matching is also in the matching uncovered set, and the existence of such a candidate is guaranteed by our Ranking-Matching Lemma, implying that the matching uncovered set is always non-empty. 

While the non-emptiness of the matching uncovered set, combined with the argument of \citet{munagala2019improved}, implies a distortion bound of $3$, we provide a direct proof of this bound that is much shorter and simpler because it uses the existence of a perfect matching in the much sparser
integral domination graph (thus using the full strength of the Ranking-Matching Lemma). More significantly, as we show in Section~\ref{subsec:connection-to-the-mus} this proof also allows us to achieve improved distortion bounds for interesting special cases which the matching uncovered set does not meet.

Having settled the question of distortion for the class of deterministic algorithms, an exciting open problem is to also achieve optimal distortion bounds for \emph{randomized} algorithms. Prior to this work, the best known distortion bound achieved by a randomized algorithm was $3-2/m$~\cite{kempe2020communication}, which is better than what any deterministic algorithm can achieve. In Section~\ref{sec:rand-dist}, we take a step toward a better understanding of the power and limitations of randomization in this context. Specifically, we propose a new randomized algorithm, which we term \smartdict, that only depends on the plurality scores of the candidates. We show that it matches the distortion bound of $3-2/m$ for general metric spaces, but has improved distortion in special cases. Further, we show that among randomized algorithms that depend only on the plurality scores, \smartdict has optimal distortion, even in the aforementioned special cases. We also provide an improved lower bound that applies to randomized algorithms that use full preference rankings.

As the number of candidates grows, no known distortion bound is better than 3, and we know that no randomized algorithm can achieve better than 2; the problem of identifying the optimal bound in $[2,3]$ for $m \to \infty$, and the optimal bound for a fixed value of $m$ in general, remains open.

Throughout the paper, our distortion bounds are presented as a function of the number of candidates in the instance, but we also parameterize our bounds using the notion of $\alpha$-decisiveness due to \citet{anshelevich2017randomized}. Given a parameter value $\alpha\in [0,1]$, an instance is $\alpha$-decisive if every voter's distance from her top candidate is at most $\alpha$ times her distance from any other candidate. For a motivating example, consider the peer selection problem, where the voters are choosing a candidate among themselves as a representative; in this case, every voter's distance from herself is zero, so this is a special case of $\alpha=0$. Among other results, we are the first to establish non-trivial (i.e.\ greater than $1$) lower bounds on distortion for this special case for deterministic as well as randomized algorithms.

The paper is organized as follows. \Cref{sec:model} presents the model, \Cref{sec:rm-lemma} presents the Ranking-Matching Lemma, and \Cref{sec:det-dist,sec:rand-dist} study the distortion of deterministic and randomized algorithms, respectively. \Cref{sec:related} provides an overview of related work not covered above, \Cref{sec:disc} contains a discussion of our results and open directions, and the appendix contains deferred proofs and additional observations. 
\section{Model}\label{sec:model}
Let $V = \set{1,\ldots,n}$ be a set of $n$ \emph{voters}, and $C$ be a set of $m$ \emph{candidates}. Throughout the paper, we use letters $i,j,k$ to index voters, and letters $a,b,c$ to index candidates. Also, we use $\Delta(V)$ and $\Delta(C)$ to denote the set of probability distributions over $V$ and $C$, respectively; for simplicity we also use this notation for vectors of nonnegative weights that add up to 1, even when these weights do not correspond to probabilities. 

The voters and candidates are located in a common metric space $(\calM,d)$, where $d : \calM \times \calM \to \bbR_{\ge 0}$ is the metric (distance function) satisfying the following conditions: (1) Positive definiteness: $\forall x,y \in \calM$, $d(x,y) \ge 0$, and $d(x,y) = 0$ if and only if $x=y$; (2) Symmetry: $\forall x,y \in \calM$, $d(x,y) = d(y,x)$; and (3) Triangle inequality: $\forall x,y,z \in \calM$, $d(x,y) + d(y,z) \ge d(x,z)$

For simplicity of notation, we extend $d$ to operate directly on the voters and the candidates instead of the points of the metric space where they are located. Thus, we will write $d(x,y)$, for $x,y \in V \cup C$, to mean the distance between the points where voters/candidates $x$ and $y$ are located. Given a metric, $d$, the \emph{social cost} of a candidate $c$ is defined as $\SC(c,d) = \sum_{i \in V} d(i,c)$; we simply write $\SC(c)$ when the metric $d$ is clear from the context.

\paragraph{Preference Profiles.}
Given sets $V, C$ and a metric $d$, for each voter $i\in V$ we get a linear order, or \emph{preference ranking}, $\sigma_i$ over the candidates in $C$, where $i$ ranks candidates in non-decreasing order of their distance from her. We say that candidate $c$ \emph{defeats} candidate $c'$ in vote $i$, denoted $c \succ_i c'$, if voter $i$ ranks $c$ above $c'$; we say that $c$ \emph{weakly defeats} $c'$, denoted $c \succeq_i c'$, if $c = c'$ or $c \succ_i c'$. We say that $d$ is consistent with the preference ranking $\sigma_i$ of voter $i$, denoted $d \triangleright \sigma_i$, if $d(i,c) \le d(i,c')$ for all $c,c'$ such that $c \succ_i c'$. Note that candidates that are equidistant to a voter can be ranked arbitrarily by the voter. We refer to a profile of preference rankings, $\vsigma = (\sigma_i)_{i \in V}$, as the \emph{preference profile}, and we say that $d$ is consistent with the preference profile $\vsigma$, denoted $d \triangleright \vsigma$, if $d \triangleright \sigma_i$ for each $i \in V$. 

For a given $\vsigma$, let $\top(i) \in C$ denote the candidate ranked highest in $\sigma_i$, and let the plurality score of candidate $a$, i.e., the number of voters who rank $a$ as their top choice, be $\plu(a) = |\set{i \in V : \top(i) = a}|$. With slight abuse of notation, we also write $\plu(D) = \sum_{a \in D} \plu(a)$ for all $D \subseteq C$. 

\paragraph{Social Choice Rules and Distortion.}
Given an instance $(V,C,d)$, our main goal in this paper is to output a candidate $c\in C$ that minimizes the social cost $\SC(c,d)$. However, instead of having full access to the metric $d$, we have access to a tuple $\elec = (V,C,\vsigma)$, which we refer to as an \emph{election}, and all we know is that $d$ is consistent with $\vsigma$.

Let $\calL(C)$ denote the set of linear orders (rankings) over $C$.
A \emph{social choice rule} (or simply a rule) is a function $f: \cup_{n \ge 1} \calL(C)^n \to \Delta(C)$. We say that the rule is \emph{deterministic} if it always returns a distribution with singleton support. Note that we seek rules that operate on preference profiles consisting of any number of votes, but over a fixed set of candidates. 
The \emph{distortion} of a social choice rule $f$, denoted $\dist(f)$, is the worst-case approximation ratio it provides to social cost:
\[
\dist(f) = \sup_{\vsigma}\ \sup_{d\: :\: d\: \triangleright\: \vsigma}\ \frac{\E[\SC(f(\vsigma),d)]}{\min_{c \in C} \SC(c,d)},
\]
where the expectation is over possible randomization in the social choice rule. Note that there is no restriction on the number of voters in the preference profile, so we are interested in the distortion as a function of  the number of candidates. This is because in real-world applications of voting, it is reasonable to expect the number of candidates, $m$, to be small --- finer distortion bounds as a function of $m$ are therefore important --- but the number of votes may be large.

\paragraph{Decisive Voters.}
We further refine the distortion bounds using the $\alpha$-decisiveness framework of \citet{anshelevich2017randomized}. For $\alpha \in [0,1]$, voter $i$ is called $\alpha$-decisive if $d(i,\top(i)) \le \alpha \cdot d(i,c)$, $\forall c \neq \top(i)$. In words, the distance of voter $i$ from her top choice is at most $\alpha$ times her distance from any other candidate. We say that the metric space is $\alpha$-decisive if every voter is $\alpha$-decisive. For a given value of $\alpha$, when we analyze the distortion of social choice rules for $\alpha$-decisive metric spaces, we take the worst case only over metric spaces that are $\alpha$-decisive and for which $d \triangleright \vsigma$.  

 Note that for $\alpha = 0$ the distance of every voter from her top-choice candidate is zero. This captures settings where each voter is a candidate herself, and ranks herself at the top. For instance, the \emph{peer selection} setting, in which $n$ agents submit ranked votes over themselves and each agent ranks herself first, is a special case of $\alpha=0$; thus, the upper bounds (and some lower bounds) presented in our paper for $\alpha = 0$ apply to this interesting special case as well. When $\alpha=1$, we retrieve the general, unrestricted, metric space.

\section{Ranking-Matching Lemma}\label{sec:rm-lemma}

We begin by proving our key contribution, which is an elementary lemma about matching preference rankings, i.e., voters, to candidates. We then use this lemma to define a novel deterministic social choice rule, and in the next section show that this rule has distortion at most $3$, thus resolving the open conjecture. First, we need to introduce some terminology.

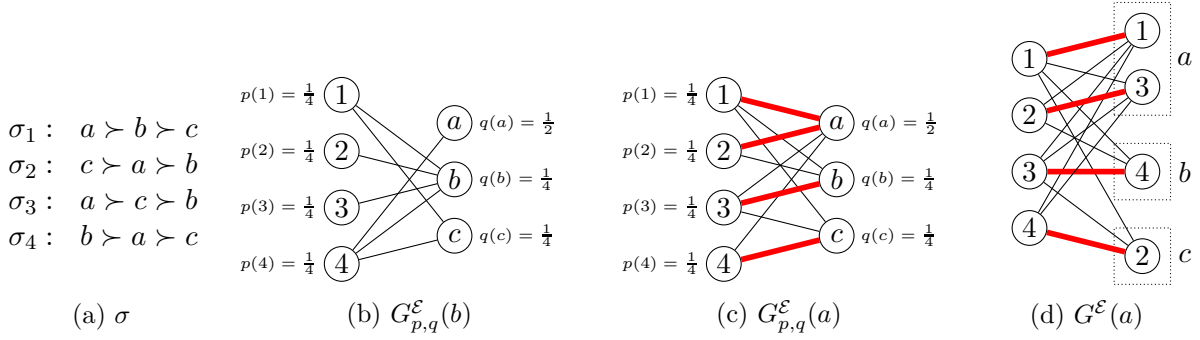
\begin{figure}
	\begin{subfigure}[b]{0.19\textwidth}
		\centering
		\[\begin{array}{ll}
			  \sigma_1 :& a \succ b \succ c\\
			  \sigma_2 :& c \succ a \succ b\\
			  \sigma_3 :& a \succ c \succ b\\
			  \sigma_4 :& b \succ a \succ c
		\end{array}\]
		\caption{$\vsigma$}
		\label{fig:rankings}
	\end{subfigure}
	\begin{subfigure}[b]{0.29\textwidth}
		\begin{tikzpicture}[scale=.75]
			\centering
			\tikzstyle{main_node} = [circle,fill=white,draw,minimum size=1.2em,inner sep=0pt]

			\node[main_node] (1) at (0, 0) {$1$};
			\node[main_node] (2) at (0, -1) {$2$};
			\node[main_node] (3) at (0, -2) {$3$};
			\node[main_node] (4) at (0, -3) {$4$};
			\node[main_node] (a) at (2, -.5) {$a$};
			\node[main_node] (b) at (2, -1.5) {$b$};
			\node[main_node] (c) at (2, -2.5) {$c$};

			\draw (1) node [left=.5em] {\tiny $p(1) = \frac{1}{4}$};
			\draw (2) node [left=.5em] {\tiny $p(2) = \frac{1}{4}$};
			\draw (3) node [left=.5em] {\tiny $p(3) = \frac{1}{4}$};
			\draw (4) node [left=.5em] {\tiny $p(4) = \frac{1}{4}$};
			
			\draw (a) node [right=.5em] {\tiny $q(a) = \frac{1}{2}$};
			\draw (b) node [right=.5em] {\tiny $q(b) = \frac{1}{4}$};
			\draw (c) node [right=.5em] {\tiny $q(c) = \frac{1}{4}$};

			\draw (1)--(b);
			\draw (1)--(c);
			\draw (2)--(b);
			\draw (3)--(b);
			\draw (4)--(a);
			\draw (4)--(b);
			\draw (4)--(c);
		\end{tikzpicture}
		\caption{$G^\elec_{p, q}(b)$}
		\label{fig:b-domination}
	\end{subfigure}
	\begin{subfigure}[b]{0.29\textwidth}
		\centering
		\begin{tikzpicture}[scale=.75]
			\tikzstyle{main_node} = [circle,fill=white,draw,minimum size=1.2em,inner sep=0pt]

			\node[main_node] (1) at (0, 0) {$1$};
			\node[main_node] (2) at (0, -1) {$2$};
			\node[main_node] (3) at (0, -2) {$3$};
			\node[main_node] (4) at (0, -3) {$4$};
			\node[main_node] (a) at (2, -.5) {$a$};
			\node[main_node] (b) at (2, -1.5) {$b$};
			\node[main_node] (c) at (2, -2.5) {$c$};

			\draw (1) node [left=.5em] {\tiny $p(1) = \frac{1}{4}$};
			\draw (2) node [left=.5em] {\tiny $p(2) = \frac{1}{4}$};
			\draw (3) node [left=.5em] {\tiny $p(3) = \frac{1}{4}$};
			\draw (4) node [left=.5em] {\tiny $p(4) = \frac{1}{4}$};

			\draw (a) node [right=.5em] {\tiny $q(a) = \frac{1}{2}$};
			\draw (b) node [right=.5em] {\tiny $q(b) = \frac{1}{4}$};
			\draw (c) node [right=.5em] {\tiny $q(c) = \frac{1}{4}$};

			\draw (1)--(b);
			\draw (1)--(c);

			\draw (2)--(b);
			\draw (3)--(a);

			\draw (4)--(a);
			\draw (3)--(c);
			\draw (1)--(a) [draw=red, line width=.75mm];
			\draw (2)--(a) [draw=red, line width=.75mm];
			\draw (3)--(b) [draw=red, line width=.75mm];
			\draw (4)--(c) [draw=red, line width=.75mm];
		\end{tikzpicture}
		\caption{$G^\elec_{p, q}(a)$}
		\label{fig:a-domination}
	\end{subfigure}
	\begin{subfigure}[b]{0.19\textwidth}
		\centering
		\begin{tikzpicture}[scale=.75]
			\tikzstyle{main_node} = [circle,fill=white,draw,minimum size=1.2em,inner sep=0pt]

			\node[] () at (-.8, 0) {};

			\node[main_node] (1l) at (0, 0) {$1$};
			\node[main_node] (2l) at (0, -1) {$2$};
			\node[main_node] (3l) at (0, -2) {$3$};
			\node[main_node] (4l) at (0, -3) {$4$};

			\node[main_node] (1r) at (2, .5) {$1$};
			\node[main_node] (3r) at (2, -.5) {$3$};

			\node[main_node] (4r) at (2, -2) {$4$};

			\node[main_node] (2r) at (2, -3.5) {$2$};

			\draw[densely dotted] (1.5, -1)--(2.5,-1)--(2.5,1)--(1.5,1)--cycle;
			\node at (2.75, 0) {$a$};

			\draw[densely dotted] (1.5, -2.5)--(2.5,-2.5)--(2.5,-1.5)--(1.5,-1.5)--cycle;
			\node at (2.75, -2) {$b$};

			\draw[densely dotted] (1.5, -4)--(2.5,-4)--(2.5,-3)--(1.5,-3)--cycle;
			\node at (2.75, -3.5) {$c$};

			\draw (1l)--(2r);
			\draw (1l)--(3r);
			\draw (1l)--(4r);
			\draw (2l)--(1r);

			\draw (2l)--(4r);
			\draw (3l)--(1r);
			\draw (3l)--(2r);
			\draw (3l)--(3r);

			\draw (4l)--(1r);
			\draw (1l)--(1r) [draw=red, line width=.75mm];
			\draw (2l)--(3r) [draw=red, line width=.75mm];
			\draw (3l)--(4r) [draw=red, line width=.75mm];
			\draw (4l)--(2r) [draw=red, line width=.75mm];
			\draw (4l)--(3r);
		\end{tikzpicture}
		\caption{$G^\elec(a)$}
		\label{fig:a-integral}
	\end{subfigure}
	\caption{Domination graph examples}
	\label{fig:matching-example}
\end{figure}

\begin{definition}
	Given an election $\elec = (V,C,\vsigma)$, and (normalized) weight vectors $p \in \Delta(V)$ and $q \in \Delta(C)$, the $(p,q)$-\emph{domination graph of candidate $a$} is the vertex-weighted bipartite graph $G^{\elec}_{p,q}(a) = (V,C,E_a,p,q)$, where $(i,c) \in E_a$ if and only if $a \succeq_i c$. Vertex $i \in V$ has weight $p(i)$ and vertex $c \in C$ has weight $q(c)$. We will drop $\elec$ from notation when it is clear from the context.
\end{definition}

For example, \Cref{fig:rankings} shows the preference rankings of four voters ($1, 2, 3, 4$) over three candidates ($a, b, c$). \Cref{fig:b-domination} and \Cref{fig:a-domination} show the $(p,q)$-domination graphs of candidates $b$ and $a$, respectively, with $p(1)=p(2)=p(3)=p(4)=\frac{1}{4}$ and $q(a)=\frac{1}{2}$, $q(b)=q(c)=\frac{1}{4}$.

\begin{definition}
We say that the $(p,q)$-domination graph of candidate $a$, i.e., $G^{\elec}_{p,q}(a)$, \emph{admits a fractional perfect matching} if there exists a weight function $w : E_a \to \bbR_{\ge 0}$ such that the total weight of edges incident on each vertex equals the weight of the vertex, i.e.,
for each $i \in V$ we have $\sum_{c\in C: (i,c) \in E_a} w((i,c)) = p(i)$, and for each $c \in C$ we have $\sum_{i\in V: (i,c) \in E_a} w((i,c)) = q(c)$.
\end{definition}

With slight abuse of notation, let us denote $p(S) = \sum_{i \in S} p(i)$ for all $S \subseteq V$ and $q(D) = \sum_{a \in D} q(a)$ for all $D \subseteq C$. Given an election $\elec = (V,C,\vsigma)$, a candidate $a \in C$, and a subset of voters $S \subseteq V$, define $\defvotese{a}{S}{\elec} = \set{c \in C : a \succeq_i c, \exists i \in S}$ to be the set of candidates that $a$ weakly defeats in at least one vote in $S$. Note that this is precisely the set of neighbors of $S$ in $G^{\elec}_{p,q}(a)$ (for any $p$ and $q$). A simple generalization of Hall's condition shows that the existence of a fractional perfect matching in $G^{\elec}_{p,q}(a)$ is equivalent to the set of neighbors of $S$ having at least as much weight as $S$ itself, for all $S \subseteq V$. For the sake of completeness, we provide a short proof in Appendix~\ref{sec:proof-halls}. 

\begin{lemma}\label{lem:general-halls-condition}
	For any election $\elec = (V,C,\vsigma)$, weight vectors $p \in \Delta(V)$ and $q \in \Delta(C)$, and candidate $a \in C$, $G^{\elec}_{p,q}(a)$ admits a fractional perfect matching if and only if $q(\defvotese{a}{S}{\elec}) \ge p(S)$ for all $S \subseteq V$, and whether $G^{\elec}_{p,q}(a)$ admits a fractional perfect matching can be checked in strongly polynomial time. 
\end{lemma}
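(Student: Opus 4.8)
The plan is to prove both directions of the equivalence, and then argue polynomial-time checkability.

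\medskip

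\noindent\textbf{Setup as a flow/matching problem.}
First I would recast the existence of a fractional perfect matching in $G^{\elec}_{p,q}(a)$ as a feasibility question for a bipartite flow network. Create a source $s$ connected to each voter vertex $i$ with capacity $p(i)$, a sink $t$ connected from each candidate vertex $c$ with capacity $q(c)$, and for each edge $(i,c) \in E_a$ an arc of capacity $+\infty$ (or any large enough bound). A fractional perfect matching $w$ is exactly a feasible flow that saturates every source arc and every sink arc, i.e.\ an $s$--$t$ flow of total value $\sum_i p(i) = 1 = \sum_c q(c)$. Since the total source capacity equals the total sink capacity, such a saturating flow exists if and only if the maximum $s$--$t$ flow has value $1$.

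\medskip

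\noindent\textbf{The two directions.}
The forward direction (matching $\Rightarrow$ Hall-type inequality) is the easy one: if $w$ is a fractional perfect matching, then for any $S \subseteq V$, all edge-weight leaving $S$ must land on its neighborhood $\defvotese{a}{S}{\elec}$, so
\[
p(S) = \sum_{i \in S} \sum_{c : (i,c) \in E_a} w((i,c)) \le \sum_{c \in \defvotese{a}{S}{\elec}} q(c) = q(\defvotese{a}{S}{\elec}),
\]
where the inequality holds because the left sum only places weight on edges into $\defvotese{a}{S}{\elec}$ and the total weight into each such $c$ is at most $q(c)$. For the converse, I would invoke the max-flow/min-cut theorem on the network above: a saturating flow fails to exist iff there is an $s$--$t$ cut of capacity strictly less than $1$. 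I would then analyze the structure of a minimum cut. A finite-capacity cut cannot cut any of the infinite-capacity middle arcs, so it must consist of some subset of source arcs (say for voters $V \setminus S$) and some subset of sink arcs (say for candidates $D$), where to avoid cutting a middle arc we need every neighbor of $S$ to lie in $D$, i.e.\ $\defvotese{a}{S}{\elec} \subseteq D$. The cut capacity is then $p(V \setminus S) + q(D) \ge p(V \setminus S) + q(\defvotese{a}{S}{\elec})$. If the Hall condition $q(\defvotese{a}{S}{\elec}) \ge p(S)$ holds for all $S$, every such cut has capacity at least $p(V \setminus S) + p(S) = 1$, so the min cut is $\ge 1$, forcing a saturating flow and hence a fractional perfect matching.

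\medskip

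\noindent\textbf{Main obstacle and polynomial-time claim.}
The step requiring the most care is the converse: correctly characterizing the minimum cut so that the neighborhood set $\defvotese{a}{S}{\elec}$ appears, and confirming that the choice $D = \defvotese{a}{S}{\elec}$ is optimal (one never gains by including extra candidate arcs). This is the crux where the combinatorial structure of $E_a$ enters. For the computational claim, the equivalence already reduces the problem to checking whether a maximum $s$--$t$ flow attains value $1$ in a network with $O(n + m)$ vertices and $O(nm)$ arcs and rational capacities; since maximum flow is solvable in strongly polynomial time (e.g.\ via preflow-push or Orlin's algorithm), checking whether $G^{\elec}_{p,q}(a)$ admits a fractional perfect matching is strongly polynomial, completing the proof.
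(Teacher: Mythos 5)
Your proposal is correct and follows essentially the same route as the paper's own proof in Appendix~\ref{sec:proof-halls}: both reduce the fractional perfect matching to a value-$1$ flow in the source/sink network with capacities $p(i)$ and $q(c)$ and infinite middle arcs, apply max-flow/min-cut with the same analysis of finite-capacity cuts (no infinite arc cut, hence $\defvotese{a}{S}{\elec}$ contained in the sink side), and invoke a strongly polynomial max-flow algorithm for the computational claim. The only cosmetic difference is that you prove the easy direction directly from the matching weights rather than via the specific cut $(\set{s}\cup S\cup \defvotese{a}{S}{\elec},\,\cdot\,)$, which is an equivalent argument.
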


Using this lemma for the instance in Figure~\ref{fig:matching-example}, we can verify that $G^{\elec}_{p,q}(b)$ does not admit a fractional perfect matching by letting $S=\{2,3\}$, for which $p(S)=\frac{1}{2}$, and observing that $\defvotese{b}{S}{\elec}=\{b\}$ with $q(b)=\frac{1}{4}$. On the other hand, $G^{\elec}_{p,q}(a)$ does admit a fractional perfect matching, indicated using thick red edges in \Cref{fig:a-domination}.

Note that if, in a given election, there exists some candidate $a$ such that $G^{\elec}_{p,q}(a)$ admits a fractional perfect matching, then by \Cref{lem:general-halls-condition}, in \emph{any} subset of votes $S$, $a$ weakly defeats a set of candidates with total weight at least that of $S$. This intuitively makes candidate $a$ a compelling choice. In fact, in \Cref{sec:det-dist} we show that for appropriate choices of $p$ and $q$, this also implies low distortion. However, it is far from obvious that such a candidate $a$ is guaranteed to exist. Our key lemma, presented next, shows that there will always exist such a candidate in \emph{any} election $\elec$ and for \emph{any} weight vectors $p$ and $q$. 

While we only need to prove the lemma for the choices of $p$ and $q$ that we use in \Cref{sec:det-dist} to settle the conjecture, we prove it for all possible choices of $p$ and $q$ for two reasons. First, our proof is inductive, and the proof for given choices of $p$ and $q$ in a given election requires the lemma to hold for \emph{all} choices of $p$ and $q$ in smaller elections. Second, as we discuss in \Cref{sec:integral-matching,sec:disc}, other choices of $p$ and $q$ also lead to interesting social choice rules, which may be of independent interest.

\begin{lemma}[\textsc{Ranking-Matching Lemma}]\label{lem:rm-lemma}
	For any election $\elec$, and weight vectors $p \in \Delta(V)$ and $q \in \Delta(C)$, there exists a candidate $a$ whose $(p,q)$-domination graph $G^{\elec}_{p,q}(a)$ admits a fractional perfect matching, and such a candidate can be computed in strongly polynomial time.
\end{lemma}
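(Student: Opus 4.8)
The plan is to prove the Ranking-Matching Lemma by strong induction on the number of candidates $m = |C|$, using the Hall-type characterization from \Cref{lem:general-halls-condition}. Thus it suffices to exhibit a candidate $a$ such that $q(\defvotese{a}{S}{\elec}) \ge p(S)$ for all $S \subseteq V$. The base case $m=1$ is trivial, since the single candidate weakly defeats itself in every vote, so $\defvotese{a}{S}{\elec} = C$ and $q(\defvotese{a}{S}{\elec}) = 1 \ge p(S)$ for all $S$. For the inductive step, the natural idea is to pick a candidate $a$ to ``try,'' and if its domination graph fails to admit a fractional matching, then by \Cref{lem:general-halls-condition} there is a \emph{violating set} $S \subseteq V$ with $q(\defvotese{a}{S}{\elec}) < p(S)$. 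I would then want to eliminate some candidate from the instance, redistribute its weight $q$, and also redistribute the voter weights $p$ appropriately, so as to obtain a strictly smaller election to which the inductive hypothesis applies; crucially, a matching in the smaller instance must then be ``lifted'' back to a matching in the original.

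The key structural observation I would try to exploit is the relationship between the set of weakly-defeated candidates and the voters who prefer them. A promising candidate to start with is one of low plurality significance, or more precisely a candidate $b$ that is ``dominated'': the goal is to find a candidate $b$ whose removal does not destroy the existence of an appealing candidate. One clean approach is to choose the candidate $a$ greedily (for instance, a candidate maximizing $q(\defvotese{a}{V}{\elec})$, or the Condorcet-like winner under the weighted comparisons induced by $p$), and argue that if Hall's condition fails for $a$ with violating set $S$, then within the sub-election restricted to the candidates in $\defvotese{a}{S}{\elec}$ and the voters in $S$, the inductive hypothesis yields a candidate $a'$ that is at least as appealing as $a$ but dominates more weight, contradicting the extremal choice of $a$. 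This kind of ``potential function strictly increases'' argument would avoid an infinite descent and close the induction.

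The main obstacle I expect is making the reduction to a smaller instance \emph{weight-preserving} in both coordinates simultaneously. When we delete a candidate to shrink $C$, we must rescale $q$ on the remaining candidates back to a distribution, and the edge structure $a \succeq_i c$ of the restricted rankings must be compatible with the lifted matching; symmetrically, if we instead contract the voter side by passing to the violating set $S$, we must renormalize $p$ restricted to $S$ and ensure the neighbor-weight inequality transfers correctly under rescaling. The delicate point is that the inductive hypothesis only gives matchings in the \emph{renormalized} sub-instance, so I would need to verify that multiplying the returned edge weights by the appropriate normalization constant, together with a direct matching on the deleted portion, reassembles into a valid fractional perfect matching for the full graph $G^{\elec}_{p,q}(a)$ on the chosen candidate. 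Getting these two renormalizations to interlock is the technical heart of the argument.

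For the algorithmic claim, the strongly-polynomial computability follows by turning the induction into an explicit procedure: at each stage, use the strongly-polynomial check from \Cref{lem:general-halls-condition} to test a candidate and, upon failure, recover a violating set $S$ (e.g.\ via the max-flow/min-cut certificate underlying the fractional-matching feasibility test), then recurse on the strictly smaller instance dictated by the inductive step. Since each recursion removes at least one candidate, the depth is $O(m)$, and each level performs only polynomially many strongly-polynomial matching feasibility checks, yielding an overall strongly-polynomial algorithm that outputs both the winning candidate $a$ and its witnessing fractional perfect matching.
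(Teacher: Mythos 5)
Your high-level plan --- induct, use the Hall-type characterization from \Cref{lem:general-halls-condition}, and recurse on a renormalized sub-election --- matches the shape of the paper's argument (which takes a minimal counterexample in the number of \emph{voters} rather than inducting on $m$, though both sides shrink in its recursion). However, the step you flag as ``the technical heart'' is exactly where your plan breaks, and the fix is not bookkeeping. If candidate $a$ fails with violating set $S$, you propose to recurse on the sub-election with voters $S$ and candidates $\defvotese{a}{S}{\elec}$. After renormalizing, the inductive hypothesis gives a candidate $a'$ with $q(\defvotese{a'}{T}{\elec} \cap \defvotese{a}{S}{\elec}) \ge p(T) \cdot q(\defvotese{a}{S}{\elec})/p(S)$ for all $T \subseteq S$; but $S$ being a violating set means precisely that $q(\defvotese{a}{S}{\elec})/p(S) < 1$, so the lifted inequality is \emph{weaker} than the Hall condition you need, and nothing is gained. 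The paper recurses on the \emph{complement}: voters $\bar{X} = V \setminus X$ and candidates $\bar{D} = C \setminus \defvotese{a}{X}{\elec}$ for a carefully chosen $a$ and minimal violating set $X$, where the violation $q(\defvotese{a}{X}{\elec}) < p(X)$ flips to $q(\bar{D}) > p(\bar{X})$, so the renormalization factor exceeds $1$ and the lifted bound is strong enough.

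Even with the right sub-instance, the recursion alone does not close the argument: the inductive hypothesis only controls the new candidate $b$'s behavior on voters in $\bar{X}$, and one must separately account for $b$'s violating set intersected with $X$. This is where the content your sketch omits lives: the paper chooses $a$ by a two-level extremal rule (maximize the weight $p(X_a)$ of a \emph{minimal} violating set, then take a maximal element of a partial order comparing these candidates), proves a non-inclusion lemma ($X_b \not\subseteq X$ and $X \not\subseteq X_b$ for every $b \in \bar{D}$), and uses minimality of $X$ to get $q(\defvotese{a}{X_b \cap X}{\elec}) \ge p(X_b \cap X)$, which combines with the recursive bound on $X_b \cap \bar{X}$ to contradict $X_b$ being a violating set for $b$. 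Your suggested extremal choices (maximizing $q(\defvotese{a}{V}{\elec})$, or a weighted Condorcet-like winner) are not shown to support any of these steps; the first is essentially vacuous, since any candidate ranked first by some voter already has $\defvotese{a}{V}{\elec} = C$. The algorithmic paragraph is fine but moot: once existence is known, one simply tests all $m$ candidates with the feasibility check of \Cref{lem:general-halls-condition}, with no recursive search needed.
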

\begin{proof}
	Suppose that the lemma is not true. Let $\elec = (V,C,\vsigma)$ be an election with the smallest number of voters for which the lemma is violated for any weight vectors $p \in \Delta(V)$ and $q \in \Delta(C)$. Fix such $p$ and $q$. By \Cref{lem:general-halls-condition}, we have that for each candidate $a \in C$, there exists $S \subseteq V$ for which $q(\defvotese{a}{S}{\elec}) < p(S)$. We refer to such a set $S$ as a \emph{counterexample for $a$}. We say that a counterexample for $a$ is \emph{minimal} if no strict subset of it is a counterexample for $a$. For each $a \in C$, fix an arbitrary minimal counterexample $X_a \subseteq V$. 
	
	Consider the set of candidates $C^* = \argmax_{c \in C} p(X_c)$ whose minimal counterexamples have the largest weight under $p$. Define a partial order $R$ over $C^*$, whereby, for $b,c \in C^*$, $b R c$ if and only if $X_b \supseteq X_c$ and $b \succ_i c$ for all $i \in X_c$. Let $a \in C^*$ be an arbitrary maximal element of this partial order. Note that the counterexample for $a$ has the highest weight, and no other candidate $b$ with this property ``dominates'' $a$ in the sense of being better than $a$ under relation $R$.

	Fix this candidate $a$. For brevity of notation, write $X = X_a$ and $D = \defvotese{a}{X_a}{\elec}$. We use the standard notation $\bar{X} = V \setminus X$ and $\bar{D} = C \setminus D$. The next lemma shows that for any candidate $b$ that $a$ does not weakly defeat in its minimal counterexample, the minimal counterexamples of $a$ and $b$ are incomparable in the sense that neither is a subset of the other.
	
	\begin{lemma}\label{lem:non-inclusion-prob}
		For each $b \in \bar{D}$, we have $X \setminus X_b \neq \emptyset$ and $X_b \setminus X \neq \emptyset$.
	\end{lemma}
	\begin{proof}
		Fix $b \in \bar{D}$. By the definition of $D$, this implies $b \succ_i a$ for all $i \in X$. 
		
		First, we show that $X \setminus X_b \neq \emptyset$. Suppose this is false, so $X \subseteq X_b$. Then, we have $p(X_b) \ge p(X)$. However, since $a \in C^* = \argmax_{c \in C} p(X_c)$, we also have $b \in C^*$. Further, since $X \subseteq X_b$ and $b \succ_i a$ for all $i \in X$, we have $b R a$. This contradicts the fact that $a$ is a maximal element of $C^*$ under partial order $R$.
		
		Next, we show that $X_b \setminus X \neq \emptyset$. Suppose this is not true, so $X_b \subseteq X$. Since we have already established that $X \setminus X_b \neq \emptyset$, we know $X_b \neq X$. Hence, $X_b \subsetneq X$. Since $X$ is a minimal counterexample for $a$, $X_b$ is not a counterexample for $a$. Thus, $\defvotese{a}{X_b}{\elec} \ge p(X_b)$. However, since $b \succ_i a$ for all $i \in X$, we have that $q(\defvotese{b}{X_b}{\elec}) \ge q(\defvotese{a}{X_b}{\elec}) \ge p(X_b)$, which contradicts the fact that $X_b$ is a counterexample for $b$. 
	\end{proof}
	
We now use \Cref{lem:non-inclusion-prob} to prove \Cref{lem:rm-lemma}. We now make four claims: (1) $D \neq \emptyset$, (2) $q(\bar{D}) > 0$  (which implies $\bar{D} \neq \emptyset$), (3) $p(X) > 0$ (which implies $X \neq \emptyset$), and (4) $\bar{X} \neq \emptyset$. The first claim is true because $a \in D$. For the next two claims, recall that $X$ is a counterexample for $a$. Hence, $0 \le q(D) < p(X) \leq 1$, which implies both $q(\bar{D}) > 0$ and $p(X) > 0$. Finally, for the last claim, suppose for contradiction that $\bar{X} = \emptyset$, i.e., $X = V$. Then, $X_b \setminus X = \emptyset$ for all $b$. Thus, by \Cref{lem:non-inclusion-prob}, $\bar{D} = \emptyset$, which contradicts claim (2). 

Next, we show that every candidate $c \in \bar{D}$ weakly defeats candidates of sufficient total weight in its minimal counterexample $X_c$, i.e., that $q(\defvotese{c}{X_c}{\elec}$ is sufficiently large. Later, we will use this to identify a specific candidate $b \in \bar{D}$ and show that $q(\defvotese{b}{X_b}{\elec} \ge p(X_b)$, contradicting the fact that $X_b$ is a counterexample for $b$. For all $c \in \bar{D}$, we have that
	\begin{align}
	q(\defvotese{c}{X_c}{\elec}) &= q(\defvotese{c}{X_c}{\elec} \cap D) + q(\defvotese{c}{X_c}{\elec} \cap \bar{D}) \nonumber\\
	&\ge q(\defvotese{c}{X_c \cap X}{\elec} \cap D) + q(\defvotese{c}{X_c \cap \bar{X}}{\elec} \cap \bar{D}) \nonumber\\
	&\ge q(\defvotese{a}{X_c \cap X}{\elec}) + q(\defvotese{c}{X_c \cap \bar{X}}{\elec} \cap \bar{D}) \nonumber\\
	&\ge p(X_c \cap X) + q(\defvotese{c}{X_c \cap \bar{X}}{\elec} \cap \bar{D}). \label{eqn:q-Xc}
	\end{align}
	Here, the first transition holds because for any set $T$, sets $T \cap D$ and $T \cap \bar{D}$ form a partition of $T$. The second transition holds because $\defvotese{c}{X_c \cap T}{\elec} \subseteq \defvotese{c}{X_c}{\elec}$ for any set $T$.

	The third transition holds because $\defvotese{a}{X_c \cap X}{\elec} \subseteq \defvotese{c}{X_c \cap X}{\elec} \cap D$: every candidate in $\defvotese{a}{X_c \cap X}{\elec}$ (i.e. weakly defeated by $a$ in $X_c \cap X$) is both in $D$ (because it is also weakly defeated by $a$ in $X$) and in  $\defvotese{c}{X_c \cap X}{\elec}$ (because it is also weakly defeated by $c$ in $X_c \cap X$ as $c \succ_i a$ for all $i \in X$).

	Finally, for the last transition, note that $X_c \cap X \subsetneq X$ due to \Cref{lem:non-inclusion-prob}. Since $X$ is a minimal counterexample for $a$, $X_c \cap X$ is not a counterexample for $a$. Thus, $q(\defvotese{a}{X_c \cap X}{\elec}) \ge p(X_c \cap X)$.

	Our goal now is to find some candidate $b \in \bar{D}$ and use \Cref{eqn:q-Xc} to derive a contradiction to the fact that $X_b$ is a counterexample for $b$ (that is, show that $q(\defvotese{b}{X_b}{\elec}) \ge p(X_b)$). We consider two cases, $p(\bar{X}) = 0$ and $p(\bar{X}) > 0$, and show that for each case we can identify such a candidate $b \in \bar{D}$.
	
First, suppose $p(\bar{X}) = 0$. Fix an arbitrary $b \in \bar{D}$; such a candidate exists because we know $\bar{D} \neq \emptyset$. Since $p(X) = 1$, we have that $p(X_b \cap X) = p(X_b)$. Substituting this in \Cref{eqn:q-Xc} already yields $q(\defvotese{b}{X_b}{\elec}) \ge p(X_b)$, which contradicts the fact that $X_b$ is a counterexample for $b$.
	
The more interesting case is when $p(\bar{X}) > 0$, which implies $\bar{X} \neq \emptyset$. In this case, consider the restricted election $\bar{\elec} = (\bar{X}, \bar{D}, \restr{\vsigma}{\bar{X},\bar{D}})$, where $\restr{\vsigma}{\bar{X},\bar{D}}$ denotes the preference profile $\vsigma$ restricted to the preferences of the voters in $\bar{X}$ over the candidates in $\bar{D}$. Note that $\bar{\elec}$ is a valid election because $\bar{X} \neq \emptyset$ and $\bar{D} \neq \emptyset$, as shown above. Further, $\bar{\elec}$ has fewer voters than $\elec$ because $X \neq \emptyset$. Therefore, $\bar{\elec}$ must satisfy \Cref{lem:rm-lemma} for any weight vectors, since $\elec$ was chosen as an election with the smallest number of voters for which \Cref{lem:rm-lemma} fails to hold. In particular, choose the re-normalized weight vectors $p' \in \Delta(\bar{X})$ and $q' \in \Delta(\bar{D})$ as follows:
	\[
	p'(i) = p(i)/p(\bar{X}), \forall i \in \bar{X}, ~~\text{ and }~~ q'(c) = q(c)/q(\bar{D}), \forall c \in \bar{D}.
	\]
	Note that these are well-defined because we have already shown that $q(\bar{D}) > 0$, and we are in the case $p(\bar{X}) > 0$. Since \Cref{lem:rm-lemma} holds for $\bar{\elec}$ with weight vectors $p'$ and $q'$, there exists some candidate $b \in \bar{D}$ such that
	$p'(S) \leq q'(\defvotese{b}{S}{\bar{\elec}})$ for all $S \subseteq \bar{X}$. In particular, for $S = X_b \cap \bar{X}$, we have,
	\[
	p'(X_b \cap \bar{X}) \le q'(\defvotese{b}{X_b \cap \bar{X}}{\bar{\elec}}) = q'(\defvotese{b}{X_b \cap \bar{X}}{\elec} \cap \bar{D}),
	\]
	where the last transition holds because $\bar{\elec}$ is a restriction of $\elec$ to the voters in $\bar{X}$ and the candidates in $\bar{D}$, which implies $\defvotese{b}{X_b \cap \bar{X}}{\bar{\elec}}  = \defvotese{b}{X_b \cap \bar{X}}{\elec} \cap \bar{D}$. Expanding the definitions of $q'$ and $p'$, we get
	\[
	\frac{q(\defvotese{b}{X_b \cap \bar{X}}{\elec} \cap \bar{D})}{q(\bar{D})} \geq \frac{p(X_b \cap \bar{X})}{p(\bar{X})}
	\ \Rightarrow\  q(\defvotese{b}{X_b \cap \bar{X}}{\elec} \cap \bar{D}) \ge p(X_b \cap \bar{X}) \cdot \frac{q(\bar{D})}{p(\bar{X})} \ge p(X_b \cap \bar{X}),
	\]
	where the last step follows because $X$ is a counterexample for $a$, which implies $q(D) < p(X)$, i.e., $q(\bar{D}) > p(\bar{X})$. Substituting this bound in \Cref{eqn:q-Xc}, we get
	\[
	q(\defvotese{b}{X_b}{\elec}) \ge p(X_b \cap X) + p(X_b \cap \bar{X}) = p(X_b),
	\]
	which is the desired contradiction to the fact that $X_b$ is a counterexample for $b$. 
	
	Finally, to compute a desired candidate $a$, we can simply iterate over all $m$ candidates, and check whether the $(p,q)$-domination graph of each admits a fractional perfect matching, which can be done in strongly polynomial time due to \Cref{lem:general-halls-condition}.
\end{proof}

\subsection{Weight Vectors and Integral Matching}\label{sec:integral-matching}
Note that \Cref{lem:rm-lemma} leaves open the choice of weight vectors $p$ and $q$, and thus induces a novel family of (deterministic) social choice rules: for any given $p$ and $q$, the corresponding social choice rule returns, on a given election $\elec$, an arbitrary candidate $a$ whose $(p,q)$-domination graph $G^{\elec}_{p,q}(a)$ admits a fractional perfect matching. We now discuss two interesting choices of $p$ and $q$. Perhaps the most straightforward choice is that of uniform weight vectors.

\begin{definition}[Rule \unimatching]
	Given an election $\elec = (V,C,\vsigma)$, \unimatching returns a candidate $a$ (ties broken arbitrarily) whose $(p^{\uni},q^{\uni})$-domination graph
	admits a fractional perfect matching, where $p^{\uni}(i) = 1/n$ for each $i \in V$ and $q^{\uni}(c) = 1/m$ for each $c \in C$.
\end{definition}

For any candidate $a$ returned by this rule, \Cref{lem:general-halls-condition} implies that $\nicefrac{|\defvotese{a}{S}{\elec}|}{|C|} \ge \nicefrac{|S|}{|V|}$ for all $S \subseteq V$. Succinctly, this can be summarized as follows: \emph{In any $x\%$ of the votes, for any $x$, candidate $a$ weakly defeats at least $x\%$ of the candidates.} To the best of our knowledge, ours is the first result establishing that such a candidate always exists. We believe this could be of independent interest; for a discussion on how this may be connected to another major open conjecture in voting, see \Cref{sec:disc}. The rule we are interested in, however, is induced by a different choice of weight vectors. 

\begin{definition}[Rule \plumatching]
	Given an election $\elec = (V,C,\vsigma)$, \plumatching returns a candidate $a$ (ties broken arbitrarily) whose $(p^{\uni},q^{\plu})$-domination graph
	admits a fractional perfect matching, where $p^{\uni}(i) = 1/n$ for each $i \in V$ and $q^{\plu}(c) = \plu(c)/n$ for each $c \in C$. 
\end{definition}

For any candidate $a$ returned by this rule, \Cref{lem:general-halls-condition} implies that $\plu(\defvotese{a}{S}{\elec}) \ge |S|$ for all $S \subseteq V$, i.e., in any subset of votes $S$, candidate $a$ weakly defeats a set of candidates with total plurality score at least $|S|$. Note that, in the domination graphs in Figure~\ref{fig:matching-example}, the weights used are, in fact, $p^{\uni}$ and $q^{\plu}$, which suggests that \plumatching could return candidate $a$ in that case.

With these weight vectors, we can equivalently view the domination graph a bit differently. Instead of each voter $i$ on the left and each candidate $c$ on the right having weights $1/n$ and $\plu(c)/n$, respectively, we can let them have integral weights $1$ and $\plu(c)$, respectively. Then, we can replace each node $c$ on the right with weight $\plu(c)$ by $\plu(c)$ many nodes, each representing a unique voter whose top choice is $c$, that have weight $1$ each and are connected to the same set of nodes on the left as $c$ was. This leads to a bipartite graph whose vertices on both sides correspond to the voters.
\begin{definition}
	Given an election $\elec = (V,C,\vsigma)$ and a candidate $a \in C$, define, with slight abuse of notation, the \emph{integral domination graph} of candidate $a$ as the bipartite graph $G^{\elec}(a) = (V,V,E_a)$, where $(i,j) \in E_a$ if and only if $a \succeq_i \top(j)$. 
\end{definition}

Returning to the example in \Cref{fig:matching-example}, the graph $G^{\elec}(a)$ appears in \Cref{fig:a-integral}. The only edges missing are $(2, 2)$ and $(4, 4)$, since $\top(2)=c$ and $c \succ_2 a$, and $\top(4)=b$ and $b \succ_4 a$. Comparing the graph $G^{\elec}(a)$ in \Cref{fig:a-integral} to the graph $G^\elec_{p^{\uni},q^{\plu}}(a)$ in \Cref{fig:a-domination}, note that candidate $a$ was replaced by voters $1$ and $3$ (who rank $a$ at the top), candidate $b$ was replaced by voter $4$, and candidate $c$ was replaced by voter $2$ (also indicated in the figure, using dotted boxes).

It is easy to check that Hall's condition for the existence of an \emph{(integral) perfect matching} in the integral domination graph is identical to the condition outlined above for the candidate returned by \plumatching.
Therefore, a direct consequence of \Cref{lem:rm-lemma} is that there always exists some candidate $a$ such that $G^{\elec}(a)$ admits a perfect matching, and \plumatching returns one such candidate. A perfect matching in \Cref{fig:a-integral} is denoted with thick red lines. Comparing this to \Cref{fig:a-domination}, it is easy to see how this integral matching is actually derived directly from the fractional matching in $G^\elec_{p^{\uni},q^{\plu}}(a)$. Note that we could have also chosen edges $(1, 3)$ and $(2, 1)$ instead of $(1, 1)$ and $(2, 3)$ in \Cref{fig:a-integral}.

\begin{corollary}\label{cor:integral-domination}
	Let $\elec = (V,C,\vsigma)$ be any election. For each candidate $a \in C$, $G^{\elec}(a)$ admits a perfect matching if and only if $G^{\elec}_{p^{\uni},q^{\plu}}(a)$ admits a fractional perfect matching. Consequently, there exists a candidate $a \in C$ whose integral domination graph $G^{\elec}(a)$ admits a perfect matching, i.e., for which there exists a bijection $M : V \to V$ satisfying $a \succeq_i \top(M(i))$ for each $i \in V$. 
\end{corollary}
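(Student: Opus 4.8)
The plan is to establish the stated equivalence by showing that the two matching conditions --- Hall's condition for a fractional perfect matching in $G^{\elec}_{p^{\uni},q^{\plu}}(a)$ and the classical Hall (marriage) condition for an integral perfect matching in $G^{\elec}(a)$ --- are literally the same inequality, after which the ``Consequently'' clause is an immediate application of the Ranking-Matching Lemma. Fix a candidate $a \in C$. On the fractional side, \Cref{lem:general-halls-condition} tells us that $G^{\elec}_{p^{\uni},q^{\plu}}(a)$ admits a fractional perfect matching if and only if $q^{\plu}(\defvotese{a}{S}{\elec}) \ge p^{\uni}(S)$ for every $S \subseteq V$; substituting $p^{\uni}(i) = 1/n$ and $q^{\plu}(c) = \plu(c)/n$ and clearing the common factor $1/n$, this condition becomes exactly $\plu(\defvotese{a}{S}{\elec}) \ge |S|$ for all $S \subseteq V$. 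On the integral side, the standard Hall marriage theorem says that the bipartite graph $G^{\elec}(a) = (V,V,E_a)$ (both sides of size $n$) has a perfect matching if and only if $|N(S)| \ge |S|$ for every $S \subseteq V$, where $N(S)$ is the neighborhood of $S$ on the right-hand copy of $V$.

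The one computation to carry out is the identification $|N(S)| = \plu(\defvotese{a}{S}{\elec})$. By definition, $j$ lies in $N(S)$ iff $(i,j) \in E_a$ for some $i \in S$, i.e.\ iff $a \succeq_i \top(j)$ for some $i \in S$, i.e.\ iff $\top(j) \in \defvotese{a}{S}{\elec}$. Hence $N(S) = \set{j \in V : \top(j) \in \defvotese{a}{S}{\elec}}$, and grouping the voters by their top choice gives $|N(S)| = \sum_{c \in \defvotese{a}{S}{\elec}} \plu(c) = \plu(\defvotese{a}{S}{\elec})$. Plugging this in, the integral Hall condition reads $\plu(\defvotese{a}{S}{\elec}) \ge |S|$ for all $S$ --- identical to the fractional condition above. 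This proves the ``if and only if'', and it is precisely the vertex-splitting picture described just before the corollary, in which each right vertex $c$ of weight $\plu(c)$ is replaced by the $\plu(c)$ voters ranking $c$ first.

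For the final claim, I would instantiate the Ranking-Matching Lemma (\Cref{lem:rm-lemma}) with $p = p^{\uni} \in \Delta(V)$ and $q = q^{\plu} \in \Delta(C)$, noting that $\sum_{c \in C} \plu(c)/n = 1$ so $q^{\plu}$ is indeed normalized: there exists a candidate $a$ whose $(p^{\uni},q^{\plu})$-domination graph admits a fractional perfect matching. By the equivalence just proved, $G^{\elec}(a)$ then admits an integral perfect matching, and a perfect matching of the bipartite graph $(V,V,E_a)$ is exactly a bijection $M : V \to V$ with $(i,M(i)) \in E_a$, i.e.\ $a \succeq_i \top(M(i))$, for every $i \in V$.

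I do not expect any real obstacle here: all the content sits in the two previously established lemmas, and the only thing requiring care is the bookkeeping $|N(S)| = \plu(\defvotese{a}{S}{\elec})$ together with the clean cancellation of the factor $1/n$ that makes the fractional and integral Hall conditions coincide. As an alternative one could avoid Hall's theorem on the integral side entirely and argue directly via total unimodularity of the bipartite incidence matrix --- scaling the $(p^{\uni},q^{\plu})$ fractional matching to the integer vertex demands $1$ and $\plu(c)$ and extracting an integral optimum --- but the Hall-condition route above is shorter given \Cref{lem:general-halls-condition}.
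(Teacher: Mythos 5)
Your proposal is correct and follows essentially the same route as the paper, which likewise reduces both sides to the common Hall-type condition $\plu(\defvotese{a}{S}{\elec}) \ge |S|$ (via the vertex-splitting picture in which each candidate node $c$ of weight $\plu(c)$ is replaced by the $\plu(c)$ voters ranking $c$ first) and then invokes the Ranking-Matching Lemma with $p^{\uni}$ and $q^{\plu}$. The explicit bookkeeping $|N(S)| = \plu(\defvotese{a}{S}{\elec})$ is exactly the "easy to check" step the paper leaves to the reader, and you carry it out correctly.
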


In the next section, we use this guarantee to achieve an optimal distortion bound.
\section{The Distortion of Deterministic Social Choice Rules}\label{sec:det-dist}
\usetikzlibrary{shapes.multipart}

We begin by proving that \plumatching has distortion at most $2+\alpha$ for $\alpha$-decisive metric spaces. Since all metric spaces are $1$-decisive, this settles a major conjecture that the optimal deterministic rule has distortion $3$~\cite{anshelevich2015approximating,munagala2019improved}, matching a lower bound of $3$ established by \citet{anshelevich2015approximating}. The proof crucially leverages the fact that \plumatching returns a candidate whose integral domination graph has a perfect matching (\Cref{cor:integral-domination}). The proof is rather short, as the heavy lifting is already done in proving the Ranking-Matching Lemma (\Cref{lem:rm-lemma}). 

\begin{theorem}\label{thm:det-upper-bound}
	For every $m \ge 3$ and $\alpha \in [0,1]$, \plumatching has distortion $2 + \alpha$ for $\alpha$-decisive metric spaces.
\end{theorem}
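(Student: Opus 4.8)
The plan is to exploit the single structural guarantee that \plumatching provides via \Cref{cor:integral-domination}: the returned candidate $a$ admits a bijection $M : V \to V$ with $a \succeq_i \top(M(i))$ for every voter $i$. Since we take the worst case over $d \triangleright \vsigma$, this ordinal fact translates immediately into the metric inequality $d(i,a) \le d(i,\top(M(i)))$ for each $i$. Summing gives $\SC(a) \le \sum_{i \in V} d(i,\top(M(i)))$, so the entire theorem reduces to bounding this right-hand side against $\SC(c^*)$, where $c^* \in \argmin_{c \in C} \SC(c)$ is an optimal candidate.

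Next I would fix a voter $i$ and write $j = M(i)$, with the goal of charging $d(i,\top(j))$ to a combination of $d(i,c^*)$ and $d(j,c^*)$. The natural route is the triangle-inequality path $i \to c^* \to j \to \top(j)$, which yields
\[
d(i,\top(j)) \le d(i,c^*) + d(c^*,j) + d(j,\top(j)).
\]
The crucial term is the last one: because $\top(j)$ is $j$'s top choice, $\alpha$-decisiveness gives $d(j,\top(j)) \le \alpha \cdot d(j,c^*)$ whenever $c^* \neq \top(j)$, producing $d(i,\top(j)) \le d(i,c^*) + (1+\alpha)\,d(c^*,j)$. When $c^* = \top(j)$, the same inequality holds trivially, since its right-hand side already dominates $d(i,c^*) = d(i,\top(j))$. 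Either way I obtain the uniform per-voter bound $d(i,\top(j)) \le d(i,c^*) + (1+\alpha)\,d(c^*,j)$.

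The final step is to sum this bound over all $i \in V$. Since $M$ is a bijection, as $i$ ranges over $V$ so does $j = M(i)$, hence $\sum_{i \in V} d(c^*, M(i)) = \sum_{j \in V} d(c^*,j) = \SC(c^*)$, and of course $\sum_{i \in V} d(i,c^*) = \SC(c^*)$ as well. Combining everything,
\[
\SC(a) \le \sum_{i \in V} d(i,\top(M(i))) \le \SC(c^*) + (1+\alpha)\,\SC(c^*) = (2+\alpha)\,\SC(c^*),
\]
which is precisely the claimed distortion; setting $\alpha = 1$ recovers the distortion-$3$ bound for general metrics.

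I do not expect a genuine obstacle here, precisely because the combinatorial difficulty has been absorbed into the Ranking-Matching Lemma and its corollary. The only points demanding care are the correct choice of the triangle-inequality path (routing through both $c^*$ and the matched voter $j$, rather than bounding $d(i,\top(j))$ directly) and the small case distinction needed to apply $\alpha$-decisiveness cleanly when $c^*$ happens to coincide with $j$'s top choice.
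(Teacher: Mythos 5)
Your upper-bound argument is essentially the paper's own proof: both start from the perfect matching guaranteed by \Cref{cor:integral-domination}, route the triangle inequality through a reference candidate and then through the matched voter to her top choice, and use the fact that $M$ is a bijection to re-index the resulting sum (the paper compares against an arbitrary candidate $b$ rather than $c^*$ specifically, which is immaterial). The one omission is that the theorem asserts distortion \emph{equal to} $2+\alpha$, and the paper's proof also includes a two-voter, three-candidate instance showing the analysis is tight; your proposal establishes only the upper bound.
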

\begin{proof}
	We begin with the upper bound. Let $(\calM,d)$ be an $\alpha$-decisive metric space, and let $\vsigma$ be the induced preference profile. Let $a$ be the candidate returned by \plumatching, and let $b$ be any other candidate. We want to show that $\SC(a) \le (2+\alpha) \cdot \SC(b)$. 
	
	Due to \Cref{cor:integral-domination}, the integral domination graph $G^{\elec}(a)$ admits a perfect matching $M : V \to V$ which satisfies $a \succeq_i \top(M(i))$ for each $i \in V$. Thus,
    \begin{align*}
        \SC(a) &= \sum_{i \in V} d(i,a)\\
        &\le \sum_{i \in V} d(i,\top(M(i))) &(\because a \succeq_i \top(M(i)), \forall i \in V)\\
        &\le \sum_{i \in V} \big( d(i,b) + d(b,\top(M(i))) \big) &(\because \text{triangle inequality})\\
        &= \SC(b) + \sum_{i \in V} d(b,\top(M(i)))\\
        &= \SC(b) + \sum_{i \in V} d(b,\top(i)) &(\because M \text{ is a perfect matching})\\
        &= \SC(b) + \sum_{i \in V : \top(i) \neq b} d(b,\top(i)) &(\because d(b,b)=0)\\
        &\le \SC(b) + \sum_{i \in V : \top(i) \neq b} \left(d(b,i)+d(i,\top(i))\right) &(\because \text{triangle inequality})\\
        &\le \SC(b) + \sum_{i \in V : \top(i) \neq b} \left(d(b,i)+\alpha \cdot d(i,b)\right) &(\because \text{$\alpha$-decisiveness})\\
        &\le \SC(b) + \sum_{i \in V} \left(d(b,i)+\alpha \cdot d(i,b)\right)\\
        &= (2+\alpha) \cdot \SC(b).
    \end{align*}

    Because the metric space, the election, and the choice of $b$ were arbitrary, this establishes that \plumatching has distortion at most $2+\alpha$.
    
    To show that this analysis is tight for $m \ge 3$, consider an election with two voters ($V = \set{1,2}$) and three candidates ($C = \set{a, b, c}$). The preference profile is as follows: $\sigma_1 = a \succ b \succ c$ and $\sigma_2 = c \succ b \succ a$. First, note that the integral domination graph of $b$ has two edges: $(1,2)$ and $(2,1)$. Both edges together form a perfect matching. Hence, \plumatching may return $b$.
    However, consider an $\alpha$-decisive metric consistent with the preference profile given by the following undirected graph, where the distance between any two points is the shortest distance in the graph. 
    \[\begin{tikzpicture}
        \tikzstyle{main_node} = [circle,fill=white,draw,minimum size=2em,inner sep=0pt]

        \node[main_node] (a) at (0, 0) {$a$};
        \node[main_node] (i) at (1.25, 0) {$1$};
        \node[main_node] (jc) at (3, 0) {$2, c$};
        \node[main_node] (b) at (1.25, 1.5) {$b$};

      \draw (a) -- (i) node [midway, above] {\tiny $\alpha$};
      \draw (i) -- (jc) node [midway, above] {\tiny $1$};
      \draw (i) -- (b) node [midway, left] {\tiny $1$};
      \draw (b) -- (jc) node [midway, above right] {\tiny $1 + \alpha$};
    \end{tikzpicture}\]
    Note that $\SC(b) = 2+\alpha$, while $\SC(c) = 1$.\footnote{In this case, $c$ is also a candidate that \plumatching may return, leading one to hope that tie-breaking might help reduce the distortion. However, in Appendix~\ref{sec:ties}, we provide a more complex construction which shows that the lower bound effectively holds regardless of the tie-breaking.} Hence, the distortion of \plumatching is at least $2 + \alpha$. For $m > 3$, we can derive the same lower bound by placing the remaining candidates sufficiently far away.
\end{proof}

Note that in the trivial case of $m=2$, \plumatching (and every other reasonable voting rule) coincides with the majority rule, which picks the top choice of a majority of voters. This is known to have distortion $1+2\alpha$, which is optimal for deterministic rules for every $\alpha \in [0,1]$~\cite{anshelevich2017randomized}. 

\begin{corollary}\label{cor:det-3-dist}
	For every $m$, the distortion of \plumatching is $3$, which is optimal across all deterministic social choice rules. 
\end{corollary}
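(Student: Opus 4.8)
The plan is to derive Corollary~\ref{cor:det-3-dist} as an immediate consequence of Theorem~\ref{thm:det-upper-bound} together with the known lower bound of \citet{anshelevich2015approximating}. First I would invoke Theorem~\ref{thm:det-upper-bound} at the worst-case decisiveness level. Since every metric space is $1$-decisive (this corresponds exactly to $\alpha = 1$, which retrieves the general unrestricted metric space, as noted in the model section), setting $\alpha = 1$ shows that \plumatching has distortion $2 + 1 = 3$ for all metric spaces, for every $m \ge 3$. This handles the upper bound for $m \ge 3$.

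Next I would address the matching lower bound to conclude optimality. The excerpt states that \citet{anshelevich2015approximating} established a lower bound of $3$ on the distortion of every deterministic social choice rule; I would simply cite this, so that no deterministic rule can do better than $3$. Combining the upper bound of $3$ from Theorem~\ref{thm:det-upper-bound} with this lower bound of $3$ yields that \plumatching achieves distortion exactly $3$ and that this is optimal among all deterministic rules, for $m \ge 3$.

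The only subtlety is the boundary case $m = 2$, which Theorem~\ref{thm:det-upper-bound} explicitly excludes (it is stated for $m \ge 3$). Here I would rely on the remark immediately preceding the corollary: for $m = 2$, \plumatching coincides with the majority rule, whose distortion is known to be $1 + 2\alpha$, i.e.\ $3$ at $\alpha = 1$, and this is optimal for deterministic rules. Thus the distortion is $3$ and optimal in this case as well, and the two cases together cover all $m$.

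I do not expect any genuine obstacle here, as the corollary is a packaging result: every ingredient is already in place. The one point requiring care is simply to be explicit that the $m = 2$ case is covered separately (via the majority-rule observation) rather than by Theorem~\ref{thm:det-upper-bound}, so that the claim ``for every $m$'' is fully justified; glossing over this would leave a small gap at $m = 2$.
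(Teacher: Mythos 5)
Your proposal is correct and matches the paper's own reasoning exactly: the corollary follows by setting $\alpha = 1$ in \Cref{thm:det-upper-bound} for $m \ge 3$, invoking the lower bound of $3$ from \citet{anshelevich2015approximating}, and covering $m = 2$ via the remark that \plumatching coincides with the majority rule, whose distortion $1 + 2\alpha$ equals $3$ at $\alpha = 1$ and is optimal. No gaps.
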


This resolves the open conjecture by establishing that the optimal distortion of deterministic rules is $3$, and can be achieved by a polynomial-time computable combinatorial rule. 

While our bound is tight for every $\alpha \in [0,1]$ when $m = 2$, and for $\alpha = 1$ when $m \ge 3$, one might wonder if it is also tight for $\alpha < 1$ when $m \ge 3$.
The best known lower bound for deterministic rules is $1+2\alpha$~\cite{anshelevich2017randomized}, which could be significantly lower than our upper bound of $2+\alpha$. We derive an improved lower bound for deterministic rules, which shows that at least as $m \to \infty$, our upper bound is tight for every $\alpha \in [0,1]$. Note that our lower bound coincides with the aforementioned $1 + 2\alpha$ lower bound
when $m$ is $2$ or $3$, but improves as $m$ increases.

\begin{theorem}\label{thm:det-lower-bound}
	For every $m \ge 2$ and $\alpha \in [0,1]$, the distortion of every deterministic social choice rule is at least $2 + \alpha - 2(1 - \alpha)/\floor{m}_{even}$ for $\alpha$-decisive metric spaces, where $\floor{m}_{even}$
    is the largest even integer smaller than or equal to $m$.
\end{theorem}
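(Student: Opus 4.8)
The plan is to construct an explicit $\alpha$-decisive preference profile together with a worst-case metric such that every deterministic rule is forced to pay roughly $2+\alpha$ times the optimal social cost. Since a deterministic rule must commit to a single winner given only the ordinal profile, the standard technique is to build a profile that is symmetric enough that the rule's choice is ``pinned down'' by symmetry, and then place the adversarial metric so that whichever candidate the rule picks is far from optimal. Concretely, I would use $m' = \floor{m}_{\textrm{even}}$ candidates arranged so that the construction is invariant under a cyclic or reflective relabeling of candidates; the rule, being a fixed function of the profile, must then either select a ``symmetric'' candidate or we can exhibit two isomorphic instances on which the rule is forced to behave consistently, and in at least one of them the chosen candidate has high social cost relative to the true optimum.

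The key steps, in order, would be as follows. First, I would fix $m' = \floor{m}_{\textrm{even}}$ and split the $m'$ candidates into pairs (or into two symmetric halves), designing a preference profile over these candidates that is symmetric under swapping the two halves, with the remaining $m - m'$ candidates (if any) placed far away so they are irrelevant to both the rule and the optimum. Second, I would argue that by the symmetry of the profile, for whichever candidate $\hat{c}$ the deterministic rule outputs, there is an ``antipodal'' candidate $c^\ast$ playing the symmetric role; we then construct the metric so that $\hat{c}$ is the expensive choice and $c^\ast$ is (near-)optimal. Third, I would place the voters and candidates on a line or small graph (as in the $2+\alpha$ upper-bound tightness example, where edges of length $\alpha$ connect voters to their top choices and unit-length edges connect the rest) so that the metric is $\alpha$-decisive and consistent with $\vsigma$. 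Finally, I would compute $\SC(\hat{c})$ and $\SC(c^\ast)$ explicitly and take the ratio, showing it equals $2 + \alpha - 2(1-\alpha)/m'$; the $2/m'$ correction term should emerge naturally from the fact that the symmetric construction forces a vanishing contribution that scales inversely with the number of candidate-groups, so that the bound approaches $2+\alpha$ as $m \to \infty$ and degenerates to $1+2\alpha$ at $m \in \{2,3\}$.

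The main obstacle, I expect, is the symmetry argument that forces the rule to pick an expensive candidate. Because the rule is an arbitrary deterministic function, we cannot directly control which candidate it outputs; the standard workaround is to make the profile sufficiently symmetric that the set of candidates is partitioned into equivalence classes that the rule cannot distinguish ordinally, and then use an averaging or pigeonhole argument over the automorphisms of the profile: if we apply the symmetry group to the single instance, the rule's image must land on \emph{some} candidate, and by relabeling we can arrange that this candidate is the expensive one in a metric-consistent way. Making this rigorous --- ensuring the constructed metric remains consistent with $\vsigma$ and $\alpha$-decisive simultaneously for the adversarial placement of $\hat{c}$ --- is the delicate part, since the metric must be chosen \emph{after} seeing $\hat{c}$ but must still respect the fixed ordinal profile. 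I would handle this by designing the profile so that every candidate in the relevant symmetric class admits a consistent $\alpha$-decisive metric witnessing the $2+\alpha-2(1-\alpha)/m'$ ratio, so that no matter which candidate the rule selects, the adversary has a valid metric in hand.
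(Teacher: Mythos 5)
Your proposal matches the paper's proof in all essentials: both reduce to even $m' = \floor{m}_{\textrm{even}}$ by placing leftover candidates far away, build a profile with two symmetric halves of candidates (each voter ranking her own half's representative first, then her half, then the other half), invoke the reflection symmetry to assume WLOG the rule picks from one half, and then exhibit an $\alpha$-decisive consistent metric (edges of length $\alpha$ to top choices, unit edges otherwise, the other half collapsed near its voters) under which the chosen candidate's cost exceeds the optimum by exactly $2+\alpha-2(1-\alpha)/m'$. The paper realizes this plan with a concrete graph metric, but the structure, the symmetry argument, and the source of the $2(1-\alpha)/m'$ correction term are the same.
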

\begin{proof}
	We show the lower bound of $2+\alpha-2(1-\alpha)/m$ for every even $m$; for odd $m$, we can simply add an irrelevant candidate that is sufficiently far away from all other candidates and voters. 
	
	Assume that $m = 2 \ell$ for some $\ell \in \bbN$. Consider an election $\elec = (V,C,\vsigma)$, where $V = \set{1,\ldots,2\ell}$, $C = \set{a_1,\ldots,a_{2\ell}}$, and the preference profile $\vsigma$ is as follows. 
	Define $A = \set{a_1,\ldots,a_{\ell}}$, and $B = C \setminus A$, $V_A = \set{1,\ldots,\ell}$, and $V_B = V \setminus V_A$. For $i \in V_A$, voter $i$ ranks candidate $a_i \in A$ first, followed by the remaining candidates in $A$ in an arbitrary order, followed by the candidates in $B$ in an arbitrary order. For $i \in V_B$, voter $i$ ranks candidate $a_i \in B$ first, followed by the remaining candidates in $B$ in an arbitrary order, followed by the candidates in $A$ in an arbitrary order.

    Let $f$ be a deterministic social choice rule. Without loss of generality, suppose $f$ returns a candidate from $A$; the argument is symmetric when $f$ returns a candidate from $B$. We now construct an $\alpha$-decisive metric space consistent with $\vsigma$ under which every candidate in $A$ has an approximation ratio to social cost at least $2 + \alpha - 2 (1-\alpha)/m$. The metric is represented by the graph in \Cref{fig:lower-bound-metric}
    \begin{figure}
        \begin{center}
            \begin{tikzpicture}[scale=1.5,every text node part/.style={align=center}]
                \tikzstyle{main_node} = [circle,fill=white,draw,minimum size=2.5em,inner sep=3pt]

                \node[main_node] (a1) at (0,0) {$a_1$};
                \node[main_node] (a2) at (0, -1) {$a_2$};
                \node at (0, -1.9) {\vdots};
                \node[main_node] (al) at (0, -3) {$a_\ell$};

                \node[main_node] (va1) at (2,0) {$1$};
                \node[main_node] (va2) at (2, -1) {$2$};
                \node at (2, -1.9) {\vdots};
                \node[main_node] (val) at (2, -3) {$\ell$};

                \node[main_node] (p) at (4, -1.5) {$a_{\ell+1},\ldots,a_{2\ell}$\\[0.2cm]$\ell+1,\ldots,2\ell$};

                \draw (p) -- (va1) node [midway, above] {\tiny $1$};
                \draw (p) -- (va2) node [midway, above] {\tiny $1$};
                \draw (p) -- (val) node [midway, above] {\tiny $1$};

                \draw (a1) -- (va1) node [midway, above] {\tiny $\alpha$};
                \draw (a2) -- (va2) node [midway, above] {\tiny $\alpha$};
                \draw (al) -- (val) node [midway, above] {\tiny $\alpha$};

                \draw (a1) -- (va2) node [pos=0.15, above] {\tiny $1$};
                \draw (a1) -- (val) node [pos=0.15, above] {\tiny $1$};

                \draw (a2) -- (va1) node [pos=0.85, above] {\tiny $1$};
                \draw (a2) -- (val) node [pos=0.85, below] {\tiny $1$};

                \draw (al) -- (va1) node [pos=0.85, above] {\tiny $1$};
                \draw (al) -- (va2) node [pos=0.15, below] {\tiny $1$};
            \end{tikzpicture}
        \end{center}
	\caption{The metric space construction used in the proof of \Cref{thm:det-lower-bound}.}
	\label{fig:lower-bound-metric}
	\end{figure}
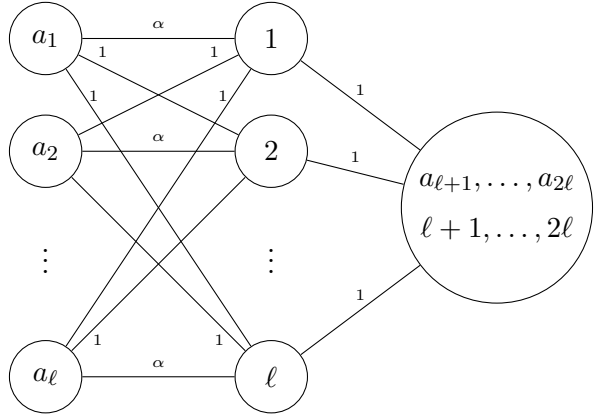
    
    For $i \in V_A$, voter $i$ has distance $\alpha$ from $a_i$, and $1$ from every other candidate. Voters in $V_B$ and candidates in $B$ are located at a single point at distance $1$ from each voter in $V_A$. The distance between any other pair of points is the shortest distance between them in the resulting graph. 
    
    It is easy to check that this metric is $\alpha$-decisive and consistent with $\vsigma$.
	Further, we have that for each $a_i \in A$, $\SC(a_i) = 1 \cdot \alpha + (\ell-1) \cdot 1 + \ell \cdot (1+\alpha)$, whereas for each $a_i \in B$, $\SC(a_i) = \ell \cdot 1$. Since $f$ returns a candidate from $A$, its distortion is at least 
	\[
	\frac{\alpha + \ell-1+\ell \cdot (1+\alpha)}{\ell} = 2+\alpha-\frac{1-\alpha}{\ell} = 2+\alpha-\frac{2 \cdot (1-\alpha)}{m},
	\]
	as desired.
\end{proof}

Note that for even $m$, in the construction used in the proof of \Cref{thm:det-lower-bound} each voter ranks a distinct candidate first. Hence, the lower bound applies even in the special case of peer selection, in which the set of voters is identical to the set of candidates (and thus $\alpha = 0$). 

For odd $m$, the bound can be slightly improved by having more voters whose top choices are slightly unequally divided between two subgroups of candidates of sizes $\floor{m/2}$ and $\ceil{m/2}$. However, the improvement is marginal, so we avoid it for ease of presentation. 

\subsection{Connecting \plumatching to The Matching Uncovered Set}\label{subsec:connection-to-the-mus}

As mentioned in \Cref{sec:intro}, we devised the notion of integral domination graphs by directly building on the work of \citet{munagala2019improved}. They combinatorially defined a set of candidates that they named the matching uncovered set, and proved that every candidate in this set has distortion at most $3$, but failed to answer whether this set is always non-empty. To understand how this is connected to our work, let us begin by introducing their definitions.

\begin{definition}
	Given an election $\elec$, the \emph{separation graph} of candidates $a$ and $b$ is defined as the bipartite graph $G^\elec(a,b) = (V, V, E_{a,b})$ where $(i, j) \in E_{a,b}$ if there exists a candidate $c \in C$ such that $a \succeq_i c$ and $c \succeq_j b$. When the election is clear from the context, we will drop it from the notation.
\end{definition}

\begin{definition}
	The \emph{matching uncovered set} of an election $\elec$ is the set of candidates $a \in C$ such that for every $b \in C$, $G^{\elec}(a,b)$ admits a perfect matching.
\end{definition}

Note that the separation graph $G(a,b)$ has an edge from voter $i$ on the left to voter $j$ on the right if $a$ weakly defeats some candidate in $i$'s vote who weakly defeats $b$ in $j$'s vote. In comparison, recall that the integral domination graph $G(a)$ has an edge from $i$ to $j$ if $a$ weakly defeats $\top(j)$ in $i$'s vote. Because $\top(j)$ clearly weakly defeats every candidate in $j$'s vote, an $i \to j$ edge in $G(a)$ also exists in $G(a,b)$ for every candidate $b$. That is, for each $a$, the integral domination graph $G(a)$ is a subgraph of the separation graph $G(a,b)$ for \emph{every} $b$. Thus, seeking a perfect matching in $G(a)$, as \plumatching does, is equivalent to seeking a single set of edges that form a perfect matching in $G(a,b)$ for every $b$. This is stronger than the requirement imposed by the matching uncovered set, which is simply that each $G(a,b)$ admits a (potentially different) perfect matching. We therefore have the following. 

\begin{proposition}\label{prop:plumatching-matchunc}
    \plumatching always returns a candidate in the matching uncovered set. Consequently, the matching uncovered set is always non-empty.
\end{proposition}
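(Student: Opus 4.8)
The plan is to prove \Cref{prop:plumatching-matchunc} as a direct corollary of \Cref{cor:integral-domination} together with the subgraph relationship between the integral domination graph $G^{\elec}(a)$ and the separation graphs $G^{\elec}(a,b)$. The two claims in the proposition are logically distinct: the first says every candidate \plumatching returns lies in the matching uncovered set, and the second (non-emptiness) follows from the first once we recall that \Cref{cor:integral-domination} guarantees \plumatching always succeeds in returning some candidate.

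\medskip

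\textbf{Step 1: Establish the subgraph inclusion.} The crux is to show that for every candidate $a$ and every candidate $b$, we have $E_a \subseteq E_{a,b}$, i.e. $G^{\elec}(a)$ is a subgraph of $G^{\elec}(a,b)$ on the same vertex set $(V,V)$. I would argue this edge-by-edge. Suppose $(i,j) \in E_a$, so by definition $a \succeq_i \top(j)$. I now need to exhibit a witness candidate $c$ certifying $(i,j) \in E_{a,b}$, namely some $c$ with $a \succeq_i c$ and $c \succeq_j b$. The natural choice is $c = \top(j)$: we already have $a \succeq_i \top(j)$, and since $\top(j)$ is voter $j$'s top-ranked candidate, it weakly defeats every candidate in $j$'s vote, so in particular $\top(j) \succeq_j b$. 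Thus $c = \top(j)$ serves as the required witness and $(i,j) \in E_{a,b}$. This holds for \emph{every} $b$, so $G^{\elec}(a)$ is a common subgraph of all the $G^{\elec}(a,b)$.

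\medskip

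\textbf{Step 2: Deduce membership in the matching uncovered set.} Let $a$ be a candidate returned by \plumatching. By \Cref{cor:integral-domination}, the integral domination graph $G^{\elec}(a)$ admits a perfect matching $M : V \to V$. By Step 1, every edge of this matching is also an edge of $G^{\elec}(a,b)$ for every $b$, so the \emph{same} edge set $M$ is a perfect matching in each $G^{\elec}(a,b)$. Hence $G^{\elec}(a,b)$ admits a perfect matching for every $b \in C$, which is precisely the defining condition for $a$ to lie in the matching uncovered set. This proves the first sentence of the proposition.

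\medskip

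\textbf{Step 3: Non-emptiness.} For the second sentence, \Cref{cor:integral-domination} guarantees that there always exists a candidate $a$ whose integral domination graph admits a perfect matching, and \plumatching returns one such candidate. By Step 2 this candidate is in the matching uncovered set, so the set is non-empty. I do not expect any genuine obstacle here: the entire argument is a short deduction, because all the heavy lifting has already been done in proving the Ranking-Matching Lemma (\Cref{lem:rm-lemma}) and its \Cref{cor:integral-domination}. The only point requiring care is Step 1, and specifically the observation that because the matching edge set is \emph{the same} across all $b$, we get the stronger uniform guarantee rather than merely a (possibly different) matching for each $b$; this uniformity is exactly what distinguishes \plumatching's condition from the weaker matching uncovered set condition, and it is worth stating explicitly to make clear why the inclusion is strict in spirit.
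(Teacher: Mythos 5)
Your proposal is correct and follows essentially the same route as the paper: the paper also observes that since $\top(j)$ weakly defeats every candidate in $j$'s vote, the witness $c = \top(j)$ shows every edge of $G^{\elec}(a)$ lies in $G^{\elec}(a,b)$ for every $b$, so a single perfect matching in $G^{\elec}(a)$ serves as a perfect matching in all the separation graphs, and non-emptiness then follows from \Cref{cor:integral-domination}. Your explicit edge-by-edge write-up is just a more detailed rendering of the paper's prose argument.
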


When $\alpha=1$, \Cref{prop:plumatching-matchunc}, along with the proof by \citet{munagala2019improved} that every candidate in the matching uncovered set has distortion at most $3$, is sufficient to establish that \plumatching has distortion at most $3$. 

The benefits of our alternative analysis of \plumatching in \Cref{thm:det-upper-bound} are threefold.
First, our derivation of the upper bound on distortion is much simpler than the tedious derivation by \citet{munagala2019improved}. Second, as we show in Appendix~\ref{sec:fairness}, our analysis seamlessly extends to produce an upper bound on the so-called \emph{fairness ratio}~\cite{goel2017metric}, which is not known to be possible with the previous argument.
More importantly, our analysis yields an upper bound of $2+\alpha$ on distortion, which, for $\alpha < 1$, is stronger than the upper bound of $3$ established by \citet{munagala2019improved}. However, one might wonder whether the stronger guarantees of the Ranking-Matching Lemma are not really necessary to achieve this improved bound. Could it be the case that selecting any candidate from the matching uncovered set already achieves distortion at most $2+\alpha$, even when $\alpha < 1$? We answer this negatively by providing two different lower bounds in \Cref{prop:matchunc-bad,prop:condorcet-bad}. 

\begin{proposition}\label{prop:matchunc-bad}
    For every $m \ge 5$ and $\alpha \in [0,1]$, selecting an arbitrary candidate from the matching uncovered set yields distortion at least $2.5 + \alpha/2$ for $\alpha$-decisive metric spaces. 
\end{proposition}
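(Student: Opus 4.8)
The plan is to prove this as a lower bound by exhibiting, for each $m \ge 5$ and each $\alpha \in [0,1]$, a single election $\elec$, an $\alpha$-decisive metric $d \triangleright \vsigma$, and a distinguished candidate $a$ that (i) lies in the matching uncovered set of $\elec$ and (ii) satisfies $\SC(a,d) \ge (2.5 + \alpha/2)\,\min_{c} \SC(c,d)$. Since an adversary is free to have the rule return any member of the matching uncovered set, producing such an instance suffices. As in the proof of \Cref{thm:det-lower-bound}, I would first construct the instance for a convenient base number of candidates and then handle the remaining values of $m$ by padding with ``irrelevant'' candidates placed far from every voter and candidate; such candidates are ranked last by everyone, never enter the matching uncovered set, and do not affect the social costs of $a$ or the optimum, so the bound transfers to all $m \ge 5$.

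The construction must make $a$ simultaneously costly in the metric and ordinally central enough to survive in the matching uncovered set, exploiting exactly the gap between the (strong) integral-domination condition and the (weak) separation-graph condition from \Cref{subsec:connection-to-the-mus}. Concretely, I would let $a$ be the top choice of no voter and place it far from all voters in the metric, and I would single out a block $S$ of voters who rank every high-plurality candidate above $a$ and only a few low-plurality ``bridge'' candidates below it. Then $\defvotese{a}{S}{\elec}$ consists only of $a$ together with those bridge candidates, so its total plurality is strictly less than $|S|$; by \Cref{lem:general-halls-condition} in its integral form (\Cref{cor:integral-domination}) this means the integral domination graph $G^\elec(a)$ has no perfect matching, so \plumatching can never return $a$ and the $2+\alpha$ analysis of \Cref{thm:det-upper-bound} does not apply to $a$, leaving room for a larger ratio. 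To keep $a$ in the matching uncovered set, the bridge candidates are ranked high by the remaining voters: because an edge of the separation graph $G^\elec(a,b)$ may route through any intermediate candidate, a bridge candidate that $a$ weakly defeats in an $S$-vote and that a right-voter $j$ ranks above $b$ still produces the needed edge.

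With the profile fixed, the two things to verify are membership and the ratio. For membership I would check Hall's condition in each separation graph $G^\elec(a,b)$ directly: voters who rank $a$ high connect to almost all right-vertices, and for a left-subset drawn from $S$ the neighborhood is controlled by the set $\{\,j : \text{some } c \in \defvotese{a}{S}{\elec} \text{ has } c \succeq_j b\,\}$, which the high-ranked bridge candidates keep of size at least $|S|$ for every opponent $b$; this reduces to a finite case analysis over the few voter ``types'' and over the worst-case $b$. For the ratio I would specify the $\alpha$-decisive metric as shortest-path distances on a weighted graph in which each voter sits at distance $\alpha$ from its top candidate, the optimum $o$ is adjacent to the bulk of the voters, and $a$ is pushed as far as consistency and $\alpha$-decisiveness allow; a direct computation of $\SC(a,d)$ and $\SC(o,d)$ then yields the ratio $2.5 + \alpha/2$. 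As a sanity check this equals $3$ at $\alpha = 1$, matching the known tightness of the matching uncovered set, and $2.5$ at $\alpha = 0$, strictly above the $2+\alpha$ guarantee of \plumatching for every $\alpha < 1$.

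The main obstacle is certifying membership in the matching uncovered set for \emph{every} opponent $b$ while keeping $a$ both ordinally weak (so that the block $S$ breaks the integral-domination condition) and metrically expensive (so that the ratio reaches $2.5 + \alpha/2$). The tension is entirely in the bridge candidates: they must carry \emph{low} plurality, so that $\defvotese{a}{S}{\elec}$ stays lighter than $|S|$, yet be ranked \emph{high} by the other voters, so that $G^\elec(a,b)$ remains matchable for the worst $b$. Balancing these two requirements with no slack to spare is what forces the construction to use at least five candidates, and pinning down the extremal metric by driving the relevant triangle inequalities and decisiveness constraints to equality is the remaining bookkeeping; I expect the padding argument for general $m$ and the final social-cost computation to be routine once the profile and bridge structure are in place.
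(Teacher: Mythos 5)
Your overall strategy is exactly the paper's: exhibit one small election together with an $\alpha$-decisive consistent metric in which a specific candidate $a$ lies in the matching uncovered set yet has social cost $(2.5+\alpha/2)$ times the optimum, then pad with far-away candidates for larger $m$. Your structural analysis is also on target: the paper's witness $a$ indeed has zero plurality, its integral domination graph fails Hall's condition (one voter ranks $a$ above only a candidate that nobody ranks first), and membership in the matching uncovered set is rescued precisely by low-plurality intermediate candidates that other voters rank highly --- your ``bridge'' candidates. The sanity checks ($3$ at $\alpha=1$, $2.5$ at $\alpha=0$) are correct, and the padding argument is sound since a universally last-ranked candidate has an edgeless separation graph against every opponent and so never enters the matching uncovered set.

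The gap is that you never actually produce the instance, and for this proposition the instance \emph{is} the proof. You defer ``the profile and bridge structure'' and ``pinning down the extremal metric'' as bookkeeping, but the existence of a profile that simultaneously (i) keeps $a$ matchable in $G(a,b)$ for \emph{every} opponent $b$, (ii) breaks the integral-domination condition for $a$, and (iii) admits an $\alpha$-decisive consistent metric realizing the ratio exactly $2.5+\alpha/2$ is not evident from the desiderata alone --- it must be exhibited and verified. For reference, the paper uses three voters over $\{a,b,c,d,e\}$ with $\sigma_1: b \succ e \succ c \succ a \succ d$, $\sigma_2: c \succ d \succ b \succ a \succ e$, $\sigma_3: d \succ a \succ c \succ b \succ e$, checks a perfect matching in each of $G(a,b), G(a,c), G(a,d), G(a,e)$ explicitly, and realizes the metric as shortest paths on a $7$-cycle, giving $\SC(a)=5+\alpha$ against $\SC(c)=2$. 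Until you write down such a profile and metric and carry out the four matching verifications and the two social-cost computations, the claim is a plausible plan rather than a proof.
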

\begin{proof}
	Below, we produce an election with $m=5$ candidates in which some candidate from the matching uncovered set yields approximation ratio at least $2.5+\alpha/2$. For $m > 5$, we can simply add dummy candidates that are sufficiently far from all the voters. 
	
	Consider an election $\elec = (V,C,\vsigma)$ with $5$ candidates ($C = \set{a, b, c, d, e}$) and $3$ voters ($V = \{1, 2, 3\}$) having the following preferences:
    \[\begin{array}{ll}
          \sigma_1: & b \succ e \succ c \succ a \succ d\\
          \sigma_2: & c \succ d \succ b \succ a \succ e\\
          \sigma_3: & d \succ a \succ c \succ b \succ e
    \end{array}\]
    First, we argue that $a$ is in the matching uncovered set. \Cref{fig:seperation-graphs} shows separation graphs $G(a,b)$, $G(a,c)$, $G(a,d)$, and $G(a,e)$, with a perfect matching in each highlighted with thick red edges. 
    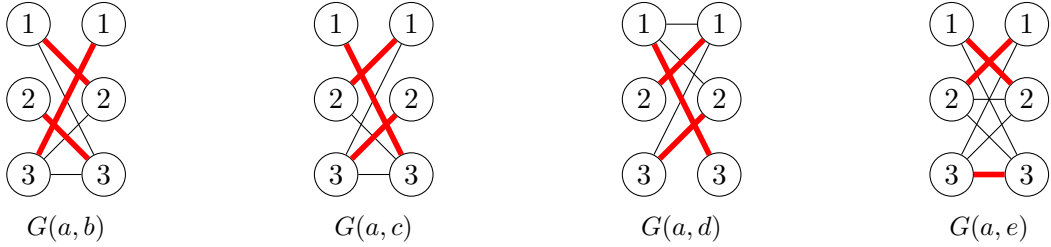
\begin{figure}
        \captionsetup[subfigure]{labelformat=empty}
        \begin{subfigure}[b]{0.24\textwidth}
            \centering
            \begin{tikzpicture}
                \tikzstyle{main_node} = [circle,fill=white,draw,minimum size=1.5em,inner sep=0pt]

                \node[main_node] (il) at (0, 0) {$1$};
                \node[main_node] (jl) at (0, -1) {$2$};
                \node[main_node] (kl) at (0, -2) {$3$};
                \node[main_node] (ir) at (1, 0) {$1$};
                \node[main_node] (jr) at (1, -1) {$2$};
                \node[main_node] (kr) at (1, -2) {$3$};

                \draw (il)--(kr);
                \draw (kl)--(jr);
                \draw (kl)--(kr);
                \draw (il)--(jr) [draw=red, line width=.75mm];
                \draw (jl)--(kr) [draw=red, line width=.75mm];
                \draw (kl)--(ir) [draw=red, line width=.75mm];
            \end{tikzpicture}
            \caption{$G(a, b)$}
        \end{subfigure}
        \begin{subfigure}[b]{0.24\textwidth}
            \centering
            \begin{tikzpicture}
                \tikzstyle{main_node} = [circle,fill=white,draw,minimum size=1.5em,inner sep=0pt]

                \node[main_node] (il) at (0, 0) {$1$};
                \node[main_node] (jl) at (0, -1) {$2$};
                \node[main_node] (kl) at (0, -2) {$3$};
                \node[main_node] (ir) at (1, 0) {$1$};
                \node[main_node] (jr) at (1, -1) {$2$};
                \node[main_node] (kr) at (1, -2) {$3$};

                \draw (jl)--(kr);
                \draw (kl)--(ir);
                \draw (kl)--(kr);
                \draw (il)--(kr) [draw=red, line width=.75mm];
                \draw (jl)--(ir) [draw=red, line width=.75mm];
                \draw (kl)--(jr) [draw=red, line width=.75mm];
            \end{tikzpicture}
            \caption{$G(a, c)$}
        \end{subfigure}
        \begin{subfigure}[b]{0.24\textwidth}
            \centering
            \begin{tikzpicture}
                \tikzstyle{main_node} = [circle,fill=white,draw,minimum size=1.5em,inner sep=0pt]

                \node[main_node] (il) at (0, 0) {$1$};
                \node[main_node] (jl) at (0, -1) {$2$};
                \node[main_node] (kl) at (0, -2) {$3$};
                \node[main_node] (ir) at (1, 0) {$1$};
                \node[main_node] (jr) at (1, -1) {$2$};
                \node[main_node] (kr) at (1, -2) {$3$};

                \draw (il)--(ir);
                \draw (il)--(jr);
                \draw (kl)--(ir);
                \draw (il)--(kr) [draw=red, line width=.75mm];
                \draw (jl)--(ir) [draw=red, line width=.75mm];
                \draw (kl)--(jr) [draw=red, line width=.75mm];
            \end{tikzpicture}
            \caption{$G(a, d)$}
        \end{subfigure}
        \begin{subfigure}[b]{0.24\textwidth}
            \centering
            \begin{tikzpicture}
                \tikzstyle{main_node} = [circle,fill=white,draw,minimum size=1.5em,inner sep=0pt]

                \node[main_node] (il) at (0, 0) {$1$};
                \node[main_node] (jl) at (0, -1) {$2$};
                \node[main_node] (kl) at (0, -2) {$3$};
                \node[main_node] (ir) at (1, 0) {$1$};
                \node[main_node] (jr) at (1, -1) {$2$};
                \node[main_node] (kr) at (1, -2) {$3$};

                \draw (il)--(kr);
                \draw (jl)--(jr);
                \draw (jl)--(kr);
                \draw (kl)--(ir);
                \draw (kl)--(jr);
                \draw (il)--(jr) [draw=red, line width=.75mm];
                \draw (jl)--(ir) [draw=red, line width=.75mm];
                \draw (kl)--(kr) [draw=red, line width=.75mm];
            \end{tikzpicture}
            \caption{$G(a, e)$}
        \end{subfigure}
        \caption{Separation graphs in the proof of \Cref{prop:matchunc-bad}. Thick red edges show perfect matchings.}
        \label{fig:seperation-graphs}
    \end{figure}

	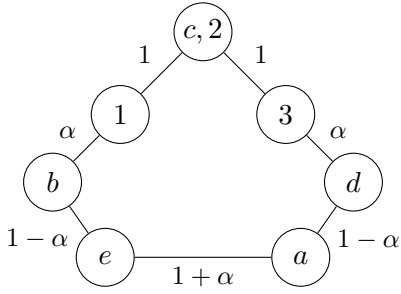
\begin{figure}
        \centering
        \begin{tikzpicture}
            \tikzstyle{main_node} = [circle,fill=white,draw,minimum size=2em,inner sep=0pt]

            \node[main_node] (cj) at (0, 0) {$c, 2$};
            \node[main_node] (i) at (-1.1, -1.1) {$1$};
            \node[main_node] (k) at (1.1, -1.1) {$3$};
            \node[main_node] (b) at (-2, -2) {$b$};
            \node[main_node] (d) at (2, -2) {$d$};
            \node[main_node] (e) at (-1.3, -3) {$e$};
            \node[main_node] (a) at (1.3, -3) {$a$};

            \draw (cj)--(k) node [midway, above right] {\small $1$};
            \draw (k)--(d) node [midway, above right] {\small $\alpha$};
            \draw (d)--(a) node [midway, below right] {\small $1 - \alpha$};
            \draw (a)--(e) node [midway, below] {\small $1 + \alpha$};
            \draw (e)--(b) node [midway, below left] {\small $1 - \alpha$};
            \draw (b)--(i) node [midway, above left] {\small $\alpha$};
            \draw (i)--(cj) node [midway, above left] {\small $1$};
        \end{tikzpicture}
        \caption{An $\alpha$-decisive metric space used in the proof of \Cref{prop:matchunc-bad}.}
        \label{fig:kamesh-lower-metric}
    \end{figure}

	However, consider the metric space shown in \Cref{fig:kamesh-lower-metric}; as usual, the distance between any pair of points is the shortest distance in the graph shown. It can be checked easily that this metric is $\alpha$-decisive and consistent with $\vsigma$. Note that under this metric, $\SC(a) = 5+\alpha$, whereas $\SC(c) = 2$. Thus, a rule picking $a$ would have distortion at least $(5+\alpha)/2 = 2.5+\alpha/2$. 
	
	A curious reader may check that in this example, the integral domination graph of $a$ does not admit a perfect matching, while those of $c$ and $d$ do. Thus, \plumatching may pick either of $c$ or $d$. While $c$ is optimal, $d$ has approximation ratio $2+\alpha$, matching our upper bound.
\end{proof}

In fact, for large $m$, we can prove a better lower bound on the distortion that can be achieved by picking an arbitrary candidate from the matching uncovered set, or even by picking a special candidate from this set that has received significant attention in the voting literature.

\begin{definition}
	Given an election $\elec = (V,C,\vsigma)$, we say that candidate $a$ is a \emph{Condorcet winner} if for every other candidate $b$, there are at least $n/2$ voters (a majority) who prefer $a$ to $b$. A Condorcet winner is not guaranteed to exist. A deterministic social choice rule is called \emph{Condorcet consistent} if it outputs a Condorcet winner on every election in which one exists. 
\end{definition}

It is easy to argue that every Condorcet winner $a$ is in the matching uncovered set.\footnote{This is because for every other candidate $b$, a perfect matching in $G(a,b)$ can be obtained easily by matching the (at most $n/2$) voters on the right who prefer $b$ over $a$ to (some of at least $n/2$) voters on the left who prefer $a$ over $b$, and matching the remaining voters on the right to the remaining voters on the left arbitrarily.} Thus, any lower bound on distortion that applies to social choice rules that pick a Condorcet winner when one exists (i.e. to Condorcet consistent rules) also applies to picking an arbitrary candidate from the matching uncovered set. The following result establishes such a lower bound.

\begin{proposition}\label{prop:condorcet-bad}
	For every $m \ge 2$ and $\alpha \in [0,1]$, the distortion of every deterministic Condorcet consistent rule is at least $3-\frac{4(3+\alpha)}{m+2+m\alpha}$ for $\alpha$-decisive metric spaces. The same lower bound applies to selecting an arbitrary candidate from the matching uncovered set. 
\end{proposition}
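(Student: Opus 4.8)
The plan is to exhibit, for every even $m$, a single election with a Condorcet winner together with an $\alpha$-decisive metric consistent with the profile under which the Condorcet winner has cost a factor $3 - \frac{4(3+\alpha)}{m+2+m\alpha}$ worse than the true optimum; since every Condorcet winner lies in the matching uncovered set, the same bound transfers to that setting. For odd $m$ I would, as in \Cref{thm:det-lower-bound}, add one irrelevant candidate placed far from everyone, so it suffices to treat the even case and verify the claimed expression matches after the reduction.

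First I would construct the profile so that one candidate, say $a$, is a Condorcet winner but is ``spatially central'' in a way that makes it costly. The natural template, mirroring the construction of \Cref{thm:det-lower-bound} and \Cref{prop:matchunc-bad}, is to split the remaining candidates into symmetric groups each of which is the genuine local optimum for a block of voters, while $a$ beats each of them in pairwise majority comparisons by a razor-thin margin. Concretely I expect to take $m-1$ ``cluster'' candidates, have the voters' top choices spread across these clusters so that no single alternative is close to a majority, and arrange $a$ to be ranked just above the midpoint against every opponent, securing the Condorcet property. The $\alpha$-decisiveness constraint is handled exactly as before: each voter sits at distance $\alpha$ from her top candidate and distance $1$ (or a shortest-path distance determined by the graph) from the others, and the metric is defined as shortest-path distances in an explicit weighted graph, which automatically satisfies the triangle inequality.

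Next I would compute the two social costs. The Condorcet winner $a$, being central, will incur a cost that is roughly the sum over all voters of a ``one hop to a cluster plus one hop in'' distance, scaling like $2+\alpha$ per voter up to the correction term, whereas the true optimum—one of the cluster candidates that is the top choice of its own block and cheaply reachable from the others—incurs cost on the order of one per voter. Taking the ratio and simplifying, using that there are $\Theta(m)$ voters partitioned into the clusters and that the $\alpha$-weighted edges contribute the $m\alpha$ terms in the denominator, should yield exactly $3-\frac{4(3+\alpha)}{m+2+m\alpha}$. The bookkeeping here is routine but must be done carefully so that the constants $4$, $3+\alpha$, and $m+2+m\alpha$ come out precisely; I would verify the formula against the small cases $m=2$ and $m=3$, where it should degrade to the known $1+2\alpha$ bound, as the authors remark earlier for \Cref{thm:det-lower-bound}.

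The main obstacle I anticipate is \emph{simultaneously} enforcing the Condorcet property and the $\alpha$-decisiveness while keeping the cost gap as large as the claimed bound: making $a$ a Condorcet winner forces many voters to rank $a$ above their genuinely-closest non-top candidates, and these ordinal constraints must be realizable by some $\alpha$-decisive shortest-path metric without inadvertently making $a$ cheap or making the optimum expensive. Getting the pairwise majorities to hold \emph{with the right tightness}—just over half for each opponent—so that $a$ is forced to be selected by every Condorcet consistent rule, and then showing the optimum really is a factor $3-o(1)$ cheaper, is the delicate part; the rest reduces to verifying consistency and triangle inequality of the explicit graph metric and performing the arithmetic that produces the stated closed form.
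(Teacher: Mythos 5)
Your proposal identifies the right general strategy (an explicit profile, a shortest-path metric, a Condorcet winner that is ``spatially central''), but it stops exactly where the proof begins: no concrete profile, no verification of the Condorcet property, no verification of $\alpha$-decisiveness and consistency, and no arithmetic. The paper's construction is quite specific and differs from what you envision. It uses $k=m-2$ ``middle'' candidates $c_1,\ldots,c_k$ plus two poles $a$ and $b$, with highly \emph{unequal} voter blocks: $|V_a|=2$, $|V_b|=k$, and a single voter per $c_r$. The optimum is not one of the cluster candidates (as you suggest) but the pole $b$, co-located with the largest voter block; $a$ is the unique Condorcet winner not by ``razor-thin majorities among roughly equal blocks'' but because every singleton voter at $c_r$ ranks $a$ second, giving $a$ exactly $k+1$ out of $2k+2$ votes against each $c_r$ and $k+2$ against $b$. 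The metric places $a$ at distance $\alpha$ from $V_a$, each $c_r$ at distance $1$ from $V_a$, and $b$ at distance $1+\alpha$ from each $c_r$, yielding $\SC(a)=3k+(3k+2)\alpha$ versus $\SC(b)=k+4+(k+2)\alpha$ and the exact ratio $3-\frac{4(3+\alpha)}{m+2+m\alpha}$. Without pinning down these block sizes and distances, the ``delicate part'' you flag remains entirely open.

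Two of your guiding heuristics are also off and would mislead the execution. First, the parity reduction you borrow from \Cref{thm:det-lower-bound} does not work here: the claimed bound has no $\efloor{m}$ in it, and building the instance for $m-1$ and adding a far-away dummy would only prove $3-\frac{4(3+\alpha)}{(m-1)+2+(m-1)\alpha}$, which is strictly weaker than the stated bound for odd $m$. The paper's construction needs no parity case split. Second, your sanity check that the formula ``should degrade to the known $1+2\alpha$ bound'' at $m=2,3$ is incorrect; at those values the expression $3-\frac{4(3+\alpha)}{m+2+m\alpha}$ is at most $1$ (e.g.\ it equals $1$ at $m=3$, $\alpha=1$), so the statement is vacuous there and the content of the proposition lies entirely in large $m$, where the bound tends to $3$ independently of $\alpha$. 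Checking against $1+2\alpha$ would have sent you looking for the wrong instance.
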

\begin{proof}
	Let $k = m-2$. Consider an election $\elec = (V,C,\vsigma)$, where the set of candidates is $C = \set{a,b,c_1,\ldots,c_k}$, and the set of voters is $V = \cup_{c \in C} V_c$ with $|V_a| = 2$, $|V_b| = k$, and $|V_c| = 1$ for each $c \in C\setminus\set{a,b}$. The preference profile $\vsigma$ is as follows. 
	\[\begin{array}{lll}
	\forall i \in V_a, &\sigma_i: & a \succ c_1 \succ \ldots \succ c_k \succ b\\
	\forall i \in V_b, &\sigma_i: & b \succ c_1 \succ \ldots \succ c_k \succ a\\
	\forall r \in [k], i \in V_{c_r}, &\sigma_i: & c_r \succ a \succ b \succ c_1 \succ \ldots \succ c_{r-1} \succ c_{r+1} \succ \ldots \succ c_k
	\end{array}\]	
	
	It is easy to check that in this preference profile, $a$ is the unique Condorcet winner. However, consider the metric space shown in \Cref{fig:condorcet-bad}. As usual, all pairwise distances are the corresponding shortest distances in the graph. 
	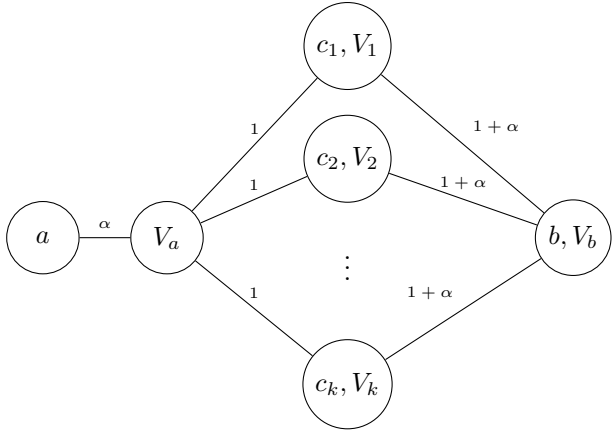
\begin{figure}
		\centering
		\begin{tikzpicture}[scale=1.5]
		\tikzstyle{main_node} = [circle,fill=white,draw,minimum size=2.5em,inner sep=3pt]
		
		\node[main_node] (a) at (-2.7, 0) {\small$a$};
		\node[main_node] (va) at (-1.6, 0) {\small$V_a$};
		\node[main_node] (c1) at (0, 1.7) {\small$c_1, V_1$};
		\node[main_node] (c2) at (0, .7) {\small$c_2, V_2$};
		\node[main_node] (ck) at (0, -1.3) {\small$c_k, V_k$};
		\node at (0, -.2) {\small$\vdots$};
		\node[main_node] (b) at (2, 0) {\small$b, V_b$};

		\draw (a) -- (va) node [midway, above] {\tiny $\alpha$};
		\draw (va) -- (c1) node [midway, above] {\tiny $1$};
		\draw (va) -- (c2) node [midway, above] {\tiny $1$};
		\draw (va) -- (ck) node [midway, above] {\tiny $1$};
		\draw (c1) -- (b) node [midway, above right] {\tiny $1 + \alpha$};
		\draw (c2) -- (b) node [midway, above] {\tiny $1 + \alpha$};
		\draw (ck) -- (b) node [midway, above left] {\tiny $1 + \alpha$};
		\end{tikzpicture}
		\caption{An $\alpha$-decisive metric space used in the proof of \Cref{prop:condorcet-bad}.}
		\label{fig:condorcet-bad}
	\end{figure}

	One can check that this is an $\alpha$-decisive metric space consistent with $\vsigma$. However, $\SC(a) = 2 \alpha + k \cdot (1+\alpha) + k \cdot (2+2\alpha) = 3k+(3k+2)\alpha$, whereas $\SC(b) = k \cdot (1+\alpha) + 2 \cdot (2+\alpha) = k+4+(k+2)\alpha$. Hence, the distortion of a rule that selects $a$ on this preference profile is at least
	\[
	\frac{3k+(3k+2)\alpha}{k+4+(k+2)\alpha} = 3-\frac{4(3+\alpha)}{m+2+m\alpha},
	\]
	where the last step uses the fact that $k=m-2$.
\end{proof}

\citet{anshelevich2015approximating} proved that picking a Condorcet winner, when one exists, guarantees $3$-approximation to social cost. Thus, if we were only interested in bounding the distortion for general metric spaces ($\alpha=1$), then being Condorcet consistent would not hurt: indeed, any social choice rule can be made Condorcet consistent by forcing it to pick a Condorcet winner on profiles where one exists without increasing its distortion.  

However, if our goal was to also achieve improved distortion bounds parametrized by $\alpha$, \Cref{prop:condorcet-bad} shows that being Condorcet consistent is at odds with this goal. When $m \to \infty$, the lower bound in \Cref{prop:condorcet-bad} converges to $3$ regardless of the value of $\alpha$. In contrast, recall that \plumatching achieves distortion $2+\alpha$ for any $m$. Thus, while \plumatching is not Condorcet consistent,\footnote{In Appendix~\ref{sec:condorcet}, we present a simple necessary condition for a candidate to be possibly returned by \plumatching, and a simpler example than the one used in the proof of \Cref{prop:condorcet-bad} where a Condorcet winner violates this condition.} this is necessary for it to achieve the improved distortion. Similarly, \Cref{prop:matchunc-bad,prop:condorcet-bad} also show that picking an arbitrary candidate from the matching uncovered set yields a worse parametrized distortion bound than that of \plumatching.

\section{The Distortion of Randomized Social Choice Rules}\label{sec:rand-dist}
\newcommand{\agps}{$\alpha$-\textsc{Generalized Proportional to Squares}\xspace}

In the previous section, we settled the question of determining the optimal distortion of deterministic social choice rules, and showed that it is $3$ for all $m \ge 2$. The obvious next step is to study the power of randomization in the metric distortion framework, and investigate whether it allows achieving any lower distortion. In this section, we show that randomization indeed helps, but leave open the question of determining the optimal distortion of randomized rules. 

Randomized rules were first considered in this framework by \citet{anshelevich2017randomized}. They studied a natural rule called \randdict, which chooses one of the voters uniformly at random, and outputs the top candidate of that voter; in other words, each candidate $a$ is chosen with probability equal to $\plu(a)/\sum_{c\in C} \plu(c)$, or $\plu(a)/n$. They proved that its distortion is $2+\alpha$.\footnote{Specifically, they proved a distortion bound of $2+\alpha-2/n$ for $n$ voters. But since we defined distortion using supremum over preference profiles of all sizes, it becomes $\sup_{n \ge 1} 2+\alpha-2/n = 2+\alpha$ in our framework. That said, the distortion of \randdict is strictly less than $3$ for any finite $n$, which is impossible for deterministic rules.} Given our result (\Cref{thm:det-upper-bound}) that \plumatching achieves distortion $2+\alpha$ deterministically, this raises the question of whether randomization helps. 

For general metric spaces, i.e., when $\alpha=1$, \citet{kempe2020communication} answered this question by proposing a different randomized voting rule which achieves a slightly better distortion of $3-2/m$. This rule, like \randdict, uses only the plurality scores of the candidates, and it is known that $3-2/m$ is the best possible distortion that any rule can achieve for $\alpha=1$ using only the plurality scores~\cite{gross2017vote,goel2017metric}. 

\emph{What about the case of $\alpha < 1$?} \citet{anshelevich2017randomized} proved that a randomized social choice rule, even using the full preference rankings, cannot achieve distortion better than $1+\alpha$ when $m \ge 2$. Interestingly, this bound is not met by \randdict even for the seemingly trivial case of $m=2$ candidates, where each vote is one of only two possible rankings. To bridge this gap, specifically for the case of $m=2$ candidates (say $a$ and $b$), they proposed a rule, \agps, which achieves the optimal distortion of $1+\alpha$. This rule chooses alternative $a$ with probability
\begin{equation}\label{eqn:agps}
	\frac{(1+\alpha) \cdot \plu(a)^2 - (1-\alpha) \cdot \plu(a) \cdot \plu(b)}{(1+\alpha) \cdot \left(\plu(a)^2 + \plu(b)^2\right) - 2(1-\alpha) \cdot \plu(a) \cdot \plu(b)},
\end{equation}
and $b$ with the remaining probability. 

Our main contribution in this section is twofold. We propose a novel randomized social choice rule, \smartdict, which achieves distortion at most $2+\alpha-2/m$ (\Cref{thm:smartdict}). This matches the distortion of $3-2/m$ achieved by \citet{kempe2020communication} for $\alpha=1$, but improves when $\alpha < 1$. \smartdict, like \randdict, \agps, and the rule proposed by \citet{kempe2020communication}, uses only the plurality scores. In fact, for $m=2$, it coincides with \agps and achieves the optimal distortion of $1+\alpha$. Further, we show that no randomized social choice rule which uses only the plurality scores can achieve distortion lower than $2+\alpha-2/m$ (\Cref{thm:smartdict-opt-plurality-scores}). Thus, \smartdict achieves the optimal distortion achievable using only plurality scores \emph{for every $\alpha$}. We also improve upon the $1+\alpha$ lower bound of \citet{anshelevich2017randomized} that applies to rules which are allowed to use the full preference rankings (\Cref{thm:randomized-lower-bound}).

Recall that in \Cref{thm:det-lower-bound}, we showed that deterministic rules cannot achieve distortion better than $2+\alpha-2(1-\alpha)/m$ for even $m$. Note that $2+\alpha-2/m$, which is the distortion achieved by \smartdict, is strictly smaller when $\alpha > 0$. Hence, this establishes a strict separation between deterministic and randomized rules for every $\alpha > 0$ (which was previously only established for $\alpha=1$), but leaves open the question when $\alpha = 0$.

\subsection{Randomized \smartdict Rule}

We begin by defining \smartdict.

\begin{definition}
Let $\elec = (V,C,\vsigma)$ be an election. If there exists a candidate $a$ with $\plu(a)\geq (1+\alpha) \cdot \frac{n}{2}$, \smartdict arbitrarily chooses one such candidate with probability $1$. Otherwise, it chooses each candidate $a$ with probability proportional to $\plu(a)/\left(n-\frac{2}{1+\alpha} \cdot  \plu(a)\right)$, i.e., with probability equal to
\begin{equation}\label{eq:prob-def}
\frac{\plu(a)}{\left(n-\frac{2}{1+\alpha} \cdot  \plu(a)\right)\sum_{c\in C} \frac{\plu(c)}{n-\frac{2}{1+\alpha} \cdot  \plu(c)}}.
\end{equation}
\end{definition}

We leave it as an exercise to the reader to verify that the expression in \Cref{eq:prob-def} coincides with the expression in \Cref{eqn:agps} when $C = \set{a,b}$. We also note that for $\alpha=1$, \smartdict simply selects each candidate $a$ with probability proportional to $\plu(a)/(n-\plu(a)$. While the distortion of \smartdict for $\alpha=1$, proved in \Cref{thm:smartdict}, matches the distortion of the rule proposed by \citet{kempe2020communication}, the two rules are different when $m \ge 3$.

In order to prove the distortion bound for \smartdict, we make use of the following results by \citet{anshelevich2017randomized}.
\begin{lemma}\label{lem:known-dist-bound}[\citealp{anshelevich2017randomized}]
Let $\elec = (V,C,\vsigma)$ be an election induced by an $\alpha$-decisive metric space with distance metric $d$. Let $c^*$ be an optimal candidate. Then, the approximation ratio of a randomized social choice rule choosing each candidate $a$ with probability $\Pr[a]$ on this instance is at most
\[
1+\frac{(1+\alpha)\sum_{a\in C} \Pr[a] \cdot \left(n-\frac{2}{1+\alpha}\cdot  \plu(a)\right)\cdot d(c,c^*)}{\sum_{a\in C} \plu(a) \cdot d(a,c^*)}.
\]
\end{lemma}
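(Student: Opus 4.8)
The plan is to control the expected social cost $\sum_{a\in C}\Pr[a]\cdot\SC(a)$ against $\SC(c^*)$ using only the triangle inequality and $\alpha$-decisiveness. Concretely, I would first establish a per-candidate bound $\SC(a)\le \SC(c^*)+\left(n-\frac{2}{1+\alpha}\plu(a)\right)d(a,c^*)$, then replace $\SC(c^*)$ in the denominator of the ratio by the smaller quantity $\frac{1}{1+\alpha}\sum_{a}\plu(a)\,d(a,c^*)$. Averaging the per-candidate bound against $\Pr[\cdot]$ and combining these two facts yields the stated ratio (where the summand involves $d(a,c^*)$).

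First I would prove the auxiliary lower bound $\SC(c^*)\ge \frac{1}{1+\alpha}\sum_{a\in C}\plu(a)\,d(a,c^*)$. For any voter $i$ with $\top(i)=a$, the triangle inequality gives $d(a,c^*)\le d(i,a)+d(i,c^*)$, while $\alpha$-decisiveness gives $d(i,a)\le\alpha\,d(i,c^*)$; together these yield $d(i,c^*)\ge d(a,c^*)/(1+\alpha)$. Summing over all voters and grouping them by their top choice gives the claim, the bound being vacuous when $a=c^*$ since then $d(a,c^*)=0$.

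Next I would bound each candidate's cost by writing $\SC(a)-\SC(c^*)=\sum_{i}\bigl(d(i,a)-d(i,c^*)\bigr)$ and estimating each summand after splitting voters according to whether $\top(i)=a$. For the $n-\plu(a)$ voters with $\top(i)\ne a$, the triangle inequality gives $d(i,a)-d(i,c^*)\le d(a,c^*)$. For the $\plu(a)$ voters with $\top(i)=a$, $\alpha$-decisiveness gives $d(i,a)\le\alpha\,d(i,c^*)$, so $d(i,a)-d(i,c^*)\le-(1-\alpha)\,d(i,c^*)\le-\frac{1-\alpha}{1+\alpha}\,d(a,c^*)$, where the last step reuses the inequality from the previous paragraph. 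Adding the two groups and simplifying $1+\frac{1-\alpha}{1+\alpha}=\frac{2}{1+\alpha}$ produces exactly the coefficient $n-\frac{2}{1+\alpha}\plu(a)$. Taking the $\Pr[\cdot]$-weighted average and using $\sum_a\Pr[a]=1$ then gives $\sum_a\Pr[a]\,\SC(a)\le\SC(c^*)+\sum_a\Pr[a]\bigl(n-\frac{2}{1+\alpha}\plu(a)\bigr)d(a,c^*)$; dividing by $\SC(c^*)$ and substituting the lower bound from the first step finishes the argument.

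I do not expect a serious obstacle, as the proof is essentially mechanical once voters are grouped by their top choice — the negative contribution of the top-choice voters is precisely what produces the $\plu(a)$ savings and pins down the constant $2/(1+\alpha)$. The one point requiring care is the final substitution: replacing $\SC(c^*)$ by its smaller lower bound only preserves the inequality if the quantity it multiplies is nonnegative. This holds because optimality of $c^*$ gives $\SC(a)-\SC(c^*)\ge 0$, so the per-candidate bound forces $\bigl(n-\frac{2}{1+\alpha}\plu(a)\bigr)d(a,c^*)\ge 0$ for every $a$, hence the entire numerator is nonnegative and the substitution is valid.
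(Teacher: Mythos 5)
Your proof is correct. The paper does not prove this lemma itself --- it is imported from \citet{anshelevich2017randomized} --- but your argument reconstructs the standard one along exactly the lines the paper indicates: your per-candidate bound is precisely Proposition~\ref{prop:known-inequality}, and combining it with the auxiliary lower bound $\SC(c^*)\ge\frac{1}{1+\alpha}\sum_{a}\plu(a)\,d(a,c^*)$ (together with the nonnegativity check you correctly supply before substituting into the denominator) yields the stated ratio, where the $d(c,c^*)$ appearing in the displayed formula should indeed read $d(a,c^*)$ as you note.
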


As an intermediate step toward proving this lemma, they also showed the following.

\begin{proposition}\label{prop:known-inequality}[\citealp{anshelevich2017randomized}]
Let $\elec = (V,C,\vsigma)$ be an election induced by an $\alpha$-decisive metric space with distance metric $d$. Let $c^*$ be an optimal candidate. Then, for every candidate $a$,
\[
\SC(a) \leq \SC(c^*)+\left(n-\frac{2}{1+\alpha} \cdot \plu(a)\right) \cdot d(a,c^*).
\]
\end{proposition}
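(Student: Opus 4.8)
The plan is to bound $\SC(a) = \sum_{i \in V} d(i,a)$ by comparing each term $d(i,a)$ to $d(i,c^*)$ through the triangle inequality, but to handle the voters whose top choice is $a$ separately, since $\alpha$-decisiveness makes $a$ unusually close to exactly those voters. If $a = c^*$ the claim is trivial (both sides equal $\SC(c^*)$ and $d(a,c^*)=0$), so I would assume $a \neq c^*$ and set $V_a = \set{i \in V : \top(i) = a}$, so that $|V_a| = \plu(a)$.

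For a voter $i \notin V_a$ I would use only the plain triangle inequality $d(i,a) \le d(i,c^*) + d(a,c^*)$, which after summing contributes $\SC$-terms plus a coefficient of $n - \plu(a)$ in front of $d(a,c^*)$. The whole point is that the voters in $V_a$ should contribute a \emph{negative} coefficient, and enough of one to produce the claimed savings. For $i \in V_a$ I have two facts available: $\alpha$-decisiveness gives $d(i,a) \le \alpha \cdot d(i,c^*)$ (valid since $a = \top(i) \neq c^*$), and the triangle inequality in the reverse direction gives $d(a,c^*) \le d(i,a) + d(i,c^*)$. Multiplying this reverse inequality by $\tfrac{1-\alpha}{1+\alpha}$ and adding $d(i,a)$ to both sides yields $d(i,a) + \tfrac{1-\alpha}{1+\alpha} d(a,c^*) \le \tfrac{2}{1+\alpha} d(i,a) + \tfrac{1-\alpha}{1+\alpha} d(i,c^*)$, using $1 + \tfrac{1-\alpha}{1+\alpha} = \tfrac{2}{1+\alpha}$. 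Substituting $d(i,a) \le \alpha \cdot d(i,c^*)$ into the first term on the right collapses the right-hand side to $\tfrac{2\alpha}{1+\alpha}d(i,c^*) + \tfrac{1-\alpha}{1+\alpha}d(i,c^*) = d(i,c^*)$, giving the per-voter bound $d(i,a) \le d(i,c^*) - \tfrac{1-\alpha}{1+\alpha} d(a,c^*)$.

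Summing the two cases over all voters then gives $\SC(a) \le \SC(c^*) + \left[(n - \plu(a)) - \tfrac{1-\alpha}{1+\alpha}\plu(a)\right] \cdot d(a,c^*)$, and the bracketed coefficient simplifies, via $1 + \tfrac{1-\alpha}{1+\alpha} = \tfrac{2}{1+\alpha}$, to exactly $n - \tfrac{2}{1+\alpha}\plu(a)$, which is the desired inequality.

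I expect the only genuine obstacle to be identifying the correct multiplier $\tfrac{1-\alpha}{1+\alpha}$ in the second step: it is precisely the value for which the decisiveness substitution makes the two coefficients on $d(i,c^*)$ sum to $1$, collapsing the estimate cleanly to $d(i,c^*)$ and producing the $\tfrac{2}{1+\alpha}$ savings per top-choice voter. Everything else is routine triangle-inequality bookkeeping, and the degenerate case $a = c^*$ must simply be noted up front so that the application of $\alpha$-decisiveness with $c = c^*$ is legitimate.
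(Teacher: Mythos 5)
Your proof is correct. Note that the paper itself does not prove this proposition---it imports it verbatim from \citet{anshelevich2017randomized}---so there is no in-paper argument to compare against; your derivation is a complete, self-contained proof and follows the same standard route as the original: plain triangle inequality for the $n-\plu(a)$ voters outside $V_a$, and for each $i\in V_a$ the combination of $d(i,a)\le\alpha\, d(i,c^*)$ with the reverse triangle inequality $d(a,c^*)\le d(i,a)+d(i,c^*)$, weighted by $\tfrac{1-\alpha}{1+\alpha}$ so the coefficients on $d(i,c^*)$ sum to $1$ and each such voter saves $\tfrac{1-\alpha}{1+\alpha}\,d(a,c^*)$, yielding the coefficient $n-\tfrac{2}{1+\alpha}\plu(a)$. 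Your handling of the degenerate case $a=c^*$ (needed so that $\alpha$-decisiveness is applied only to $c^*\neq\top(i)$) is the right precaution; the argument also correctly never uses optimality of $c^*$, which is consistent with how the paper later applies the bound.
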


We also prove the following lemma that will help us simplify our proof of the distortion bound. 

\begin{lemma}\label{lem:summation-bound}
For any $w \in (0,1]$, $n\in \mathbb{N}$, and $(x_i)_{i=1}^m\in [0, n/w)^m$ such that $\sum_{i=1}^m x_i = n$,
\[\sum_{i=1}^m \frac{x_i}{n-w \cdot x_i} ~\geq~ \frac{m}{m-w}. \]
\end{lemma}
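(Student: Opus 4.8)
The plan is to recognize this as a straightforward convexity (Jensen) argument, where the summand, viewed as a function of a single variable, is convex on the relevant domain, so that the sum subject to the linear constraint $\sum_i x_i = n$ is minimized at the uniform point $x_i = n/m$.

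Concretely, I would define $f(x) = \frac{x}{n - wx}$ on the interval $[0, n/w)$ and compute its derivatives. A direct calculation gives $f'(x) = \frac{n}{(n-wx)^2}$ and then $f''(x) = \frac{2nw}{(n-wx)^3}$. Since every $x_i$ lies in $[0,n/w)$ we have $n - wx > 0$ on the domain, and since $n \in \bbN$ and $w > 0$, this shows $f''(x) > 0$, so $f$ is strictly convex on $[0, n/w)$.

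With convexity established, Jensen's inequality applied to the points $x_1,\dots,x_m$ yields
\[
\frac{1}{m}\sum_{i=1}^m f(x_i) \;\ge\; f\!\left(\frac{1}{m}\sum_{i=1}^m x_i\right) \;=\; f\!\left(\frac{n}{m}\right),
\]
where the last equality uses the constraint $\sum_{i=1}^m x_i = n$. It remains only to evaluate $f$ at the uniform point: $f(n/m) = \frac{n/m}{\,n - w n/m\,} = \frac{1}{\,m - w\,}$, so that $\sum_{i=1}^m f(x_i) \ge m \cdot \frac{1}{m-w} = \frac{m}{m-w}$, which is exactly the desired bound.

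I expect there to be essentially no substantive obstacle here; the only points requiring a word of care are the two bookkeeping facts that make Jensen applicable, namely that $f$ is defined and convex on all of $[0, n/w)$ (guaranteed by $x_i < n/w$) and that the averaging point $n/m$ itself lies strictly inside this domain so that $f(n/m)$ is well-defined. The latter follows from $w \in (0,1]$ together with $m \ge w$ (in the regime of interest $m \ge 2$, so $w/m \le 1/2 < 1$ and hence $n/m < n/w$), which also ensures $m - w > 0$ so the right-hand side is a genuine positive bound.
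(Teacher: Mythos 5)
Your proof is correct, but it takes a genuinely different route from the paper. The paper proves the lemma by a smoothing/exchange argument: it considers the set of minimizing vectors, picks a leximin-optimal one, and shows via the AM--GM inequality that averaging any two unequal coordinates does not increase the sum while strictly improving the leximin order, forcing the minimizer to be the uniform vector $x_i = n/m$. You instead observe that $f(x) = x/(n-wx)$ satisfies $f''(x) = 2nw/(n-wx)^3 > 0$ on $[0,n/w)$, so strict convexity plus Jensen's inequality immediately gives $\sum_i f(x_i) \ge m\, f(n/m) = m/(m-w)$. Your derivative computations and the evaluation $f(n/m) = 1/(m-w)$ are all correct, and the domain bookkeeping is actually even easier than you suggest: since all $x_i$ lie in the interval $[0,n/w)$, their average $n/m$ lies there automatically, so no separate appeal to $m \ge 2$ is needed. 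Your approach is shorter and sidesteps a subtlety in the paper's argument (the paper implicitly assumes a minimizer exists over the non-compact domain $[0,n/w)^m$, which Jensen's inequality does not require); the paper's exchange argument is more elementary in that it avoids invoking convexity machinery, but at the cost of the leximin tie-breaking device needed to make the smoothing step terminate in a contradiction.
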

\begin{proof}
We will show that the summation on the left hand side of the lemma's statement is minimized when $x_i=n/m$ for all $i\in \set{1,\dots,m}$, which will imply that 
\[
\sum_{i=1}^m \frac{x_i}{n-w \cdot x_i} ~\geq~ \sum_{i=1}^m \frac{n/m}{n-w \cdot n/m} ~=~ \sum_{i=1}^m \frac{1}{m-w} ~=~ \frac{m}{m-w}.
\]

Let $\mathcal{X}$ denote the set of feasible vectors $x = (x_i)_{i=1}^m$ that minimize the summation on the left. Aiming for a contradiction, assume that for every vector $x \in \mathcal{X}$, there exist two values, $x_j$ and $x_k$, such that $x_j\neq x_k$. Let $x^*$ be a vector in $\mathcal{X}$ that is leximin-optimal across all vectors in $\mathcal{X}$. That is, we choose $x^*$ from $\mathcal{X}$ by maximizing the smallest $x_i$, then breaking ties in favor of higher second-smallest $x_i$, then in favor of higher third-smallest $x_i$, and so on. In other words, we sort each vector in a non-decreasing order, and then compare them lexicographically, picking a maximum.

Take any two values $x^*_j$ and $x^*_k$ such that $x^*_j> x^*_k$, and consider their contribution to the sum $\sum_{i=1}^m \frac{x^*_i}{n-w x^*_i}$, which is equal to
\begin{equation}\label{eq:contribution}
\frac{x^*_j}{n-w x^*_j} + \frac{x^*_k}{n-w x^*_k} 
~=~ \frac{(n-w x^*_k)x^*_j+(n-w \cdot x^*_j)x^*_k}{(n-w x^*_j)(n-w x^*_k)}
~=~ \frac{n (x^*_j+x^*_k)-2 w x^*_k x^*_j}{n^2-n(x^*_j+x^*_k)+w^2 x^*_j x^*_k}.
\end{equation}
Now, consider an alternative vector $x' = (x'_i)_{i=1}^m$ such that $x'_j=x'_k=(x^*_j+x^*_k)/2$ and $x'_i = x^*_i$ for all $i\neq j,k$. Clearly, these values also satisfy $\sum_{i=1}^m x'_i = n$ since this was satisfied by $x^*$, only values of $j$ and $k$ changed, and $x'_j + x'_k = x^*_j+x^*_k$. Also, since $x^* \in [0,n/w)^m$ and $x^*_k < x'_j = x'_k < x^*_j$, we have that $x' \in [0,n/w)^m$ (thus $x'$ is a feasible vector) and that $x'$ is strictly better in the leximin-ordering than $x^*$. 

Finally, due to the arithmetic mean-geometric mean inequality, 
\[
\sqrt{x^*_j x^*_k} \leq \frac{x^*_j + x^*_k}{2} ~\Rightarrow~  x^*_j x^*_k \leq \left(\frac{x^*_j + x^*_k}{2}\right)^2 = x'_j x'_k.
\]
Therefore, $x^*_j+x^*_k = x'_j + x'_k$ and $x^*_j x^*_k \leq x'_j  x'_k$. By observing the right hand side of \Cref{eq:contribution}, and noting the fact that the numerator is decreasing in $x^*_j x^*_k$ and the denominator is decreasing in $x^*_j x^*_k$, we conclude that
\begin{align*}
&\phantom{\Rightarrow} \frac{n (x^*_j+x^*_k)-2 w x^*_k x^*_j}{n^2-n(x^*_j+x^*_k)+w^2 x^*_j x^*_k} ~\leq~ \frac{n (x'_j+x'_k)-2 w x'_k x'_j}{n^2-n(x'_j+x'_k)+w^2 x'_j x'_k} \\
&\Rightarrow \frac{x^*_j}{n-w x^*_j} + \frac{x^*_k}{n-w x^*_k} ~\leq~ \frac{x'_j}{n-w x'_j} + \frac{x'_k}{n-w x'_k} \\ 
&\Rightarrow \sum_{i=1}^m \frac{x^*_i}{n-w x^*_i} ~\leq~ \sum_{i=1}^m \frac{x'_i}{n-w x'_i}.
\end{align*}

Hence, $x'$ is a feasible vector minimizing the summation, and is therefore in $\mathcal{X}$. Further, it is strictly better than $x^*$ in the leximin ordering, which contradicts the fact that $x^*$ is leximin-optimal among vectors in $\mathcal{X}$. 
\end{proof}

Using these results, we are now ready to prove the improved distortion bound that we can achieve using \smartdict. 

\begin{theorem}\label{thm:smartdict}
	For every $m \ge 2$ and $\alpha \in [0,1]$, \smartdict has distortion at most $2+\alpha-2/m$ for $\alpha$-decisive metric spaces. 
\end{theorem}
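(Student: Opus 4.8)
The plan is to split into two cases according to the definition of \smartdict, and in the main case to combine the three preparatory results by exploiting a cancellation in the bound of \Cref{lem:known-dist-bound}. Throughout, fix an $\alpha$-decisive metric $d$ consistent with $\vsigma$, let $c^*$ be an optimal candidate, and abbreviate $w = \frac{2}{1+\alpha}$, so that $w \in [1,2]$ for $\alpha \in [0,1]$ and, crucially, $(1+\alpha)\,w = 2$.

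First I would dispose of the easy case where some candidate $a$ has $\plu(a) \ge (1+\alpha)\cdot \tfrac{n}{2} = n/w$, in which \smartdict returns such an $a$ deterministically. Here \Cref{prop:known-inequality} gives $\SC(a) \le \SC(c^*) + \left(n - w\cdot\plu(a)\right) d(a,c^*)$, and since $\plu(a) \ge n/w$ the coefficient $n - w\cdot\plu(a)$ is nonpositive, so $\SC(a) \le \SC(c^*)$. Thus $a$ is optimal and the approximation ratio is $1$, which is at most $2+\alpha-2/m$ for every $m \ge 2$ and $\alpha \in [0,1]$.

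The substantive case is when every candidate satisfies $\plu(a) < n/w$, so \smartdict selects each $a$ with probability $\Pr[a] = \frac{1}{Z}\cdot\frac{\plu(a)}{\,n - w\cdot\plu(a)\,}$, where $Z = \sum_{c\in C}\frac{\plu(c)}{\,n-w\cdot\plu(c)\,}$ is the normalizer. The key observation is the cancellation $\Pr[a]\,\bigl(n - w\cdot\plu(a)\bigr) = \plu(a)/Z$. Substituting this into \Cref{lem:known-dist-bound} makes the factor $\bigl(n - w\cdot\plu(a)\bigr)$ disappear entirely: the numerator $(1+\alpha)\sum_{a} \Pr[a]\bigl(n - w\cdot\plu(a)\bigr) d(a,c^*)$ collapses to $\frac{1+\alpha}{Z}\sum_{a}\plu(a)\,d(a,c^*)$, which is exactly $\frac{1+\alpha}{Z}$ times the denominator $\sum_{a}\plu(a)\,d(a,c^*)$. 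Hence the approximation ratio is bounded by $1 + \frac{1+\alpha}{Z}$, reducing the whole problem to a lower bound on $Z$.

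Finally I would apply \Cref{lem:summation-bound} with $x_c = \plu(c)$ — valid since $\sum_c x_c = n$ and each $x_c \in [0, n/w)$ by the case assumption — to obtain $Z \ge \frac{m}{m-w}$. Plugging this in and using $(1+\alpha)\,w = 2$ gives $1 + \frac{1+\alpha}{Z} \le 1 + (1+\alpha)\frac{m-w}{m} = 1 + (1+\alpha) - \frac{(1+\alpha)\,w}{m} = 2+\alpha-\frac{2}{m}$, as desired. The heavy lifting having been front-loaded into \Cref{lem:summation-bound}, I expect no genuine obstacle; the one point requiring care is that we invoke that lemma with $w = \frac{2}{1+\alpha}$, which exceeds $1$ whenever $\alpha < 1$. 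This is harmless, since the inequality $\sum_c x_c/(n-w\,x_c) \ge m/(m-w)$ in fact holds for every $w > 0$ on the admissible domain $x_c < n/w$: it is an immediate consequence of the convexity of $x \mapsto x/(n-wx)$ together with Jensen's inequality (equivalently, of the AM--GM argument already used in the proof of \Cref{lem:summation-bound}, which nowhere needs $w \le 1$).
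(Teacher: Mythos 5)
Your proposal is correct and follows essentially the same route as the paper's own proof: the same two-case split on whether some candidate has $\plu(a)\ge(1+\alpha)\tfrac{n}{2}$, the same cancellation when substituting the \smartdict probabilities into \Cref{lem:known-dist-bound}, and the same invocation of \Cref{lem:summation-bound} with $w=\tfrac{2}{1+\alpha}$. Your side remark that this $w$ exceeds the range $(0,1]$ stated in \Cref{lem:summation-bound} whenever $\alpha<1$, but that the lemma's proof never uses $w\le 1$ and only needs $x_i<n/w$, is a correct and worthwhile observation that the paper itself glosses over.
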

\begin{proof}
Let $\elec = (V,C,\vsigma)$ be an election. If there exists any candidate $a$ with $\plu(a) \geq (1+\alpha)\cdot \frac{n}{2}$, note that \smartdict arbitrarily chooses one such candidate with probability 1. However, using \Cref{prop:known-inequality}, we can conclude that for this candidate $c$, we have $\SC(c) \leq \SC(c^*)$, where $c^*$ is an optimal candidate. Hence, any such candidate is optimal as well, yielding an approximation ratio of $1$. 

If this is not the case, then we substitute the probability $\Pr[a]$ with which \smartdict chooses each candidate $a$ into \Cref{lem:known-dist-bound}, and get that the approximation ratio is at most
\begin{align*}
1+\frac{(1+\alpha)\sum_{a\in C}\plu(a) d(a,c^*)}{\left(\sum_{c\in C} \frac{\plu(c)}{n-\frac{2}{1+\alpha} \cdot \plu(c)}\right) \cdot \left(\sum_{a\in C}\plu(a) d(a,c^*)\right)} = 1+\frac{1+\alpha}{\sum_{c\in C} \frac{\plu(c)}{n-\frac{2}{1+\alpha} \cdot \plu(c)}}.
\end{align*}
Using \Cref{lem:summation-bound} with $w=\frac{2}{1+\alpha}$ and using the fact that $\plu(c) \in [0,(1+\alpha)\cdot\frac{n}{2})$ for each $c\in C$ and $\sum_{c\in C} \plu(c)=n$, we get that $\sum_{c\in C} \frac{\plu(c)}{n-\frac{2}{1+\alpha} \cdot  \plu(c)} \geq \frac{m}{m-\frac{2}{1+\alpha}}$. We therefore conclude that
\begin{align*}
\dist(\smartdict) ~\leq~ 1+(1+\alpha)\frac{m-\frac{2}{1+\alpha}}{m} ~=~ 2+\alpha - \frac{2}{m},
\end{align*}
as desired.
\end{proof}

A few remarks are in order. First, for $\alpha = 1$ (i.e.\ general metric spaces), the bound is $3-2/m < 3$, which shows a separation between randomized and deterministic rules for every $m \ge 2$. This is despite \smartdict only accessing plurality votes, while the deterministic rules being allowed to access the full ranked preference profile.\footnote{As mentioned earlier, this was also established by \citet{kempe2020communication}.} Second, as mentioned earlier, even as a function of $\alpha$, the upper bound of $2+\alpha-2/m$ for \smartdict is still better than the lower bound of $2+\alpha-2(1-\alpha)/m$ for deterministic rules when $m > 3$ is even and $\alpha > 0$. However, whether randomized rules can truly outperform deterministic rules when $\alpha = 0$, or when $\alpha = 1$ but $m \to \infty$, is still open.

\subsection{Lower Bounds}
Note that both \smartdict and \randdict access only the plurality scores of the candidates, in contrast to the deterministic \plumatching from the previous section which requires knowledge of the full preference profile. We first show that among randomized rules which only take the plurality scores as input, \smartdict is in fact optimal, even as a function of $\alpha$. This parametrizes the lower bound of $3-2/m$ derived in prior work for $\alpha=1$~\cite{gross2017vote,goel2017metric}.

\begin{theorem}\label{thm:smartdict-opt-plurality-scores}
	For every $m \geq 2$ and $\alpha \in [0,1]$, no randomized social choice rule with access to only plurality scores has distortion better than $2+\alpha-2/m$ for $\alpha$-decisive metric spaces.
\end{theorem}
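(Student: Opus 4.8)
The plan is to use a Yao-style symmetrization argument that exploits the fact that the rule observes \emph{only} plurality scores. First I would restrict attention to elections in which all $m$ candidates have identical plurality scores --- for instance $n = m$ voters, each candidate being the top choice of exactly one voter. On every such election the rule receives literally the same input, namely the all-ones plurality vector on $C$, and therefore must output a single fixed distribution $(p_1,\dots,p_m)$ over the candidates, regardless of the underlying metric or the full preference profile. Since $\sum_k p_k = 1$ over $m$ candidates, some candidate $a_{k^*}$ receives probability $p_{k^*} \ge 1/m$.

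Next I would construct, for each candidate $a_k$, an $\alpha$-decisive instance $I_k$ consistent with uniform plurality in which $a_k$ is the unique \emph{expensive} candidate while the remaining $m-1$ candidates are all cheap and optimal. The gadget, a shortest-path metric on a weighted graph in the spirit of the construction in \Cref{thm:det-lower-bound}, places $a_k$ at distance $\alpha$ from its single supporter, who in turn sits at distance $1$ from a common cluster point $P$; the other $m-1$ candidates together with their $m-1$ supporters are all placed at $P$. A short computation then gives $\SC(a_j) = 1$ for each cheap candidate $a_j$ (only the one far supporter contributes) and $\SC(a_k) = \alpha + (m-1)(1+\alpha) = (m-1) + \alpha m$ for the expensive one, so the optimum is $1$ and the expensive-to-optimal ratio is $m-1+\alpha m$. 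Crucially, in every $I_k$ each candidate is the top choice of exactly one voter, so the plurality input is unchanged.

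Then I would combine these. Because the rule's distribution is identical across all $I_k$, I apply it to the single instance $I_{k^*}$ whose expensive candidate is the most likely one. The expected social cost is $p_{k^*}\cdot\big((m-1)+\alpha m\big) + (1-p_{k^*})\cdot 1$, while the optimum is $1$, so the distortion is $1 + p_{k^*}(m-2+\alpha m) \ge 1 + (m-2+\alpha m)/m = 2+\alpha - 2/m$, as required; the case $m=2$ reduces to $1+\alpha$ and is handled uniformly.

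For the main obstacle I would flag two points. The conceptual one is recognizing that under uniform plurality one \emph{cannot} isolate a single cheap candidate while keeping the rest expensive: any voter within distance $\delta$ of a candidate has its own top candidate within distance $2\delta$, so cheap candidates necessarily come in clusters. This is precisely why the correct gadget inverts the deterministic intuition, using \emph{one} expensive candidate and $m-1$ cheap ones. The technical nuance is that positive-definiteness forbids distinct candidates and voters sitting at exactly the same point $P$; I would resolve this by spreading the cluster into an infinitesimal ball of radius $\epsilon$ (supporter $j$ at distance $\alpha\epsilon$ from $a_j$ and $\epsilon$ from the other cluster candidates, which preserves $\alpha$-decisiveness and the induced rankings) and taking the supremum as $\epsilon \to 0$, which recovers the stated bound since distortion is defined as a supremum.
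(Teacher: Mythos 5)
Your proof is correct and takes essentially the same approach as the paper's: restrict to instances with uniform plurality scores so the rule must output a fixed distribution, apply the pigeonhole principle, and then exhibit an adversarial $\alpha$-decisive shortest-path metric consistent with that plurality vector. The only difference is a mirror image of the gadget --- the paper makes the candidate chosen with probability at most $1/m$ the unique optimum and all others costly, whereas you make the candidate chosen with probability at least $1/m$ the unique costly one --- and both computations land exactly on $2+\alpha-2/m$.
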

\begin{proof}
Let $f$ be any randomized social choice rule. Fix any $m$. Consider an instance with $n=m$ voters in which every candidate has plurality score $1$, i.e., $\plu(c) = 1$ for each $c \in C$. By the pigeonhole principle, there exists a candidate $c^*$ that $f$ picks with probability at most $1/m$ on this instance. 

We now construct an underlying metric space consistent in which each candidate has plurality score $1$, but choosing $c^*$ with probability at most $1/m$ leads to bad distortion. 

For the voter $i$ whose top choice is $c^*$, we let $d(i,c^*)=0$ and $d(i,c)=1$ for every $c\neq c^*$. For every other voter $i$, we let $d(i,\top(i)) = \alpha$, $d(i,c^*) = 1+\alpha$, and $d(i,c) = 2+\alpha$ for all $c \neq \top(i),c^*$. It is easy to check that this forms a metric (let all other distances be the induced ``shortest path'' distances). Note that the social cost of candidate $c^*$ is
$\SC(c^*) = \sum_{i\in V} d(i,c^*) = m-1$, whereas for any other candidate $b \neq c^*$, 
\begin{align*}
\SC(b) &= \sum_{i\in V} d(i,b) = \sum_{i\in V: \top(i)=b} d(i,b) + \sum_{i\in V: \top(i)=c^*} d(i,b) + \sum_{i\in V: \top(i)\neq b,c^*} d(i,b)\\
&= \alpha +  (1+\alpha) + (m-2) \cdot (2+\alpha).
\end{align*}

It is easy to verify that for all $m\geq 2$ we have $\SC(c^*) \le \SC(b)$ for all $b\neq c^*$. Since $f$ chooses $c^*$ with probability at most $1/m$ and candidates from $C\setminus \{c^*\}$ with the remaining probability of at least $(m-1)/m$, its distortion is at least
\begin{align*}
\frac{\frac{1}{m} \cdot \SC(c^*)+ \frac{m-1}{m} \cdot \SC(b)}{\SC(c^*)} &= \frac{(m-1)+(m-1)(\alpha+(1+\alpha)+(m-2)(2+\alpha))}{m \cdot (m-1)} \\
&= \frac{2m+m\alpha-2}{m}\\
&= 2+\alpha-\frac{2}{m},
\end{align*}
as desired.
\end{proof}

In terms of general lower bounds, when the randomized rule is given access to the full preference profile, \citet{anshelevich2017randomized} showed that no randomized rule can achieve distortion better than $1+\alpha$ for $m \ge 2$. Our next result provides an improved bound, which matches the bound of $1+\alpha$ for $m=2$ (which is optimal for $m=2$), but increases to $(3+\alpha)/2$ as $m \to \infty$.

\begin{theorem}\label{thm:randomized-lower-bound}
	For every $m \ge 2$ and $\alpha \in [0,1]$, no randomized social choice rule has distortion better than $\frac{3+\alpha}{2} - \frac{1-\alpha}{\floor{m}_{even}}$ for $\alpha$-decisive metric spaces, where $\floor{m}_{even}$ is the largest even integer that smaller than or equal to $m$. 
\end{theorem}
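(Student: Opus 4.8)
The plan is to establish a lower bound on the distortion of \emph{any} randomized rule by exhibiting a hard family of instances and applying Yao's principle: I will construct a distribution over metric spaces (all consistent with a single, fixed preference profile) such that no matter which distribution over candidates the rule commits to, its expected social cost is large relative to the optimum. Since the rule sees only the preference profile $\vsigma$, fixing $\vsigma$ and randomizing the underlying metric forces the rule's output distribution to be fixed across all the hard metrics, and I can then average the rule's performance over the metrics.

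The concrete construction I would use mirrors the symmetric two-block structure from the proof of \Cref{thm:det-lower-bound}. Take $m = 2\ell$ (handling odd $m$ by adding one far-away irrelevant candidate, which changes the denominator from $m$ to $\floor{m}_{even}$), with voters $V = \set{1,\ldots,2\ell}$ and candidates $C = \set{a_1,\ldots,a_{2\ell}}$, where voter $i$ ranks $a_i$ first. I would build a collection of $2\ell$ metric spaces, one ``targeting'' each candidate $a_k$: in the metric $d_k$, candidate $a_k$ is placed so as to be cheap (social cost near optimal), while every other candidate $a_j$ ($j \ne k$) incurs a large social cost of roughly $2+\alpha$ times the optimum, using the same gadget as before (voter $i$ at distance $\alpha$ from its top choice, distance $1$ to the central cluster). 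Because all $2\ell$ metrics induce the \emph{same} preference profile $\vsigma$ by symmetry, the rule must output one fixed distribution $(\Pr[a_j])_j$. Averaging the rule's expected cost over the uniform choice of target $k$, the pigeonhole/averaging argument shows that on at least one $d_k$ the rule places probability at most $1/(2\ell)$ on the cheap candidate $a_k$, so it pays close to the full $2+\alpha$ penalty on a $(2\ell-1)/(2\ell)$ fraction of its mass.

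The arithmetic I expect to carry out is exactly the convex combination seen in \Cref{thm:smartdict-opt-plurality-scores}: with $\SC(a_k) = \mathrm{OPT}$ on the good candidate and $\SC(a_j) \approx (2+\alpha)\,\mathrm{OPT}$ on the bad ones, putting probability at most $1/m$ on the good candidate yields expected ratio at least $\tfrac{1}{m}\cdot 1 + \tfrac{m-1}{m}\cdot(2+\alpha)$-type expression, which I would tune via the distances so that it collapses to $\tfrac{3+\alpha}{2} - \tfrac{1-\alpha}{m}$ for even $m$. The factor of $\tfrac{1}{2}$ (rather than the full $2+\alpha$) arises because, unlike the plurality-only lower bound, here the optimum candidate also sits at genuine nonzero distance from many voters, so the ratio is ``diluted''; getting the coefficients of $\alpha$ and the $1/m$ correction term to match the stated bound will require choosing the cluster distances carefully.

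\textbf{The main obstacle} I anticipate is not the averaging step but verifying that a \emph{single} family of metrics simultaneously (i) stays $\alpha$-decisive, (ii) remains consistent with the fixed symmetric profile $\vsigma$ under all $2\ell$ target choices, and (iii) makes the good candidate in each $d_k$ genuinely near-optimal while forcing every other candidate to cost the full penalty. Balancing the distances so that the optimum is achieved at the intended candidate in each metric (and not accidentally at some central point or another candidate) is the delicate part; the triangle inequality must be checked for the induced shortest-path metric, and the decisiveness constraint $d(i,\top(i)) \le \alpha\, d(i,c)$ ties the two block distances together, which is precisely what produces the $(1-\alpha)/\floor{m}_{even}$ correction.
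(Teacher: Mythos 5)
Your high-level strategy---fix one preference profile, exhibit several consistent $\alpha$-decisive metrics, and exploit the fact that the rule's output distribution cannot depend on which metric is realized---is the right one and is indeed what the paper does. The gap is in the instantiation. You want $m$ metrics $d_1,\dots,d_m$, all consistent with the same profile, such that in $d_k$ the single candidate $a_k$ is near-optimal while \emph{every} other candidate costs roughly $(2+\alpha)\cdot\mathrm{OPT}$, and then to apply pigeonhole to the $1/m$ mass on $a_k$. Such a family cannot exist: if it did, every rule---in particular \randdict, which on your instance ($n=m$ voters, each with a distinct top choice) places exactly $1/m$ on each candidate---would have ratio at least $\frac{1}{m}+\frac{m-1}{m}(2+\alpha)=2+\alpha-\frac{1+\alpha}{m}$ on some $d_k$. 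For $\alpha<1$ this exceeds the known upper bound of $2+\alpha-\frac{2}{n}$ for \randdict, a contradiction, and for $\alpha=1$ it would match \smartdict's upper bound of $3-2/m$ and thereby resolve the open problem the paper explicitly leaves open. Relatedly, your own arithmetic does not close: $\frac{1}{m}+\frac{m-1}{m}(2+\alpha)$ is strictly larger than the target $\frac{3+\alpha}{2}-\frac{1-\alpha}{m}$ for all $m\ge 3$, and no tuning of ``cluster distances'' in the two-block gadget converts a $1/m$ pigeonhole with penalty $2+\alpha$ into the claimed bound; the step you defer (``choosing the cluster distances carefully'') is exactly where the argument fails.

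The paper's proof avoids per-candidate targeting entirely. It reuses the two-block construction of \Cref{thm:det-lower-bound} verbatim and observes that it admits two symmetric consistent $\alpha$-decisive completions: one in which all $\ell=m/2$ candidates of block $A$ are \emph{simultaneously} optimal (they are co-located) and every candidate of $B$ has ratio $2+\alpha-2(1-\alpha)/m$, and the mirror image. Any randomized rule puts mass at least $1/2$ on one block; on the metric where the \emph{other} block is optimal it therefore pays at least $\frac{1}{2}\bigl(2+\alpha-\frac{2(1-\alpha)}{m}\bigr)+\frac{1}{2}=\frac{3+\alpha}{2}-\frac{1-\alpha}{m}$. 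The factor $\frac{1}{2}$ thus comes from the probability mass split between two blocks of co-located, jointly optimal candidates---not, as you conjecture, from the optimum ``sitting at nonzero distance''---and this is precisely why the achievable randomized penalty is roughly half the deterministic one.
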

\begin{proof}
We use the same construction that we used in the proof of \Cref{thm:det-lower-bound}. For even $m$, recall that the construction had two equal-sized groups of candidates $A$ and $B$. There were two possible consistent $\alpha$-decisive metric spaces, one in which each candidate in $A$ is optimal and each candidate in $B$ is $2+\alpha-2(1-\alpha)/m$ times worse, and the other in which the roles of $A$ and $B$ are reversed. 

Due to the symmetry, assume that a given randomized rule places a total probability of at least $0.5$ on candidates from $A$. Then, in the latter metric space, the rule chooses an optimal candidate with probability at most $0.5$ and a candidate with approximation ratio $2+\alpha-2(1-\alpha)/m$ with the remaining probability of at least $0.5$. Therefore, its distortion is at least 
\[\frac{1}{2} \left(2+\alpha-\frac{2(1-\alpha)}{m}\right)+\frac{1}{2} = \frac{3+\alpha}{2} - \frac{1-\alpha}{m}.\]

For odd $m$, we use the same trick as in the proof of \Cref{thm:det-lower-bound}: we use the construction for $m-1$, and adding a candidate sufficiently far away, which leads to the claimed bound.
\end{proof}

One of the most compelling open problems is whether there exists a randomized rule that can break the barrier of $2+\alpha$ even as $m \to \infty$. When $m\to \infty$, the distortion of $2+\alpha$ can be achieved by both \randdict and \smartdict in a randomized fashion, as well as deterministically by \plumatching. \Cref{thm:randomized-lower-bound} implies that a distortion bound better than $\frac{3+\alpha}{2}$ would not be possible in this case, leaving some room for improvement. Of particular interest is the case of $\alpha=1$ for which the best possible distortion is in $[2,3]$. 

In order to achieve a bound better than $2+\alpha$, we need a rule that is randomized and, as Theorem~\ref{thm:smartdict-opt-plurality-scores} shows, accesses more information regarding the preference profile than just the plurality scores of the candidates. A tempting thought is that an appropriate (randomized) combination of a known randomized rule with our deterministic \plumatching rule, which depends on a lot more information beyond the plurality scores, could be successful. The following lower bound shows that this approach cannot yield improved bounds when \plumatching is combined with a randomized rule that only places positive probability on candidates with positive plurality score (as both \randdict and \smartdict do). This suggests that any successful solution would need to, at least for some instances, assign positive probability to candidates that have zero plurality score and do not satisfy the conditions of the \plumatching rule.

\begin{theorem}\label{thm:mix_lb}
For any $\alpha \in [0,1]$, every randomized social choice rule that randomizes over only candidates that could be returned by \plumatching and candidates with non-zero plurality score has distortion at least $2+\alpha-2/(m-2)$ for $\alpha$-decisive metric spaces.
\end{theorem}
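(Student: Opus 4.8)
The plan is to follow the template of \Cref{thm:det-lower-bound,thm:smartdict-opt-plurality-scores}: exhibit an election in which the (unique) optimal candidate is \emph{forbidden} for the rule, so that all of the rule's probability mass is forced onto candidates of high social cost. Concretely, I will build, for every large $N$, an $\alpha$-decisive election $\elec$ on $m$ candidates together with a consistent metric in which (i) one candidate $c^*$ is optimal, has zero plurality score, and cannot be returned by \plumatching; (ii) a second candidate $d$ is likewise forbidden and arbitrarily costly; and (iii) the remaining $m-2$ candidates $b_1,\dots,b_{m-2}$ each have social cost at least $\bigl(2+\alpha-\tfrac{2}{m-2}\bigr)\SC(c^*)-O(1)$. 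Since $c^*$ and $d$ are neither returnable by \plumatching nor of positive plurality, any admissible rule must place all its probability on $\{b_1,\dots,b_{m-2}\}$, so its expected cost is at least $\min_j\SC(b_j)$; dividing by $\SC(c^*)=\mathrm{OPT}$ and letting $N\to\infty$ (distortion is a supremum over profiles of all sizes) yields the bound.

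For the construction, partition the $N$ voters into $m-2$ equal groups, the voters of group $j$ ranking $b_j$ first. The metric is the shortest-path metric of a weighted graph with $c^*$ as a hub: every voter is joined to $c^*$ by an edge of length $1$ and to its own top choice $b_j$ by an edge of length $\alpha$, which forces $d(i,b_{j'})=2+\alpha$ for $b_{j'}\neq\top(i)$ and $d(i,c^*)=1$. Thus a typical voter's order is $\top(i)\succ c^*\succ(\text{the rest})$, giving $\SC(c^*)=N$ and $\SC(b_j)\approx \tfrac{N}{m-2}\bigl[\alpha+(2+\alpha)(m-3)\bigr]=\bigl(2+\alpha-\tfrac{2}{m-2}\bigr)N$. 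Candidate $d$ is attached to $c^*$ by a single very long edge, so it is everybody's last choice and hugely costly. Finally I single out one voter $v$ of group $1$ and add length-$1$ edges from $v$ to every $b_{j'}$ with $j'\neq 1$; this leaves all the distances above essentially unchanged (the detour through $v$ has length $3>2+\alpha$) but lets $v$ rank all of $b_1,\dots,b_{m-2}$ above $c^*$, breaking the distance-$1$ ties in their favour, so that $v$'s order is $b_1\succ\cdots\succ b_{m-2}\succ c^*\succ d$.

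The verification has three parts. First, \emph{forbiddenness}: $c^*$ and $d$ have zero plurality, and for $v$ we have $\defvotese{c^*}{\{v\}}{\elec}=\{c^*,d\}$ and $\defvotese{d}{\{v\}}{\elec}=\{d\}$, both of zero plurality, so $\plu(\defvotese{c^*}{\{v\}}{\elec})=0<1$ and likewise for $d$; by \Cref{lem:general-halls-condition,cor:integral-domination} neither integral domination graph has a perfect matching, hence neither candidate can be returned by \plumatching. Second, \emph{optimality of $c^*$}: $\SC(c^*)=N$, while $\SC(d)$ is arbitrarily large and each $\SC(b_j)\ge(1+\alpha)N-O(1)\ge N$ for $m\ge 4$ and $N$ large, so $c^*$ is optimal (the cases where the claimed bound is at most $1$, namely $m\le 3$ and $m=4,\alpha=0$, are vacuous). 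Third, the \emph{ratio}: the single special voter $v$ contributes only $1$ rather than $2+\alpha$ to each $\SC(b_{j'})$, an $O(1/N)$ perturbation, so $\min_j\SC(b_j)/\SC(c^*)=2+\alpha-\tfrac{2}{m-2}-O(1/N)$, and taking $N\to\infty$ completes the argument.

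The main obstacle I anticipate is reconciling the two competing demands on $c^*$: to be optimal it must be ranked high (hence close) by essentially all voters, yet to be non-returnable it must be ranked below every positive-plurality candidate by some voter, which by the triangle inequality pushes it far from that voter. The resolution is to let a \emph{single} voter $v$ do this job; because $v$'s contribution to every social cost is diluted by a factor $1/N$, it renders $c^*$ forbidden while costing only a vanishing additive term. The secondary design point is the decoy $d$: it occupies one candidate slot so that the competitive candidates number exactly $m-2$, which is precisely what upgrades the easy denominator of $2+\alpha-\tfrac{2}{m-1}$ (obtained without $d$) to the claimed $2+\alpha-\tfrac{2}{m-2}$.
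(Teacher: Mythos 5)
Your construction is correct and is essentially the paper's own: a hub metric with the optimal candidate $c^*$ at the center having zero plurality, made non-returnable by a single special voter who demotes it to the bottom of her ranking, a decoy candidate occupying one slot so that only $m-2$ competitive candidates remain, and a computation showing each of those costs $2+\alpha-\tfrac{2}{m-2}$ times the optimum in the limit. The only cosmetic differences are that the paper gives the decoy plurality score $1$ (as the special voter's top choice) and rules out $a^*$ via the plurality-versus-veto necessary condition rather than via Hall's condition on the singleton $\{v\}$ with tie-breaking, but both routes are valid and yield the same bound.
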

\begin{proof}
We consider an instance with a set of $m=\ell +2$ candidates, $\{a_1, a_2, \dots, a_{\ell+1}, a^*\}$, and a set of voters that are partitioned into $\ell+1$ subsets, $V_1, V_2, \dots, V_{\ell+1}$. The set $V_{\ell+1}$ contains a single voter, and each of the other $\ell$ sets contain exactly $k$ voters, for some arbitrarily large $k\in \mathbb{N}$, so the total number of voters is $n=k\ell+1$. The distances of the voters from the candidates are implied by shortest path distances in the graph of Figure~\ref{fig:LP-for-mix}, so for the voters of each group $V_i$ for $i \leq \ell$, their ranking is $a_i$ first, $a^*$ second, and $a_{\ell+1}$ last, with some arbitrary ordering of the remaining candidates in-between $a^*$ and $a_{\ell+1}$. On the other hand, the ranking of the single voter in $V_{\ell+1}$ is $a_{\ell+1}$ first, $a^*$ last, and some arbitrary ordering of the remaining candidates in-between.

Note that candidate $a^*$ has the optimal social cost, with $\SC(a^*)=k\ell+3+\alpha$, but the plurality score of this candidate is zero. So, this candidate will never be chosen by a social choice rule that randomizes over only candidates with non-zero plurality score. Also, note that the voter in $V_{\ell+1}$ ranks candidate $a^*$ last, and no voter ranks $a^*$ first. In other words it's \emph{veto score} is less than its plurality score. We show in \Cref{lem:plu-veto} in \Cref{sec:condorcet} that this implies $a^*$ cannot be chosen by \plumatching.

Given the symmetry of the instance, it is easy to verify that the social cost among the remaining candidates is at least as much as the social cost of $a_1$. The social cost of this candidate is $\SC(a_1)=k \alpha + (\ell-1)k(2+\alpha) + 2=\ell k \alpha +2(\ell-1)k+2$. We therefore conclude that the distortion for this instance will be at least
\[\frac{\SC(a_1)}{\SC(a^*)}= \frac{\ell k \alpha +2(\ell-1)k+2}{k\ell+3+\alpha},\]
 which for $k\to \infty$ becomes
\[\frac{\ell \alpha +2(\ell-1)}{\ell} =2+\alpha-\frac{2}{\ell}=2+\alpha-\frac{2}{m-2}.\qedhere\]
\end{proof}

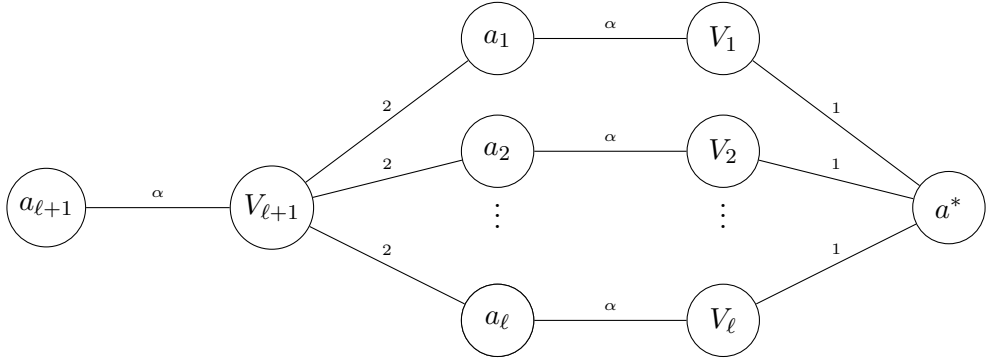
\begin{figure}
        \begin{center}
            \begin{tikzpicture}[scale=1.5]
                \tikzstyle{main_node} = [circle,fill=white,draw,minimum size=2.5em,inner sep=3pt]

                \node[main_node] (a1) at (0,0) {$a_1$};
                \node[main_node] (a2) at (0, -1) {$a_2$};
                \node at (0, -1.5) {\vdots};
                \node[main_node] (al) at (0, -2.5) {$a_\ell$};
                \node[main_node] (al+1) at (0, -2.5) {$a_\ell$};

                \node[main_node] (va1) at (2,0) {$V_1$};
                \node[main_node] (va2) at (2, -1) {$V_2$};
                \node at (2, -1.5) {\vdots};
                \node[main_node] (val) at (2, -2.5) {$V_\ell$};

                \node[main_node] (p) at (4, -1.5) {$a^*$};
                \node[main_node] (val+1) at (-2, -1.5) {$V_{\ell+1}$};
                \node[main_node] (al+1) at (-4, -1.5) {$a_{\ell+1}$};

                \draw (p) -- (va1) node [midway, above] {\tiny $1$};
                \draw (p) -- (va2) node [midway, above] {\tiny $1$};
                \draw (p) -- (val) node [midway, above] {\tiny $1$};

                \draw (val+1) -- (a1) node [midway, above] {\tiny $2$};
                \draw (val+1) -- (a2) node [midway, above] {\tiny $2$};
                \draw (val+1) -- (al) node [midway, above] {\tiny $2$};
                \draw (val+1) -- (al+1) node [midway, above] {\tiny $\alpha$};

                \draw (a1) -- (va1) node [midway, above] {\tiny $\alpha$};
                \draw (a2) -- (va2) node [midway, above] {\tiny $\alpha$};
                \draw (al) -- (val) node [midway, above] {\tiny $\alpha$};

            \end{tikzpicture}
        \end{center}
        \caption{The metric space construction used in the proof of \Cref{thm:mix_lb}.}
        \label{fig:LP-for-mix}
    \end{figure}
\section{Other Related Work}\label{sec:related}
The parametrization using $\alpha$-decisiveness provides an interesting restriction on the underlying metric space, and the special case of $\alpha=0$ captures settings where the sets of voters and candidates coincide, i.e., $V=C$. This is known as \emph{peer selection}, which involves a set of individuals voting over themselves. Typically, peer selection is studied under binary votes with a focus on strategic manipulation~\cite{AFPT11}, whereas our focus is on determining the desired selection rule (disregarding any manipulations) under ranked votes. The case where $V=C$ is also widely studied in theoretical computer science under the name of the \emph{metric 1-median problem}: given a finite metric space $(\calM,d)$, find a point in $\calM$ with the least total distance to all points in $\calM$ (think $V = C = \calM$). Given all distances, the problem can be solved trivially in $O(|\calM|^2)$ time, and a $(1+\epsilon)$-approximation can be computed in $O(|\calM|/\epsilon^2)$ time through Monte-Carlo search~\cite{indyk1999sublinear,indyk2000high}. A different line of inquiry limits the number of pairwise distances the algorithm can query~\cite{chang2012some,chang2017lower,chang2017metric}; like in our metric distortion problem, the goal is to find the best approximation given limited information. 

Prior work in voting has also considered imposing other well-motivated restrictions on the preferences of the voters, e.g., by considering Euclidean metrics~\cite{sui2013multi,anshelevich2017randomized}, as well as metrics inducing single-peaked~\cite{Moul80} or single-crossing~\cite{gans1996majority,bredereck2013characterization} preferences. An alternative way of introducing more structure to the voter preferences is to assume the candidates are \emph{representative} of the voters, i.e., drawn i.i.d.\ from the pool of voters~\cite{cheng2017representative,cheng2017people}. Note that in this case, each candidate is co-located with some voter, which is reminiscent of the special case $\alpha=0$ discussed above. \citet{pierczynski2019approval} extend the idea of distortion to approval-based preferences, while \citet{BLSS19} and \citet{filos2019distortion} extend it to complex voting paradigms such as district elections and primaries. On the other end, prior work has also studied distortion in less constrained settings, e.g., ones that are not even bound by metric constraints~\cite{PR06,BCHL+15,CNPS17}.

Distortion minimization, at its core, is the simple idea of designing approximation algorithms that have access only to ordinal information about underlying cardinal values. One can therefore leverage the distortion framework to study a diverse set of problems or objectives. As an alternative objective, \citet{fain2019random} focus on minimizing \emph{squared distortion} of randomized rules, which is a notion that helps bound both the expectation and the variance of the approximation ratio of a rule at the same time. \citet{goel2017metric,goel2018relating} optimize the fairness ratio, which is a notion that is closely related to and always at least the distortion (in Appendix~\ref{sec:fairness}, we show that a simple modification of our distortion analysis can also be used to achieve a bound on the fairness ratio of our deterministic rule). The distortion framework has also already been used to study a variety of different problems: \citet{anshelevich2016blind,anshelevich2016truthful}, \citet{abramowitz2018graph}, and \citet{anshelevich2019tradeoff,anshelevich2018facility} design novel algorithms for problems such as matching, clustering, and facility location using only ordinal information. For the case of matching, recent work by \citet{ACGH20} also evaluated ordinal mechanisms compared to cardinal ones in strategic settings aiming to design truthful mechanisms.

Recent work has also considered moving beyond the cardinal versus ordinal comparison, aiming to get a more refined understanding of the value of information in different settings. \citet{chen2019favorite,fain2019random,gross2017vote} are interested in rules that elicit \emph{even less} information from the voters than their full ranked preferences. On the contrary, \citet{abramowitz2019passion} consider rules that have access to \emph{more} information than just the ranked preferences, specifically the ratio of distances of each voter to every pair of candidates. Extending this idea, \citet{kempe2020communication}, \citet{MPSW19}, \citet{MSW20}, and \citet{ABFV20} study the explicit tradeoff between communication requirements and distortion of voting rules. 
More broadly, the \emph{quantitative} evaluation of social choice rules using distortion, arguably a contribution of computer science to voting theory~\cite{PR06,BCHL+15}, is a recent approach, and stands in stark contrast to centuries-long study of social choice rules through an \emph{axiomatic} approach~\cite{BCEL+16}.

\section{Discussion}\label{sec:disc}

In this paper, we settled the optimal metric distortion conjecture by proving that the optimal distortion of deterministic social choice rules is $3$, and introducing a novel polynomial time computable rule \plumatching that achieves this. 

While the bound of $3$ is tight for general metric spaces, for the restrictive class of $\alpha$-decisive metric spaces we showed that \plumatching achieves distortion $2+\alpha$, and there is a lower bound of $2+\alpha-2(1-\alpha)/m$ for all deterministic rules. While this establishes \plumatching as optimal for all $\alpha$ when $m \to \infty$, and for all $m$ when $\alpha = 1$, it leaves open the question of bridging the bounds when $m$ is finite and $\alpha < 1$.  

On the randomized frontier, we proposed \smartdict, which achieves distortion $2+\alpha-2/m$, improving over the best known bound of $2+\alpha$ achieved by \randdict. This also established a separation between the power of randomized and deterministic rules when $\alpha > 0$. However, determining whether randomized rules can outperform deterministic rules for $\alpha = 0$, and analyzing the optimal distortion of randomized rules remain challenging open questions. 

Finally, the core strength behind our deterministic \plumatching rule is the novel Ranking-Matching Lemma we prove about matching rankings to candidates. This is an elementary lemma that could be of interest outside of the metric distortion framework as it induces a novel family of voting rules, some of which may have other interesting properties. For example, we defined the \unimatching rule in \Cref{sec:rm-lemma}, which returns a candidate satisfying the following guarantee: \emph{In any subset of $x\%$ of the votes, for any $x$, the selected (single) candidate weakly defeats at least $x\%$ of the candidates.} 

This guarantee is eerily reminiscent of another major open question in voting theory. Consider selecting a subset, or \emph{committee}, $W$ of $k$ candidates instead of a single candidate. This committee $W$ is said to be in the \emph{core} if no group $S$ of at least $n/k$ voters can find a candidate $c^*$ such that each voter $i \in S$ prefers $c^*$ to every candidate in the selected committee $W$. It is easy to check that this can be alternatively phrased as: \emph{In any subset of $n/k$ of the votes, the selected $k$ candidates weakly defeat all (i.e. at least $m$) candidates.} 

Thus, instead of finding one candidate that weakly defeats at least $m/k$ candidates in any $n/k$ votes, we want to find $k$ candidates that (collectively) weakly defeat all $m$ candidates in any $n/k$ votes. While it is known that such a committee does not always exist~\cite{jiang2020approximately}, the question remains open if we relax the size of the group from $n/k$ voters to $2n/k$ voters. Similarly, the question of whether a committee in the core always exists is also open (without any relaxations required) when voters have binary preferences over candidates~\cite{fain2018fair}. It is natural to think of recursively applying rules such as \unimatching or \plumatching to select committees. While the distortion of such a recursive application is easy to bound~\cite{goel2018relating}, whether it provides any core-like guarantees is an important question for the future.  
\appendix
\section*{Appendix}

Below, we provide the missing proofs as well as some additional observations. 

\section{Proof of \Cref{lem:general-halls-condition}}\label{sec:proof-halls}
\begin{proof}
	Let $\elec = (V,C,\vsigma)$ be an election, $p \in \Delta(V)$ and $q \in \Delta(C)$ be weight vectors, and $a \in C$ be a candidate. First, we want to show that $G^{\elec}_{p,q}(a)$ admits a fractional perfect matching if and only if $q(\defvotese{a}{S}{\elec}) \ge p(S)$ for all $S \subseteq V$. 
	
	This can be proved, similarly to the regular Hall's condition, through the max-flow min-cut theorem. Create a flow network $N$ shown in \Cref{fig:flow}. Note that we have converted all edges of $G^{\elec}_{p,q}(a)$ to directed edges from voters to candidates with infinite capacity, added a source node with an outgoing edge of capacity $p(i)$ to each voter $i$, and added a target node with an incoming edge of capacity $q(c)$ from each candidate $c$. 
	
	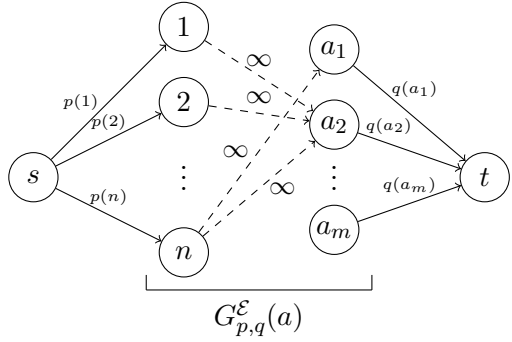
\begin{figure}
		\centering
		\begin{tikzpicture}
		\centering
		\tikzstyle{main_node} = [circle,fill=white,draw,minimum size=1.7em,inner sep=0pt]
		
		\node[main_node] (s) at (-2, 0) {$s$};
		\node[main_node] (1) at (0, 2) {$1$};
		\node[main_node] (2) at (0, 1) {$2$};
		\node[main_node] (n) at (0, -1) {$n$};
		\node[main_node] (a1) at (2, 1.7) {$a_1$};
		\node[main_node] (a2) at (2, .7) {$a_2$};
		\node[main_node] (am) at (2, -.7) {$a_m$};
		\node[main_node] (t) at (4, 0) {$t$};
		
		\node at (0, .1) {$\vdots$};
		\node at (2, .1) {$\vdots$};
		
		\draw[->] (s)--(1) node [midway,  left] {\tiny$p(1)$};
		\draw[->] (s)--(2) node [midway, above] {\tiny$p(2)$};
		\draw[->] (s)--(n) node [midway, above] {\tiny$p(n)$};
		\draw[->] (a1)--(t) node [near start, right] {\tiny$q(a_1)$};
		\draw[->] (a2)--(t) node [pos=.3, above] {\tiny$q(a_2)$};
		\draw[->] (am)--(t) node [midway, above] {\tiny$q(a_m)$};
		
		\draw[->, dashed] (1)--(a2) node [midway, above] {\small $\infty$};
		\draw[->, dashed] (2)--(a2) node [midway, above] {\small $\infty$};
		\draw[->, dashed] (n)--(a2) node [midway, right] {\small $\infty$};
		\draw[->, dashed] (n)--(a1) node [midway, left] {\small $\infty$};
		
		\draw (-.5,-1.3)--(-.5,-1.5);
		\draw (-.5, -1.5)--(2.5,-1.5) node [midway, below] {$G^\elec_{p,q}(a)$};
		\draw (2.5,-1.5)--(2.5,-1.3);
		
		\end{tikzpicture}
		\caption{Flow network used to prove a generalization of the Hall's condition in \Cref{lem:general-halls-condition}. Edges from source node $s$ to each voter $i$ has capacity $p(i)$, edges from each candidate $c$ to target $t$ has capacity $q(c)$, and edges from voters to candidates are as in $G^{\elec}_{p,q}(a)$ with infinite capacity.}
		\label{fig:flow}
	\end{figure}

	We first argue that this network admits a flow of value $1$ if and only if $G^{\elec}_{p,q}(a)$ admits a fractional perfect matching. If $N$ has a flow of value $1$, then $s$ has an outgoing flow of $1$ and $t$ has an incoming flow of $1$. But since $\sum_{i \in V} p(i) = \sum_{c \in C} q(c) = 1$, this means the node for each voter $i$ must have a flow of $p(i)$ passing through it, and the node of each candidate $c$ must have a flow of $q(c)$ passing through it. A fractional perfect matching in $G^{\elec}_{p,q}(a)$ can now be obtained by using the flow on each $(i,c)$ edge as the weight of the edge. Conversely, if $G^{\elec}_{p,q}(a)$ admits a fractional perfect matching $w$, then it is straightforward to see that having each $(s,i)$ edge carry a flow of $p(i)$, each $(i,c)$ edge carry a flow of $w(i,c)$, and each $(c,t)$ edge carry a flow of $q(c)$, for every $i \in V$ and $c \in C$, results in a valid flow of value $1$. 
	
	Next, because $p$ and $q$ are normalized weight vectors, the network has a flow of value $1$ if and only if its maximum flow is $1$. By the max-flow min-cut theorem, this is equivalent to its minimum cut capacity being $1$. Since $(\set{s},V \cup C \cup \set{t})$ is a trivial cut with capacity $1$, min cut being equal to $1$ is equivalent to every cut having capacity at least $1$. 
	
	We now show that every cut in this network has capacity at least $1$ if and only if $q(\defvotese{a}{S}{\elec}) \ge p(S)$ for each $S \subseteq V$. For the forward direction, suppose each cut has capacity at least $1$. Consider the cut $(\set{s} \cup S \cup \defvotese{a}{S}{\elec}, (V \setminus S) \cup (C \setminus \defvotese{a}{S}{\elec}) \cup \set{t})$. Because $\defvotese{a}{S}{\elec}$ is precisely the set of neighbors of $S$, there are no edges of $\infty$ capacity in this cut. The only edges part of the cut are those going from $s$ to nodes in $V \setminus S$ and those going from nodes in $\defvotese{a}{S}{\elec}$ to $t$. Thus, the capacity of this cut is $p(V \setminus S) + q(\defvotese{a}{S}{\elec})$, which should be at least $1$. Using $p(S) = 1-p(V\setminus S)$, we get the desired condition. 
	
	For the reverse direction, suppose $q(\defvotese{a}{S}{\elec}) \ge p(S)$ for each $S \subseteq V$. Consider any cut $(\set{s} \cup A, B \cup \set{t})$ of the network. If it contains any edge of $\infty$ capacity, then its capacity is clearly at least $1$. If it does not, then letting $S = A \cap V$, all neighbors of nodes in $S$ must also be in $A$. Hence, $\defvotese{a}{S}{\elec} \subseteq A$. Thus, the cut has capacity at least $p(V\setminus S) + q(\defvotese{a}{S}{\elec}) = 1-p(S)+q(\defvotese{a}{S}{\elec}) \ge 1$, as desired.
	
	We have thus established that $G^{\elec}_{p,q}(a)$ admits a fractional perfect matching if and only if the network constructed has max flow $1$, if and only if $q(\defvotese{a}{S}{\elec}) \ge p(S)$ for each $S \subseteq V$.
	
	Finally, since maximum flow value can be computed in strongly polynomial time, it follows that the existence of a fractional perfect matching in $G^{\elec}_{p,q}(a)$ can also be checked in strongly polynomial time.
\end{proof}

We remark that while \Cref{lem:general-halls-condition} is stated in terms of $(p,q)$-domination graphs, it in fact applies to all vertex-weighted bipartite graphs and provides a Hall's-like condition for the existence of a fractional perfect matching in such graphs.

\section{\plumatching and Condorcet Consistency}\label{sec:condorcet}

In \Cref{subsec:connection-to-the-mus}, we showed that \plumatching is not Condorcet consistent, but this is necessary in order for it to achieve a distortion bound of $2+\alpha$, as no Condorcet consistent rule can achieve this bound (\Cref{prop:condorcet-bad}). The preference profile used in the proof of \Cref{prop:condorcet-bad} is an example where \plumatching does not select a Condorcet winner, even when one exists.

Here, we present a simpler and more insightful example in which the same phenomenon occurs. First, we observe an important necessary condition for a candidate to be potentially returned by \plumatching (i.e.\ for its integral domination graph to have a perfect matching). This condition is easy to check, and we use it elsewhere in the paper to rule out various candidates from being the output of \plumatching.

Recall that $\plu(a)$ denotes the plurality score of $a$, i.e., the number of voters who rank $a$ first. Define $\veto(a)$ to be the veto score of $a$, i.e., the number of voters who rank $a$ last. 

\begin{lemma}\label{lem:plu-veto}
	Given an election $\elec = (V,C,\elec)$ and candidate $a \in C$, $a$ can be returned by \plumatching (i.e.its integral domination graph $G^{\elec}(a)$ has a perfect matching) only if $\plu(a) \ge \veto(a)$.
\end{lemma}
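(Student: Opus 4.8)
The plan is to exploit the structure of the integral domination graph $G^{\elec}(a)$ and pin down exactly which right-vertices a voter who ranks $a$ last can be matched to. By \Cref{cor:integral-domination}, candidate $a$ can be returned by \plumatching precisely when $G^{\elec}(a)$ admits a perfect matching, i.e.\ a bijection $M : V \to V$ with $a \succeq_i \top(M(i))$ for every $i \in V$. So it suffices to show that the existence of such a bijection forces $\plu(a) \ge \veto(a)$.

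The key observation concerns the edges incident to the \emph{veto voters}. Let $S = \set{i \in V : a \text{ is ranked last by } i}$, so $|S| = \veto(a)$. For any $i \in S$, the only candidate $c$ satisfying $a \succeq_i c$ is $c = a$ itself, since $i$ strictly prefers every other candidate to $a$. Consequently, in $G^{\elec}(a)$ the only right-vertices $j$ adjacent to such an $i$ are those with $\top(j) = a$, and the number of such right-vertices is exactly $\plu(a)$.

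The conclusion then follows by a simple counting argument. In any perfect matching $M$, each of the $\veto(a)$ voters in $S$ must be matched to a distinct right-vertex $j$ with $\top(j) = a$, as this is their only feasible option and $M$ is a bijection. Since there are only $\plu(a)$ such right-vertices, we obtain $\plu(a) \ge |S| = \veto(a)$. Equivalently, applying the Hall-type condition from \Cref{lem:general-halls-condition} to the set $S$, whose neighborhood is precisely $\set{j : \top(j) = a}$, the requirement $|N(S)| \ge |S|$ reads exactly $\plu(a) \ge \veto(a)$.

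I expect no real obstacle in this argument: its entire content is the observation that a voter ranking $a$ last weakly defeats (through $a$) only $a$ itself, which completely determines its neighborhood in $G^{\elec}(a)$. The only point requiring a moment of care is the degenerate case $m = 1$, in which a voter ranks $a$ both first and last; there $\plu(a) = \veto(a) = n$ and the inequality holds trivially.
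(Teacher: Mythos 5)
Your proposal is correct and matches the paper's own proof: both argue that each of the $\veto(a)$ voters ranking $a$ last can only be matched to a right-vertex whose top choice is $a$, of which there are $\plu(a)$, so a perfect matching forces $\plu(a) \ge \veto(a)$. The extra remarks about Hall's condition and the $m=1$ case are fine but not needed.
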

\begin{proof}
	Note that in the integral domination graph $G(a)$, every $i \in V$ on the left who ranks $a$ last --- there are $\veto(a)$ many such voters on the left --- must be matched to some $j \in V$ on the right who ranks $a$ first --- there are $\plu(a)$ many such voters on the right. Hence, for a perfect matching to exist, we must have $\veto(a) \le \plu(a)$, as desired. 
\end{proof}

It is easy to see that a Condorcet winner may not necessarily satisfy this. In fact, it may have zero plurality score and non-zero veto score, as the example below shows.

\begin{example}
	Consider an election $\elec$ with a set of $7$ voters $V = \set{1,\ldots,7}$, a set of $4$ candidates $C = \set{a,b,c,d}$, and the following preference profile $\vsigma$.
	\[\begin{array}{ll}
	\sigma_1 = \sigma_2 &: b \succ a \succ c \succ d\\
	\sigma_3 = \sigma_4 &: c \succ a \succ b \succ d\\
	\sigma_5 = \sigma_6 &: d \succ a \succ b \succ c\\
	\sigma_7 &: b \succ c \succ d \succ a
	\end{array}\]

	It is easy to check that $a$ is the only Condorcet winner (in fact, for every other candidate, there is a \emph{strict} majority of voters preferring $a$ over that candidate). However, $\plu(a) = 0 < \veto(a) = 1$. Thus, by \Cref{lem:plu-veto}, $a$ cannot be picked by \plumatching. 
\end{example}

\section{\plumatching and Tie-Breaking}\label{sec:ties}

\Cref{thm:det-upper-bound} established that the distortion of \plumatching is $2+\alpha$ for $m \ge 3$. The tightness of the bound was shown through an example in which some candidate $a$ whose integral domination graph had a perfect matching (and thus \emph{could} be picked by \plumatching) had $2+\alpha$ approximation to social cost. However, in that example, even the integral domination graph of an optimal candidate had a perfect matching, and therefore could also be picked by \plumatching, leading one to hope that perhaps wisely picking among candidates whose integral domination graphs have a perfect matching may lead to lower distortion. Below, we show that the distortion remains no less than $2+\alpha$ regardless of the tie-breaking scheme used. 

\begin{proposition}\label{prop:tie-breaking}
	For every $m \geq 3$ and $\alpha \in [0, 1]$, every refinement of \plumatching (i.e.\ every deterministic social choice rule that always returns a candidate whose integral domination graph admits a perfect matching) has distortion $2 + \alpha$.
\end{proposition}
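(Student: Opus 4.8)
The plan is to combine the already-established upper bound with a lower-bound construction tailored to the matching structure. The upper bound is immediate: every refinement of \plumatching is itself a legitimate instantiation of \plumatching, so \Cref{thm:det-upper-bound} gives distortion at most $2+\alpha$. For the lower bound it suffices, for each fixed $m \ge 3$, to build a single preference profile $\vsigma$ together with consistent $\alpha$-decisive metrics such that \emph{every} candidate $a$ whose integral domination graph $G^{\elec}(a)$ admits a perfect matching has social cost at least $(2+\alpha)$ times the optimum under some such metric. Since a refinement outputs exactly one pickable candidate on $\vsigma$, and the adversary then chooses the worst consistent metric for that candidate, this forces distortion at least $2+\alpha$.

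The conceptual obstacle is a tension between \emph{pickability} and \emph{badness}. A candidate with large plurality score is essentially locked in as near-optimal: by \Cref{prop:known-inequality}, $\SC(a) \le \SC(c^*) + (n - \tfrac{2}{1+\alpha}\plu(a))\, d(a,c^*)$, so a candidate ranked first by a constant fraction of the voters can never be made $(2+\alpha)$-bad (for small $\alpha$ it is forced to be almost optimal). Hence any candidate the rule could safely pick must have small plurality, while the candidates with large plurality must be prevented from being pickable at all. The key device for the latter is \Cref{lem:plu-veto}: a candidate $a$ can be returned by \plumatching only if $\plu(a) \ge \veto(a)$, so deliberately \emph{polarizing} a candidate (many first-place and even more last-place votes) removes it from contention.

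Concretely, for $m = 3$ with candidates $\{a,b,c\}$ I would take two groups of $k$ voters with rankings $a \succ b \succ c$ and $c \succ b \succ a$, together with two single voters ranking $b \succ a \succ c$ and $b \succ c \succ a$. Then $\plu(a) = \plu(c) = k$ while $\veto(a) = \veto(c) = k+1$, so by \Cref{lem:plu-veto} neither $a$ nor $c$ is pickable; since \Cref{cor:integral-domination} guarantees that \emph{some} candidate is pickable, $b$ is the only candidate any refinement can output. It then remains to exhibit a consistent $\alpha$-decisive metric making $b$ bad: place the first group at a point at distance $\alpha$ from $a$ and $1$ from each of $b,c$, co-locate the second group with $c$ (so they are at distance $0$ from $c$ and $1+\alpha$ from each of $a,b$), and place the two remaining voters near $b$. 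A short computation gives $\SC(b) = (2+\alpha)k + O(1)$ and $\SC(c) = k + O(1)$, so the ratio tends to $2+\alpha$ as $k \to \infty$; checking $\alpha$-decisiveness and consistency (the tied second/third distances are broken to realize the stated rankings) is routine. For $m > 3$ one simply adds dummy candidates placed far from all voters.

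The main obstacle I expect is precisely engineering the profile so that the two requirements --- unpickability of every high-plurality candidate and survival of a genuinely $(2+\alpha)$-bad pickable candidate --- hold simultaneously; the polarization trick via \Cref{lem:plu-veto} resolves it by routing the forced perfect matching onto the ``middle'' candidate $b$, which carries almost no plurality and can therefore be driven to the full cost $2+\alpha$. Verifying that the metric is consistent and $\alpha$-decisive, and that the $O(1)$ contribution of the two auxiliary voters is negligible, are the remaining routine checks.
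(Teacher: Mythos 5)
Your proposal is correct and follows essentially the same route as the paper's own proof: the identical profile (two polarized blocks of $k$ voters plus two $b$-first voters), the same use of \Cref{lem:plu-veto} to rule out $a$ and $c$ via $\plu=k<\veto=k+1$, the same metric placement, and the same $k\to\infty$ limit yielding $2+\alpha$. The remaining consistency and decisiveness checks you defer are indeed routine and match the paper's construction.
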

\begin{proof}
	The upper bound follows from \Cref{thm:det-upper-bound}. For the lower bound, we construct a family of elections with three candidates and an increasing number of voters such that (i) in each election, there is only a single candidate that could be chosen by \plumatching, and (ii) the approximation ratio of this candidate converges to $2+\alpha$ as the number of voters grows. 
	
	For $k \geq 1$, we construct an election with $2k + 2$ voters ($V = \set{1, \ldots, 2k + 2})$ and three candidates ($C = \set{a, b, c}$). The preference profile is as follows: for each $1 \leq i \leq k$, $\sigma_i = a \succ b \succ c$, for each $k + 1 \leq i \leq 2k$, $\sigma_i = c \succ b \succ a$, $\sigma_{2k+1} = b \succ a \succ c$, and $\sigma_{2k+2} = b \succ c \succ a$. 
	
	First, note that both $a$ and $c$ have veto score $k + 1$ but plurality score only $k$. Therefore, by \Cref{lem:plu-veto}, neither can be chosen by \plumatching. Therefore, \plumatching must pick candidate $b$. However, consider the $\alpha$-decisive metric consistent with the preference profile given by the undirected graph in \Cref{fig:tie-breaking-metric}; as usual, all pairwise distances are induced by shortest paths. 
	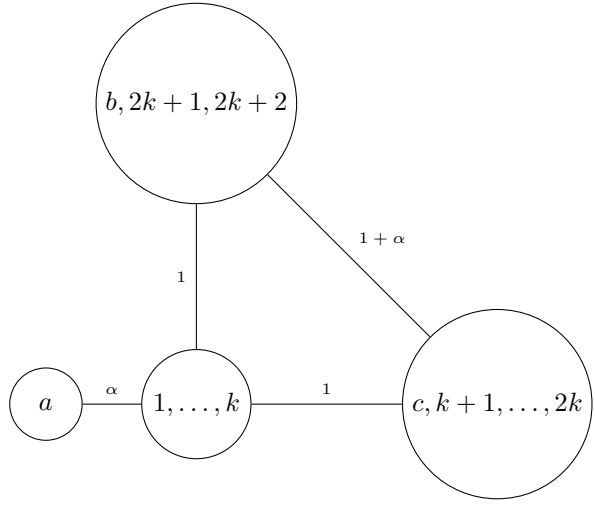
\begin{figure}
		\centering
		\begin{tikzpicture}[scale=2]
		\tikzstyle{main_node} = [circle,fill=white,draw,minimum size=2.5em,inner sep=3pt]
		
		\node[main_node] (a) at (0,0) {\small $a$};
		\node[main_node] (1k) at (1, 0) {\small $1,\ldots,k$};
		\node[main_node] (b) at (1, 2) {\small $b, 2k+1, 2k + 2$};
		\node[main_node] (c) at (3, 0) {\small $c, k + 1, \ldots, 2k$};
		
		\draw (a) -- (1k) node [midway, above] {\tiny $\alpha$};
		\draw (1k) -- (b) node [midway, left] {\tiny $1$};
		\draw (1k) -- (c) node [midway, above] {\tiny $1$};
		\draw (b) -- (c) node [midway, above right] {\tiny $1 + \alpha$};
		\end{tikzpicture}
		\caption{The metric space construction  used in the proof of \Cref{prop:tie-breaking}}
		\label{fig:tie-breaking-metric}
	\end{figure}

	Note that $\SC(b) = k \cdot (2 + \alpha)$, while $\SC(c) = k + 2\cdot (2 + \alpha)$. Hence, choosing $b$ yields approximation ratio 
	\begin{align*}
	\frac{k(2 + \alpha)}{k + 2(2 + \alpha)}
	&= \frac{k(2 + \alpha) + 2(2+\alpha)(2 + \alpha) - 2(2+\alpha)(2 + \alpha)}{k + 2(2 + \alpha)}\\
	&= \frac{(k + 2(2 + \alpha))(2 + \alpha) - 2(2+\alpha)^2}{k + 2(2 + \alpha)}\\
	&= 2 + \alpha - \frac{2(2+\alpha)^2}{k + 2(2 + \alpha)}.
	\end{align*}
	
	As $k$ grows, this approaches $2 + \alpha$. Hence, the distortion of $f$ is at least $2+\alpha$ when $m = 3$. For $m > 3$, we can add remaining candidates sufficiently far away as before, which does not affect the construction. 
\end{proof}

\section{\plumatching and the Fairness Ratio}\label{sec:fairness}

\citet{goel2017metric} observe that given an underlying metric and a chosen candidate, costs may vary widely among the voters. In order to quantify how ``fair'' a candidate is, they look at a variation of the social cost, defined, for any given $k \in \set{1,\ldots,n}$, as the sum of the $k$ largest costs to voters. In particular, given a metric $d$, candidate $c \in C$, and $k \in \set{1,\ldots,n}$, they define
\[
\phi_k(c, d) = \max_{S \subseteq V: |S|=k} \sum_{v \in V} d(v, c).
\]

Then, they define the \emph{fairness ratio} of a social choice rule $f$, denoted $\fair(f)$, as its worst-case approximation ratio to the expected total of the $k$ largest voter costs, for any $k$. That is, 
\[
\fair(f) = \sup_{\vsigma}\ \sup_{d\: :\: d\: \triangleright\: \vsigma}\ \max_{1 \le k \le n}\ \frac{\E[\phi_k(f(\vsigma), d)]}{\min_{c \in C} \phi_k(c, d)}.
\]

Since fixing $k=n$ yields the distortion definition, it follows that $\dist(f) \le \fair(f)$ for any social choice rule $f$. Interestingly, \citet{goel2018relating} show that $\fair(f) \le \dist(f)+2$ for any social choice rule $f$. Hence, the fairness ratio is within an additive factor of $2$ from the distortion. 

Given that \plumatching has distortion at most $2+\alpha$ for $m \ge 3$ (\Cref{thm:det-upper-bound}), it immediately follows that its fairness ratio is at most $4+\alpha$. However, a simple modification of the proof of \Cref{thm:det-upper-bound} shows that its fairness ratio is in fact $2+\alpha$ as well, which is a stronger result than its distortion being $2+\alpha$. 

\begin{proposition}\label{prop:fairness-ratio}
	For every $m \ge 2$ and $\alpha \in [0,1]$, \plumatching has fairness ratio $2+\alpha$ for $\alpha$-decisive metric spaces.
\end{proposition}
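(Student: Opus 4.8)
The plan is to mirror the proof of \Cref{thm:det-upper-bound} almost verbatim, but to carry the argument out separately for each value of $k$ and to track the top-$k$ cost $\phi_k$ in place of the total social cost $\SC$. The key observation is that every inequality in the distortion proof is really a statement about individual voters, so it can be localized to any size-$k$ set of voters, provided we keep all the sets we sum over of size exactly $k$ (so that each of them is dominated by $\phi_k$ of the appropriate candidate).

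Concretely, fix an $\alpha$-decisive metric $d$ consistent with $\vsigma$, let $a$ be the candidate returned by \plumatching, and let $M : V \to V$ be the perfect matching in $G^{\elec}(a)$ guaranteed by \Cref{cor:integral-domination}, so that $a \succeq_i \top(M(i))$, i.e.\ $d(i,a) \le d(i,\top(M(i)))$, for every $i$. Fix an arbitrary $k \in \{1,\dots,n\}$ and an arbitrary candidate $b$, and let $S \subseteq V$ with $|S| = k$ be a set of voters attaining $\phi_k(a,d) = \sum_{i \in S} d(i,a)$. I would then run the same chain of inequalities as before, but restricted to $S$:
\[
\phi_k(a,d) = \sum_{i \in S} d(i,a) \le \sum_{i \in S} d(i,\top(M(i))) \le \sum_{i \in S} d(i,b) + \sum_{i \in S} d(b,\top(M(i))).
\]
The first sum is at most $\phi_k(b,d)$, since $S$ is \emph{a} set of $k$ voters while $\phi_k(b,d)$ maximizes over \emph{all} such sets. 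For the second sum, I would reindex it through the matching: writing $T = M(S)$, bijectivity of $M$ gives $|T| = k$ and $\sum_{i \in S} d(b,\top(M(i))) = \sum_{j \in T} d(b,\top(j))$. Applying the triangle inequality and $\alpha$-decisiveness vertex by vertex, $d(b,\top(j)) \le d(b,j) + d(j,\top(j)) \le (1+\alpha)\,d(j,b)$ for every $j$ (the case $\top(j)=b$ being trivial), so this sum is at most $(1+\alpha)\sum_{j \in T} d(j,b) \le (1+\alpha)\,\phi_k(b,d)$, again because $|T| = k$. Combining, $\phi_k(a,d) \le (2+\alpha)\,\phi_k(b,d)$.

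Since this holds for every candidate $b$ and every $k$, taking $b$ to minimize $\phi_k(\cdot,d)$ and using that \plumatching is deterministic (so the expectation in the definition of $\fair$ is trivial) yields $\phi_k(a,d) \le (2+\alpha)\min_{c} \phi_k(c,d)$ for all $k$, hence $\fair(\plumatching) \le 2+\alpha$; note this bound never uses $m \ge 3$, so it holds for all $m \ge 2$. For the matching lower bound I would simply invoke the general fact that $\dist(f) \le \fair(f)$ for every rule (the choice $k=n$ recovers the distortion), together with the tightness half of \Cref{thm:det-upper-bound}, which exhibits instances with $m \ge 3$ on which \plumatching has distortion exactly $2+\alpha$.

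The only genuinely new idea beyond the distortion proof---and the step I expect to require the most care---is the reindexing $\sum_{i \in S} d(b,\top(M(i))) = \sum_{j \in T} d(b,\top(j))$ with $T = M(S)$. In the original proof one sums over all of $V$ and freely uses that $M$ is a bijection to replace $\top(M(i))$ by $\top(i)$; here the subtlety is that $S$ and $T$ are \emph{different} size-$k$ sets, and the argument goes through precisely because we only ever need $|T| = k$ (not $T = S$) in order to bound the second sum by $\phi_k(b,d)$. Everything else is the verbatim per-voter reasoning of \Cref{thm:det-upper-bound}.
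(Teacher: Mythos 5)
Your proposal is correct and follows essentially the same route as the paper's own proof: restrict the chain of inequalities from \Cref{thm:det-upper-bound} to a size-$k$ set attaining $\phi_k(a,d)$, reindex the second sum through the matching to the image set $M(S)$, and use only that both sets have cardinality $k$ (not that they coincide) to bound each piece by $\phi_k(b,d)$. The one point you flag as the "genuinely new idea" --- that $S$ and $M(S)$ are different $k$-sets but this is harmless --- is exactly the observation the paper highlights as the crux of the adaptation.
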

\begin{proof}
	Since $\fair(\plumatching) \ge \dist(\plumatching)$, a lower bound of $2+\alpha$ follows directly from \Cref{thm:det-upper-bound}. For the upper bound, the proof is nearly identical to the proof of $2 + \alpha$ upper bound on the distortion in \Cref{thm:det-upper-bound}. The trick is to observe that when we start from $\phi_k(a,d)$ for the candidate $a$ picked by \plumatching, it consists of the $k$ largest costs of $a$ to some $k$ voters. As we manipulate this quantity to express it in terms of the cost of some other candidate $b$, we get an expression involving the cost of $b$ to \emph{some} $k$ voters, which is at most $\phi_k(b,d)$. Crucially, we use the fact that there are precisely $k$ voters matched to any $k$ voters in a perfect matching in the integral domination graph. We reproduce the full proof below for completeness. 
	
	Let $(\calM,d)$ be an $\alpha$-decisive metric space, and
	let $\vsigma$ be the induced preference profile. Let $a$ be the candidate returned by \plumatching, and let $b$ be
	any other candidate. Fix some $k \in \set{1,\ldots,n}$. We want to show that $\phi_k(a, d) \le (2+\alpha) \cdot \phi_k(b, d)$.
	Let $S_a, S_b \subseteq V$ be sets of voters of size $k$ that have the $k$ largest distances to candidates $a$ and $b$, respectively. Due to \Cref{cor:integral-domination}, the integral domination graph $G^{\elec}(a)$ admits a perfect matching $M : V \to V$ which satisfies $a \succeq_i \top(M(i))$ for each $i \in V$. Thus,
    \begin{align*}
        \phi_k(a, d) &= \sum_{i \in S_a} d(i,a)\\
        &\le \sum_{i \in S_a} d(i,\top(M(i))) &(\because a \succeq_i \top(M(i)), \forall i \in V)\\
        &\le \sum_{i \in S_a} \big( d(i,b) + d(b,\top(M(i))) \big) &(\because \text{triangle inequality})\\
        &= \sum_{i \in S_a} d(i, b) + \sum_{i \in S_a} d(b,\top(M(i)))\\
        &= \sum_{i \in S_a} d(i, b) + \sum_{i \in M(S_a)} d(b,\top(i)) &(M(S_a) = \set{M(i) : i \in S_a})\\
        &= \sum_{i \in S_a} d(i, b) + \sum_{i \in M(S_a) : \top(i) \neq b} d(b,\top(i)) &(\because d(b,b)=0)\\
        &\le \sum_{i \in S_a} d(i, b) + \sum_{i \in M(S_a) : \top(i) \neq b} \left(d(b,i)+d(i,\top(i))\right) &(\because \text{triangle inequality})\\
        &\le \sum_{i \in S_a} d(i, b) + \sum_{i \in M(S_a) : \top(i) \neq b} \left(d(b,i)+\alpha \cdot d(i,b)\right) &(\because \text{$\alpha$-decisiveness})\\
        &\le \sum_{i \in S_a} d(i, b) + \sum_{i \in M(S_a)} \left(d(b,i)+\alpha \cdot d(i,b)\right)\\
		&\le (2 + \alpha) \phi_k(b, d),
    \end{align*}
    where the last inequality follows because $\sum_{i \in S} d(i, b) \leq \phi_k(b, d)$ for any $S$ with $|S| = k$, and $M$ being a perfect matching, we have $|S_a| = |M(S_a)| = k$. 
    
    Because the metric space, the election, $k$, and the choice of $b$ were arbitrary, this establishes that \plumatching has fairness ratio $2+\alpha$.
\end{proof}

\bibliographystyle{plainnat}
\bibliography{abb,ultimate}

\end{document}